\newtheorem{theorem}{Theorem}
\newtheorem{lemma}{Lemma}
\begin{document}
%
\title{Flow Decomposition}
%
%
%
\author{\IEEEauthorblockN{Jonathan Ponniah~\IEEEmembership{Member,~IEEE,}
}
\IEEEauthorblockA{Department of Electrical Engineering\\
San Jose State University
}
\and
\IEEEauthorblockN{Liang-Liang Xie~\IEEEmembership{Fellow,~IEEE,}
}
\IEEEauthorblockA{Department of Electrical and Computer Engineering\\
University of Waterloo
}
\thanks{This material is based upon work partially supported by NSF Contract CNS-1302182,  AFOSR Contract FA9550-13-1-0008, and NSF Science \& Technology Center Grant CCF-0939370.}}

%



\maketitle

\begin{abstract}
The decode-forward achievable region is studied for general networks.  The region is subject to a fundamental tension in which nodes individually benefit at the expense of others.  The complexity of the region depends on all the ways of resolving this tension.  Two sets of constraints define an outer-bound on the decode-forward region: first, the conventional mutual-information inequalities implied by the one-relay channel, and second, causality constraints that ensure nodes only forward messages they have already decoded.  The framework of \textit{flow decomposition} is introduced to show these constraints are also sufficient.  Flow decomposition provides a way of manipulating regular decode-forward schemes without the long encoding delays and restrictions on bidirectional communication of backward decoding.   The two structures that define a flow decomposition are \textit{flows} and \textit{layerings}.  Flows specify the nodes which encode messages from each source (i.e., the routes) and the encoding delays.  Layerings specify the messages decoded at a specific node in the channel.  We focus on two types of flow: hierarchical flow, with tree-like routes, and all-cast flow, where each route covers all nodes.  For arbitrary flows of either type and any rate-vector satisfying the mutual-information constraints at a specific node, we prove there are equivalent flows and a layering that satisfy both the mutual-information and causality constraints.  In separate work, we show that only the mutual-information constraints are active in channels with hierarchical flow, which implies the achievable region has minimal complexity.  In channels with all-cast flow, the achievable region is computable.
\end{abstract}


%
\IEEEpeerreviewmaketitle

\section{Introduction}
  The relay channel consists of three nodes: a source, a relay, and a destination \cite{Meulen1971}.  The source transmits messages to the destination through a “noisy” channel (the output symbols in each use of the channel are statistically related to the input symbols via some probability transition function).  After observing its own channel outputs, the relay attempts to help the destination decode messages sent by the source.  Although the capacity of the relay channel is not fully understood, the channel is still a useful building block for studying general networks from an information-theoretic perspective.  

Two strategies of interest in this channel are decode-forward (DF), where the relay decodes and forwards each message to the destination, and compress-forward (CF), where the relay forwards a compression of the channel observations instead of the message itself \cite{GamalCover}.  Neither strategy is universally better than the other.  


The ``complexity'' of an achievable region describes the number of computations required to check whether a desired rate-vector is in the region, with respect to the number of nodes (another aspect of complexity is finding an encoding/decoding scheme that actually achieves this rate-vector).  
We focus on a class of regular DF schemes that generalize index coding.  Regular schemes have the short delays (linear in the channel usage) associated with sliding-window decoding instead of the long (exponential) delays and restrictions on bidirectional communication associated with backward decoding.  A fundamental tension exists in channels with bidirectional communication; nodes can individually benefit by waiting for others to decode first, but some node must always be the first to decode  \cite{Ponniah2016}.  The complexity of the achievable region depends on all the ways of resolving this tension.

Two sets of constraints are necessary for the DF region: first, the mutual-information inequalities derived from the one-relay channel, and second, causality constraints that ensure nodes only forward messages they have already decoded.  We introduce the framework of flow decomposition to prove these two sets of constraints are also sufficient.  Flow decomposition consists of two structures: \textit{flows} and \textit{layerings}.       
Flows define the nodes which forward messages from each source (i.e., the routes), along with the integer-valued encoding delays along each “hop”.  Layerings define the messages decoded by a particular node in each block of channel uses, and correspond to ordered partitions of all the preceding nodes.  

Index-coding fits naturally within two types of flow: hierarchical flow, where the routing topology is similar to a tree, and all-cast flow, where the routes cover all nodes.  For any arbitrary flow of either type and any rate-vector satisfying the mutual-information constraints at a specific node, we show  there are ``equivalent'' flows and a layering that achieves the rate-vector and satisfies the causality constraints.  Equivalence, in this context, means the flows are identical up to (but excluding) the encoding delays and thus generate the same mutual-information constraints.  This result implies both sets of constraints are sufficient to define the achievable region; the causality constraints at any fixed node can be ``deactivated'' using equivalent flows.  

Whether or not the causality constraints can be deactivated at \textit{all} nodes simultaneously is a question addressed in separate work.  By way of preview, the answer is yes for hierarchical flow, which implies the achievable region has minimal complexity.  The answer is no for all-cast channels, which raises another question of whether the all-cast region is even computable; there are infinitely many flows and layerings in a channel with a fixed number of nodes.  It turns out the region is computable, but at much higher complexity \cite{Ponniah2018}.

The proof of the main result relies on a \textit{shift} operation.  For any fixed  rate-vector satisfying the mutual-information constraints, the shift operation returns a layering with an achievable region  that is ``closer'' to the target rate-vector.  We construct a sequence of shifted layerings and prove the sequence of achievable regions eventually includes the target.  An extra step is required to find equivalent flows and a layering that satisfies the causality constraints.  

The CF setting (not addressed here) requires no notion of flow and no causality constraints, since relays do not forward source  messages.  A simplified version of this proof works for CF schemes \cite{PonniahCompressForward2019}.




Section \ref{background} provides a survey of some previous work.  Section \ref{OutlineAndPreliminaries} provides a high-level overview of the main results and an outline of the proof.  The flow decomposition framework is introduced in Section \ref{FlowDecomposition}, and the main result is presented in Section \ref{mainresult}.  Section \ref{prooftheoremone} includes the proof and Section \ref{conclusion} concludes the paper.

\section{Literature Review}
\label{background}
The relay channel was first proposed in \cite{Meulen1971}.  The CF and DF schemes for the one-relay channel appear in \cite{GamalCover}.  In its original form, the DF scheme combined super-position coding, random binning, and list-decoding.  This scheme was simplified and streamlined for multi-relay channels in \cite{KramerGastpar} and \cite{XieMultiLevel}.  The binning and list decoding strategy was replaced by a joint typicality decoding scheme called sliding-window decoding, that first appeared for the “multiple-access channel with generalized feedback” in \cite{Carleial1982}.  Sliding-window (or regular) decoding, was a precursor for the DF schemes we consider.  A different scheme from regular decoding called “backward-decoding” was proposed for the “multiple-access channel with cribbing encoders” \cite{willems}.  

While backward-decoding and regular decoding achieve the same rates in the one-source multi-relay channel in \cite{XieMultiLevel}, backward-decoding achieves higher rates in general multi-source multi-relay channels \cite{XieKumar}.  However, backward-decoding requires much longer encoding delays.  To circumvent this delay problem a variation of regular coding called “offset encoding” was proposed for the multiple-access relay channel (MARC) in \cite{Sankar}.  Three different regular decoding schemes collectively achieve the backward-decoding region in \cite{Sankar} thus solving the delay problem in the MARC.  

Backward-decoding cannot support bidirectional communication in the DF  setting \cite{XieKumar}.  Inspired by \cite{Sankar} offset-encoding was applied in the two-way two-relay channel, where it was first discovered that the causality constraints can not always be simultaneously deactivated at all nodes \cite{Ponniah2008}. 

A parallel effort applying offset-encoding to CF schemes for multi-relay channels was proposed in \cite{Yassaee}, which recognized that regular decoding schemes correspond to layerings.  This work was studied in \cite{PonniahCompressForward2019} to show the mutual-information constraints are sufficient using the proof techniques here. 

Another independent line of inquiry revealed a relationship between CF and network coding \cite{AhlswedeCai2000}.  Noisy Network Coding (NNC) \cite{noisynetworkcoding} is a CF scheme that generalizes the network coding scheme in \cite{Avestimehr2011}.  It turns out that backward-decoding generalizes NNC \cite{Wu2013}\cite{HouKramer}.  The CF schemes in \cite{Yassaee}\cite{PonniahCompressForward2019} rely on regular coding; the long encoding delays and restrictions on bidirectional communication in joint backward-decoding DF-CF schemes \cite{Wu2014}\cite{Hou2016} can be avoided. 

The index-coding DF scheme we use was originally proposed in \cite{Xie2007} and simplified by Xiugang Wu in private correspondence with the authors. 

\section{Outline and Preliminaries}
\label{OutlineAndPreliminaries}
  We give a rough overview of the main results and proof, preparing for the rigor in Sections \ref{FlowDecomposition}, \ref{mainresult} and \ref{prooftheoremone}.  Let ${\cal N}=\{1,\ldots,|{\cal N}|\}$ denote the set of all nodes in the channel.  Assume all nodes are sources and each node is a destination for some subset of sources.  Let ${\bf y}_{\cal N}:=(y_{1},\ldots,y_{|{\cal N}|})$ and ${\bf x}_{\cal N}:=(x_{1},\ldots,x_{|{\cal N}|})$.  The input-output dynamics conform to the discrete memoryless channel:
\begin{align}
	\label{discretememorylesschannel}
	(\displaystyle\prod_{i\in {\cal N}}{\cal X}_{i},\hspace{1mm}p({\bf y}_{\cal N}|{\bf x}_{\cal N}),\displaystyle\prod_{i\in{\cal N}}{\cal Y}_{i}).
\end{align}

Let ${\bf f}(s)$ define the flow for each $s\in{\cal N}$ and let ${\bf F}:=\{{\bf f}(s):s\in{\cal N}\}$ be the flows for all nodes in the channel.  Fix some destination node $d\in{\cal N}$ rate-vector ${\bf R}=(R_{1},\ldots,R_{|{\cal N}|})$.  For any subset $S\subseteq{\cal N}$, let $R_{S}:=\sum_{s\in S}R_{s}$.  The mutual-information constraints implied by the one-relay channel are given by: 
\begin{align}
	\label{outline1}
	R_{S}<I(X_{F_{d}(S)};Y_{d}|X_{\tilde{F}_{d}(S)}),
\end{align}   
where $F_{d}(S)$ includes all the nodes upstream of node $d$, that encode the sources $S$, and $\tilde{F}_{d}(S)$ roughly speaking, includes all the nodes downstream of node $d$.  The constraint in (\ref{outline1}) applies for each $S\subseteq{\cal N}$.

 Let ${\bf L}_{d}$ be the layering at node $d$.  We show in Section \ref{FlowDecomposition}, that $({\bf F},{\bf L}_{d})$ corresponds to a well-defined encoding/decoding scheme (see Lemma \ref{splitting}); the messages decoded in each block of channel are encoded in the received sequences and the side information assumed in every typicality check is known or already decoded.  These consistency conditions are best expressed using so-called  ``virtual'' flows $\{{\bf v}(s):s\in{\cal N}\}$ ``seen'' by node $d$, from the actual flows in ${\bf F}$.  Virtual flows have some useful properties (see Lemma \ref{vssuffices}).  
 
A rate-vector is only achievable if the corresponding encoding/decoding scheme is causal (i.e., relays only forward messages they have already decoded).  These causality constraints correspond to the following set of inequalities for every source $s\in{\cal N}$ decoded by node $d$:
\begin{align}
	\label{outline2}
	\text{\sc layer}(v(s))<k_{v(s),d},
\end{align}
where $\text{\sc layer}(v(s))$ is the layer assigned to the virtual source $v(s)$ by ${\bf L}_{d}$ and $k_{v(s),d}$ is the encoding delay; the number of blocks that elapse between the virtual source $v(s)$ encoding a message from the actual source $s$, and node $d$ encoding the same message.  The constraint in (\ref{outline2}) applies to every source $s$ decoded by node $d$.   Both sets of constraints (\ref{outline1}) and (\ref{outline2}) are necessary for DF schemes.

We focus on two types of flow consistent with index-coding: hierarchical flow, where the routes are tree-structured, and all-cast flow, where each route covers all nodes.  The main result in Theorem \ref{theoremone} is that for an arbitrary ${\bf F}$ of either type, and any rate-vector ${\bf R}$ satisfying (\ref{outline1}), there exists an equivalent ${\bf F}^{\prime}$ and a layering ${\bf L}_{d}$ such that the encoding/decoding scheme $({\bf F}^{\prime}, {\bf L}_{d})$ achieves ${\bf R}$ and satisfies (\ref{outline2}).  Equivalent flows generate the same mutual information constraints in (\ref{outline1}).  Theorem \ref{theoremone}  shows that (\ref{outline1}) and (\ref{outline2}) are sufficient conditions for any rate-vector in the DF region.  However, there are infinitely many flows and layerings in a channel of fixed size, which creates ambiguity as to whether   (\ref{outline2}) is computable.

To prove Theorem \ref{theoremone} we pick an arbitrary rate-vector ${\bf R}$ that satisfies (\ref{outline1}) and an arbitrary ${\bf L}_{d}$.  If $({\bf F}, {\bf L}_{d})$ does not achieve ${\bf R}$, we define the following ``shift'' operation: 
\begin{align}
{\bf L}^{\prime}_{d}&=\text{\sc shift}({\bf L}_{d},S),
\end{align}
where $S$ is a selected subset of sources decoded by node $d$.  Lemma \ref{induction} shows that the achievable region generated by $({\bf F}, {\bf L}^{\prime}_{d})$ is closer to ${\bf R}$ than $({\bf F}, {\bf L}_{d})$.  The proof of Lemma \ref{induction} relies on Lemmas \ref{disjoint}, \ref{remain} and \ref{join}.

Next, we create a sequence of layerings $\{{\bf L}_{d,n}:n\in\mathbb{N}\}$, where ${\bf L}_{d,n+1}=\text{\sc shift}({\bf L}_{d,n}, S_{n})$ and $\{S_{n}:n\in\mathbb{N}\}$ is a selected sequence of subsets of ${\cal N}$.  Lemma \ref{grandefinale} proves there is some $n^{*}\in\mathbb{N}$, such that $({\bf F},{\bf L}_{d,n^{*}})$ achieves ${\bf R}$.  The proof of Lemma \ref{grandefinale} uses Lemma \ref{induction}.  

Finally, we construct an equivalent flow ${\bf F}^{\prime}$ and layering ${\bf L}_{d}$ that achieves ${\bf R}$ and satisfies (\ref{outline2}).  Lemmas \ref{fine} and \ref{normalize} show that $({\bf F}^{\prime},{\bf L}_{d})$ exists which completes the proof.

The proofs of each lemma are organized into claims and sub-claims, but   occasionally, the proof of one lemma references the claims of another.  Examples in Appendices \ref{Example}, \ref{appendix2}, and \ref{appendix3} explore the shift operation.

The following definition of typicality is used in the paper.  Let $X_{{\cal N}}:=\{X_{i}:i\in{\cal N}\}$ denote a finite collection of discrete random variables  with a fixed joint distribution $p(x_{{\cal N}})$ for some $x_{{\cal N}}:=\{x_{i}\in{\cal X}_{i}:i\in{\cal N}\}$.  Similarly, let ${\bf x}_{i}:=\{x^{(m)}_{i}\in{\cal X}_{i}:1\leq m\leq n\}$ denote an $n$-length vector of ${\cal X}_{i}$ and let ${\bf x}_{{\cal N}}:=\{{\bf x}_{i}:i\in{\cal N}\}$.  The set of typical $n$-sequences is given by: 
\begin{align}
	\nonumber
	&\hspace{-1mm}T^{(n)}_{\epsilon}(X_{{\cal N}}):=\\
	\nonumber
	&\hspace{7mm}\bigg\{{\bf x}_{{\cal N}}:\left|-\frac{1}{n}\log\text{Prob}({\bf x}_{S})-H(X_{S})\right|<\epsilon, \forall S\subseteq{\cal N}\bigg\},
\end{align}
where $\text{Prob}({\bf x}_{S}):=\prod^{n}_{m=1}p(x^{(m)}_{S})$.

\begin{figure*}[t]
        \center{\includegraphics[width=\textwidth]
        {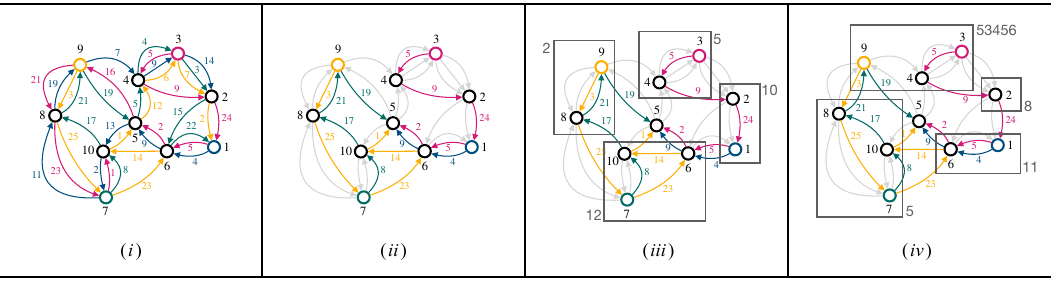}}
        \caption{(i) The flows ${\bf F}=\{{\bf f}(1),{\bf f}(3),{\bf f}(7),{\bf f}(9)\}$, where ${\bf f}(1)=1\xrightarrow{4}6\xrightarrow{9}5\xrightarrow{13}10\xrightarrow{2}7\xrightarrow{11}8\xrightarrow{19}9\xrightarrow{4}7\xrightarrow{9}3\xrightarrow{14}2$, ${\bf f}(3)=3\xrightarrow{5}4\xrightarrow{9}2\xrightarrow{24}1\xrightarrow{5}6\xrightarrow{2}5\xrightarrow{16}9\xrightarrow{21}8\xrightarrow{23}7\xrightarrow{1}10$, ${\bf f}(7)=7\xrightarrow{8}10\xrightarrow{17}8\xrightarrow{21}9\xrightarrow{19}5\xrightarrow{5}4\xrightarrow{4}3\xrightarrow{3}2\xrightarrow{15}6\xrightarrow{22}1$, and ${\bf f}(9)=9\xrightarrow{3}8\xrightarrow{25}7\xrightarrow{23}6\xrightarrow{14}10\xrightarrow{1}5\xrightarrow{12}4\xrightarrow{6}3\xrightarrow{7}2\xrightarrow{2}1$. (ii) ${\bf f}(1,5)=1\xrightarrow{4}6$, ${\bf f}(3,5)=3\xrightarrow{5}4\xrightarrow{9}2\xrightarrow{24}1\xrightarrow{5}6$, ${\bf f}(7,5)=7\xrightarrow{8}10\xrightarrow{17}8\xrightarrow{21}9$, and ${\bf f}(9,5)=9\xrightarrow{3}8\xrightarrow{25}7\xrightarrow{23}6\xrightarrow{14}10$. The flows ${\bf F}$ and layering ${\bf L}_{5}$ with $L_{2}=\{8,9\}$, $L_{5}=\{3,4\}$, $L_{10}=\{1,2\}$, and $L_{12}=\{6,7,10\}$.  The flows ${\bf F}$ and layering ${\bf L}_{5}$ with $L_{5}=\{7,8,10\}$, $L_{8}=\{2\}$, $L_{11}=\{1,6\}$, and $L_{53456}=\{3,4,9\}$.}
        \label{layering}
 \end{figure*}

\section{Flow Decomposition}
\label{FlowDecomposition}


The \textit{flow} ${\bf f}(s)$ describes both the sequence of nodes along which messages from source $s$ are relayed through the network and the corresponding one-hop encoding delays.  
Formally, ${\bf f}(s):=N_{1}\xrightarrow{k_{1}}N_{2}\xrightarrow{k_{2}}\cdots\xrightarrow{k_{e_{s}-1}}N_{e_{s}}$ where: 

({F1}) $N_{1}:=\{s\}$ and $s\in{\cal N}$,

({F2}) $N_{l}\subseteq{\cal N}$ and $N_{l}\cap N_{q}=\{\null\}$ for all $1\leq l\neq q\leq e_{s}$, 

 ({F3}) $k_{l}\in\mathbb{N}$ is a one-hop encoding delay for $l=1,\ldots,e_{s}$,

(F4) $i\in{\bf f}(s)$ if and only if $i\in N_{l}$ for some $l=1,\ldots,e_{s}$,

({F5}) ${\bf f}(s,i):=N_{1}\xrightarrow{k_{1}}N_{2}\xrightarrow{k_{2}}\cdots\xrightarrow{k_{l-2}}N_{l-1}$ and \\
\text{\hspace{10mm}} $k_{s,i}:=\sum^{l-1}_{q=1}k_{q}$ if $i\in N_{l}$ for some $l=1,\ldots,e_{s}$.


In each \textit{block} of $n$ channel uses, sources generate new messages, and relays forward messages from previous blocks.  Transmission occurs over $B$ blocks.  The flows ${\bf F}=\{{\bf f}(s):s\in{\cal N}\}$ form a multi-edge directed graph on ${\cal N}$ that determines  the encoded messages in each block.  For a fixed rate vector ${\bf R}=(R_{1},\ldots,R_{|{\cal N}|})$,  the codebooks are generated as follows:
\begin{itemize}
	\item For each node $i\in{\cal N}$ independently generate  $2^{n\sum_{s:i\in{\bf f}(s)}R_{s}}$ $n$-length codewords ${\bf x}_{i}(w)$ according to the distribution $p(x)$ over $x\in{\cal X}_{i}$ where $w\in\{1,\ldots,2^{n\sum_{s:i\in{\bf f}(s)}R_{s}}\}$.
\end{itemize}
The encoding proceeds as follows for each $b\in\{1,\ldots,B\}$:
\begin{itemize}
	\item In block $b$, source $s$ generates the message $m_{s}(b)\in\{1,\ldots,2^{nR_{s}}\}$
	\item In block $b$, node $i$ transmits the codeword ${\bf x}_{i}(w(b))$ where $w(b)\in\{1,\ldots,2^{n(\sum_{s:i\in{\bf f}(s)}R_{s})}\}$ is the index assigned to the message vector ${\bf w}(b):=(w_{1},\ldots,w_{|{\cal N}|})$ and:
\begin{align}
\label{dubenc}
w_{s}=\begin{cases}m_{s}(b-k_{s,i})&i\in {\bf f}(s)\\ ``1" & i\notin {\bf f}(s)\text{ or }b-k_{s,i}\leq 0\end{cases}	
\end{align}
\end{itemize}


In each block, the relays decode source messages.  To simplify the analysis,   assume each node encodes the same set of sources that it decodes, and   define:
\begin{align}
	\label{sd}
	{\cal S}(d)&:=\{s:d\in{\bf f}(s)\}.
\end{align}  
Observe that ${\cal S}(d)$ denotes the set of sources encoded (and thus decoded) by node $d\in{\cal N}$.  

For any subset $S\subseteq{\cal N}$, let $F_{d}(S):=\{i\in{\bf f}(s,d):S\cap{\cal S}(d)\}$.  The messages decoded by node $d$ in each block correspond to ordered partitions or  \textit{layerings} of $F_{d}({\cal N})$, all the nodes upstream of (or preceding) node $d$.  A layering ${\bf L}_{d}:=(L_{0},\ldots,L_{|{\bf L}_{d}|-1})$ satisfies the following conditions: 
 
({L1})  $L_{l}\subseteq F_{d}({\cal N})$ for every $l=0,\ldots,|{\bf L}_{d}|-1$,

({L2}) $L_{l}\cap L_{q}=\{\}$ for $l\neq q$,

({L3}) $F_{d}({\cal N})=\cup^{|{\bf L}_{d}|-1}_{l=0}L_{l}$,

({L4}) $L_{|{\bf L}_{d}|-1}\neq\{\}$,

({L5}) $\text{\sc layer}(i):=l$ by definition if $i\in L_{l}$.  

Each layer corresponds to previous block of channel uses.  
It is convenient to associate with $({\bf F},{\bf L}_{d})$ a \textit{virtual source} $v(s)$ and a \textit{virtual flow} ${\bf v}(s)$ for each $s\in{\cal S}(d)$, where ${\bf v}(s)$ is a subsequence of ${\bf f}(s)$.  Virtual sources and flows are the sources and flows ``seen'' by node $d$ given the layering ${\bf L}_{d}$.    Different layerings change how node $d$ decodes messages from nodes in $F_{d}({\cal N})$.

For each $s\in{\cal S}(d)$, define the function $u(s,i):=\text{\sc layer}(i)+k_{s,i}$ and the node subset $M(s):=\{\arg\min_{i\in{\bf f}(s,d)}u(s,i)\}$.  The virtual source $v(s)$ corresponding to $s$ is defined as:
\begin{align}
\label{vs}
v(s)&:=\{\arg\min_{i\in M(s)}k_{s,i}\}.  	
\end{align}
Suppose $d\in N_{e_{s,d}+1}$ for some $1\leq e_{s,d}\leq e_{s}$ and, per (F5), ${\bf f}(s,d):=N_{1}\xrightarrow{k_{1}}N_{2}\xrightarrow{k_{2}}\cdots\xrightarrow{k_{e_{s,d}-1}}N_{e_{s,d}}$.  By construction $v(s)\subseteq N_{l}$ for some $l=1,\ldots,e_{s,d}$.  The virtual flow ${\bf v}(s):= N^{\prime}_{1}\xrightarrow{k^{\prime}_{1}} N^{\prime}_{2}\xrightarrow{k^{\prime}_{2}}\cdots N^{\prime}_{p}$ is the subsequence of ${\bf f}(s,d)$ that satisfies the following conditions: 

({V1}) $N^{\prime}_{1}:=\{v(s)\}$,

({V2}) $N^{\prime}_{l}\subseteq N_{z_{l}}$ where $1\leq l\leq p$ and $z_{1}, z_{2},\ldots, z_{p}$ is a 

\hspace{7.5mm}subsequence of $1, 2,\ldots, e_{s,d}$, 

({V3}) $k^{\prime}_{l}=\sum^{z_{l+1}-1}_{q=z_{l}}k_{q}$,

({V4})  $i\in{\bf v}(s)$ implies $i\in N^{\prime}_{l}$ for some $1\leq l\leq p$, 

({V5}) $\text{\sc layer}(v(s))-\text{\sc layer}(i)=k_{v(s),i}$ for all  $i\in{\bf v}(s)$ 

\hspace{8mm}where $k_{v(s),i}:=k_{s,i}-k_{s,v(s)}$ as per (F5).

There is a convenient way of checking whether or not a particular node is in fact the virtual source of a given flow.

 
 \begin{lemma}
 \label{vssuffices}
The $v(s)$ in (\ref{vs}) uniquely satisfies:
\begin{align}
\label{firstIndexCoding}
\text{\sc layer}(v(s))-\text{\sc layer}(i)&\leq k_{v(s),i}\hspace{2mm}\forall i\in{\bf f}(v(s),d),\\
\label{secondIndexCoding}
\text{\sc layer}(i)-\text{\sc layer}(v(s))&> k_{i,v(s)}	\hspace{2mm}\forall i\in{\bf f}(s,v(s)),
\end{align}	
where $k_{v(s),i}:=k_{s,i}-k_{s,v(s)}$ for all $i\in{\bf f}(v(s),d)$ and  $k_{i,v(s)}:=k_{s,v(s)}-k_{s,i}$ for all $i\in{\bf f}(s,v(s))$. 
\end{lemma}
\begin{proof}
By construction $v(s)\subseteq M(s)$, which minimizes $u(s,i):=\text{\sc layer}(i)+k_{s,i}$ for all $i\in{\bf f}(s)$.  Therefore $\text{\sc layer}(v(s))+k_{s,v(s)}\leq\text{\sc layer}(i)+k_{s,i}$ which implies $\text{\sc layer}(v(s))-\text{\sc layer}(i)\leq k_{s,i}-k_{s,v(s)}$.  By definition $k_{v(s),i}:=k_{s,i}-k_{s,v(s)}$ for all $i\in{\bf f}(v(s),d)$ which yields (\ref{firstIndexCoding}).  
	
Furthermore (\ref{vs}) implies $v(s)$ is the subset of $M(s)$ closest to source  $s$ by hop count.  Therefore $\text{\sc layer}(v(s))+k_{s,v(s)}<\text{\sc layer}(i)+k_{s,i}$ if $i\in{\bf f}(s,v(s))$, which implies $\text{\sc layer}(i)-\text{\sc layer}(v(s))>k_{s,v(s)}-k_{s,i}$. By definition $k_{i,v(s)}:=k_{s,v(s)}-k_{s,i}$ for all $i\in{\bf f}(s,v(s))$ which yields (\ref{secondIndexCoding}).
\end{proof}



In block $b$, node $d$ decodes the message vector ${\bf m}(b):=(m_{1},\ldots,m_{|{\cal N}|})$ where
\begin{align}
\label{emdb}
m_{s}:=\begin{cases}
m_{s}(b-k_{s,v(s)}-\text{\sc layer}(v(s)))& s\in{\cal S}(d)\\
``1" & s\notin {\cal S}(d),\end{cases}
\end{align}
and $m_{s}:= ``1"$ if $b-k_{s,v(s)}-\text{\sc layer}(v(s))\leq0$.  Define: 
\begin{align}
\label{id}
I(d)&:=\{i:{\cal S}(i)\subseteq{\cal S}(d)\}.	
\end{align}
Observe that (\ref{id}) denotes the set of relays that only encode sources decoded by node $d$.  For every $S\subseteq{\cal N}$ and $0\leq l\leq|{\bf L}_{d}|-1$, let: 
\begin{align}
\label{A}
A_{l}(S)&:=\{i\in{\bf v}(s):s\in S\cap{\cal S}(d)\}\cap L_{l},\\
\label{Atilde}
	\tilde{A}_{l}(S)&:=(\cup_{i\in I(d)}F_{i}({\cal N}))\setminus(\cup^{|{\bf L}_{d}|-1}_{q=l+1}L_{q}\cup A_{l}(S)).
\end{align}

To decode ${\bf m}(b)$, node $d$ finds the message vector ${\bf\hat{m}}(b):=(\hat{m}_{1},\ldots,\hat{m}_{|{\cal N}|})$ that satisfies the following typicality checks for $0\leq l\leq |{\bf L}_{d}|-1$:
\begin{align}
\nonumber
	&\hspace{-3.2mm}(\{{\bf x}_{i}(\hat{w}(b-l)):i\in A_{l}({\cal N})\}, \{{\bf X}_{i}(b-l):i\in\tilde{A}_{l}({\cal N})\},\hspace{4.5mm}\\
\label{tippie}
	&\hspace{21mm}{\bf Y}_{d}(b-l))\in T^{(n)}_{\epsilon}(X_{A_{l}({\cal N})\cup\tilde{A}_{l}({\cal N})},Y_{d})
\end{align}
where ${\bf X}_{i}(b-l)$ is the sequence sent by node $i$ in block $b-l$ and ${\bf Y}_{d}(b-l)$ is the sequence received by node $d$ in block $b-l$.  For all $i\in A_{l}({\cal N})$ and ${\bf x}_{i}(\hat{w}(b-l))$, $\hat{w}(b-l)$ is the index that maps to ${\bf\hat{w}}(b-l):=(\hat{w}_{1},\ldots,\hat{w}_{|{\cal N}|})$ where:
\begin{align}
\label{dubdb}
\hat{w}_{s}:=\begin{cases}
\hat{m}_{s}& i\in{\bf v}(s)\\
m_{s}(b-l-k_{s,i})&i\notin{\bf v}(s)\text{ and }i\in{\bf f}(s)\\ 
``1" & i\notin {\bf f}(s).\end{cases}	
\end{align}

The index $\hat{m}_{s}$ is the message to be decoded in block $b$, that is, $m_{s}(b-k_{s,v(s)}-\text{\sc layer}(v(s)))$ as given in (\ref{emdb}).  The message actually encoded in $w(b-l)$ is $m_{s}(b-l-k_{s,i})$ as given in (\ref{dubenc}).  These messages must be the same in order for the scheme to work.  If $i\notin{\bf v}(s)$ but $i\in{\bf f}(s)$ then $m_{s}(b-l-k_{s,i})$ must be already known to node $d$ (decoded in some previous block).  If $i\notin{\bf f}(s)$ then node $i$ encodes ``1'' in place of messages from source $s$ as per (\ref{dubenc}).  Observe that (\ref{A}) and (\ref{Atilde}) imply:
\begin{align}
\label{AN}
A_{l}({\cal N})&=\{i\in{\bf v}(s):s\in {\cal S}(d)\}\cap L_{l},\\
\label{AtildeN}
	\tilde{A}_{l}({\cal N})&=(\cup_{i\in I(d)}F_{i}({\cal N}))\setminus(\cup^{|{\bf L}_{d}|-1}_{q=l+1}L_{q}\cup A_{l}({\cal N})).
\end{align}

There are five conditions that must be satisfied in order for the typicality checks to be feasible.  The first follows from causality and the rest are  index-coding primitives.  

An encoding/decoding scheme is causal if in block $b$, each relay only encodes source messages it decoded in previous blocks.  The messages encoded and decoded by node $d$ in block $b$ are defined by (\ref{dubenc}) and (\ref{emdb}) respectively.

(C1) For every $s\in{\cal S}(d)$:
\begin{align}
\label{causal}
	b-k_{s,d}<b-k_{s,v(s)}-\text{\sc layer}(v(s)).
\end{align}
It is convenient to simplify (\ref{causal}):
\begin{align}
\label{causal2}
	\text{\sc layer}(v(s))<k_{v(s),d},
\end{align}
which yields the causality constraints in (\ref{outline2}).  The next condition pertains to the flow itself, and states that node $d$ must decode or know all sources encoded in node $i$ to decode any one source encoded in node $i$.   

(C2) For every $s\in{\cal S}(d)$ and $i\in {\bf f}(s,d)$:
\begin{align}
\label{C1}
{\cal S}(i)\subseteq {\cal S}(d).
\end{align}
 
Two flow types of interest that satisfy (C2) are all-cast channels, in which all nodes decode all sources, and multi-cast channels with hierarchical (or tree structured) flow, in which messages proceed from the branches to the root.  

The remaining conditions ensure that (\ref{dubenc})-(\ref{AtildeN}) are consistent.  Comparing (\ref{dubenc}), (\ref{emdb}), (\ref{tippie}) and (\ref{dubdb}), the messages decoded by node $d$ in block 
$b$ must match the corresponding messages encoded by node $i$ in block $b-l$.

(C3) Suppose $i\in{\bf f}(s)$ and $i\in A_{l}({\cal N})$ for some $l\in\{0,\ldots,|{\bf L}_{d}|-1\}$.  If $i\in{\bf v}(s)$ then:
\begin{align}
\label{C2}
	b-l-k_{s,i} = b-k_{s,v(s)}-\text{\sc layer}(v(s)).
\end{align}

Again comparing (\ref{dubenc}), (\ref{emdb}), (\ref{tippie}) and (\ref{dubdb}), any messages not decoded by node $d$ in block $b$ but encoded by node $i$ in block $b-l$, must already be known by node $d$ in some previous block.

(C4) Suppose $i\in{\bf f}(s)$ and $i\in L_{l}$ for some $l\in\{0,\ldots,|{\bf L}_{d}|-1\}$.  If $i\notin{\bf v}(s)$ then:
\begin{align}
\label{C3}
	b-l-k_{s,i} < b-k_{s,v(s)}-\text{\sc layer}(v(s)).
\end{align}

Finally, comparing (\ref{dubenc}), (\ref{emdb}) and (\ref{tippie}), the sequences in $\{{\bf X}_{i}(b-l):i\in\tilde{A}_{l}({\cal N})\}$, which are side information in (\ref{tippie}), must only encode messages known to node $d$.

(C5) Suppose $i\in{\bf f}(s)$ and $i\in\tilde{A}_{l}({\cal N})$ for some $l\in\{0,\ldots,|{\bf L}_{d}|-1\}$.  Then:
\begin{align}
	b-l-k_{s,i} < b-k_{s,v(s)}-\text{\sc layer}(v(s)).
\end{align}


\begin{figure*}[t]
        \center{\includegraphics[width=\textwidth]
        {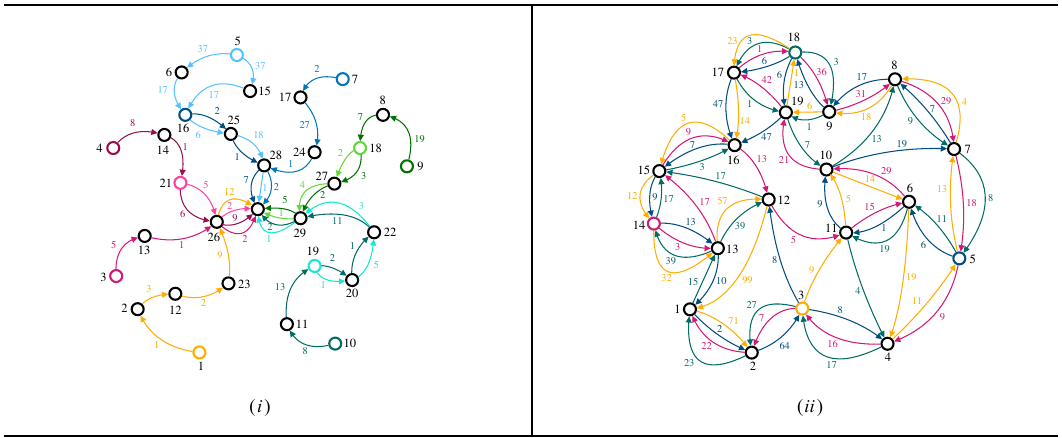}}
        \caption{(i) Hierarchical flows ${\bf F}=\{{\bf f}(1),{\bf f}(3),{\bf f}(4),{\bf f}(5),{\bf f}(7),{\bf f}(9),{\bf f}(10),{\bf f}(16),{\bf f}(18),{\bf f}(19), {\bf f}(21)\}$ where ${\bf f}(1)=1\xrightarrow{1}2\xrightarrow{3}12\xrightarrow{2}23\xrightarrow{9}26\xrightarrow{12}30$, ${\bf f}(3)=3\xrightarrow{5}13\xrightarrow{1}26\xrightarrow{2}30$, ${\bf f}(4)=4\xrightarrow{8}14\xrightarrow{1}21\xrightarrow{6}26\xrightarrow{9}30$, ${\bf f}(5)=5\xrightarrow{37}\{6,15\}\xrightarrow{17}16\xrightarrow{6}25\xrightarrow{18}28\xrightarrow{1}30$, ${\bf f}(7)=7\xrightarrow{2}17\xrightarrow{27}24\xrightarrow{1}28\xrightarrow{2}30$, ${\bf f}(9)=9\xrightarrow{19}8\xrightarrow{7}18\xrightarrow{3}27\xrightarrow{2}29\xrightarrow{5}30$, ${\bf f}(10)=10\xrightarrow{8}11\xrightarrow{13}19\xrightarrow{2}20\xrightarrow{1}22\xrightarrow{11}29\xrightarrow{2}30$, ${\bf f}(16)=16\xrightarrow{2}25\xrightarrow{1}28\xrightarrow{7}30$, ${\bf f}(18)=18\xrightarrow{2}27\xrightarrow{4}29\xrightarrow{1}30$, ${\bf f}(19)=19\xrightarrow{1}20\xrightarrow{5}22\xrightarrow{3}29\xrightarrow{1}30$, and ${\bf f}(21)=21\xrightarrow{5}26\xrightarrow{2}30$. (ii) All-cast flows ${\bf F}=\{{\bf f}(3),{\bf f}(5),{\bf f}(14),{\bf f}(18)\}$ where ${\bf f}(3)=9\xrightarrow{9}11\xrightarrow{5}10\xrightarrow{14}6\xrightarrow{19}4\xrightarrow{11}5\xrightarrow{13}7\xrightarrow{4}8\xrightarrow{18}9\xrightarrow{6}19\xrightarrow{1}18\xrightarrow{23}17\xrightarrow{14}16\xrightarrow{5}15\xrightarrow{12}14\xrightarrow{32}13\xrightarrow{57}12\xrightarrow{99}1\xrightarrow{71}2$, ${\bf f}(5)=5\xrightarrow{6}6\xrightarrow{1}11\xrightarrow{9}10\xrightarrow{19}7\xrightarrow{7}8\xrightarrow{17}9\xrightarrow{13}18\xrightarrow{6}19\xrightarrow{47}16\xrightarrow{7}15\xrightarrow{9}14\xrightarrow{13}13\xrightarrow{10}1\xrightarrow{2}2\xrightarrow{64}3\xrightarrow{8}\{4,12\}$, ${\bf f}(14)=14\xrightarrow{3}13\xrightarrow{17}15\xrightarrow{9}16\xrightarrow{13}12\xrightarrow{5}11\xrightarrow{15}6\xrightarrow{29}10\xrightarrow{21}19\xrightarrow{42}17\xrightarrow{1}18\xrightarrow{36}9\xrightarrow{31}8\xrightarrow{29}7\xrightarrow{18}5\xrightarrow{9}4\xrightarrow{16}3\xrightarrow{7}2\xrightarrow{22}1$, ${\bf f}(18)=18\xrightarrow{3}\{9,17\}\xrightarrow{1}19\xrightarrow{7}10\xrightarrow{13}8\xrightarrow{9}7\xrightarrow{8}5\xrightarrow{11}6\xrightarrow{19}11\xrightarrow{4}4\xrightarrow{17}3\xrightarrow{27}2\xrightarrow{23}1\xrightarrow{15}13\xrightarrow{39}14\xrightarrow{17}15\xrightarrow{3}16$.}
        \label{hierarchicalallcast}
 \end{figure*}

\begin{lemma} 
\label{splitting}
 If (C1) and (C2) are satisfied then (C3), (C4), and (C5) are also satisfied.
\end{lemma}
\begin{proof}
Given the assumptions $i\in{\bf f}(s)$, $i\in A_{l}({\cal N})$, and $i\in{\bf v}(s)$ in (C3), consider the following sequence of inequalities:
\begin{align}
	\label{C2proof1}
	b-l-k_{s,i}&=b-\text{\sc layer}(i)-k_{s,i},\\
	\label{C2proof2}
	&=b-(\text{\sc layer}(v(s))-k_{v(s),i})-k_{s,i},\\
	\nonumber
	&=b-\text{\sc layer}(v(s))-(k_{s,i}-k_{v(s),i}),\\
	\label{C2proof3}
	&=b-\text{\sc layer}(v(s))-k_{s,v(s)},
\end{align}
where (\ref{C2proof1}) follows because (\ref{AN}) implies that $i\in A_{l}({\cal N})$ only if $i\in L_{l}$ which implies $\text{\sc layer}(i)=l$, and (\ref{C2proof2}) follows because (V5) implies $i\in{\bf v}(s)$ only if $\text{\sc layer}(v(s))-\text{\sc layer}(i)=k_{v(s),i}$.  By inspection, (\ref{C2proof3}) satisfies (\ref{C2}) in (C3).

Consider assumptions $i\in{\bf f}(s)$ and $i\in L_{l}$ in (C4).  If (C2) is satisfied, then (\ref{C1}) implies $s\in{\cal S}(d)$ which implies that $v(s)$ and ${\bf v}(s)$ exist at node $d$, as defined by (\ref{vs}) and (V1)-(V5) respectively.  Assume $i\notin{\bf v}(s)$ as stated in (C4).  
Suppose $i\in{\bf f}(v(s),d)$ and consider the following inequalities:
\begin{align}
	\label{feasibility1}
	b-l-k_{s,i}&\leq b-(\text{\sc layer}(v(s))-k_{v(s),i})-k_{s,i},\\
	\nonumber
	&=b-(k_{s,i}-k_{v(s),i})-\text{\sc layer}(v(s)),\\
	\label{feasibility12}
	&=b-k_{s,v(s)}-\text{\sc layer}(v(s)),
\end{align} 
where (\ref{feasibility1}) follows from (\ref{firstIndexCoding}) 
and because (L5) implies $\text{\sc layer}(i)=l$ if $i\in L_{l}$.   Equality in (\ref{feasibility12}) only occurs if property (V5) holds which implies $i\in{\bf v}(s)$ as in (\ref{C2proof3}).  Since $i\notin{\bf v}(s)$, it follows from (\ref{C2proof3}) and (\ref{feasibility12}) that:
\begin{align}
\label{feasibility13}
b-l-k_{s,i}<b-k_{s,v(s)}-\text{\sc layer}(v(s))
\end{align}   
Now suppose $i\in{\bf f}(s,v(s))$.  Note that ${\bf f}(s,v(s))$ precedes $v(s)$ so $i\notin{\bf v}(s)$.  Consider the following inequalities: 
\begin{align}
	\label{feasibility2}
	b-l-k_{s,i}&<b-(k_{i,v(s)}+\text{\sc layer}(v(s)))-k_{s,i},\\
	\nonumber
	&=b-(k_{s,i}+k_{i,v(s)})-\text{\sc layer}(v(s)),\\
	\label{feasibility22}
	&=b-k_{s,v(s)}-\text{\sc layer}(v(s)),
\end{align} 
where (\ref{feasibility2}) follows from (\ref{secondIndexCoding}) and because (L5) implies $\text{\sc layer}(i)=l$ if $i\in L_{l}$ (L5).  Since $i\notin{\bf v}(s)$, it follows from (\ref{feasibility13}) and (\ref{feasibility22}) that: 
\begin{align}
\label{C3proof1}
b-l-k_{s,i}<b-k_{s,v(s)}-\text{\sc layer}(v(s)),  	
\end{align}
which satisfies (\ref{C3}) in (C4).

Consider assumptions $i\in{\bf f}(s)$ and $i\in\tilde{A}_{l}({\cal N})$ in (C5).  It follows from (\ref{AtildeN}) that $\tilde{A}_{l}({\cal N})=((\cup_{j\in I(d)}F_{j}({\cal N}))\setminus F_{d}({\cal N}))\cup(F_{d}({\cal N})\setminus(\cup^{|{\bf L}_{d}|-1}_{q=l+1}L_{q}\cup A_{l}({\cal N})))$.  First, suppose $i\in(F_{d}({\cal N})\setminus(\cup^{|{\bf L}_{d}|-1}_{q=l+1}L_{q}\cup A_{l}({\cal N})))$.  It follows from property (L3) and (\ref{AN}) that $(F_{d}({\cal N})\setminus(\cup^{|{\bf L}_{d}|-1}_{q=l+1}L_{q}\cup A_{l}({\cal N})))=(L_{l}\setminus A_{l}({\cal N}))\cup(\cup^{l-1}_{q=0}L_{q})$.  If $i\in L_{l}\setminus A_{l}({\cal N})$ then (\ref{AN}) implies $i\notin{\bf v}(s)$ for every  $s\in{\cal S}(d)$.  It follows from (\ref{C3proof1}) that (C5) is satisfied.  If $i\in\cup^{l-1}_{q=0}L_{q}$ then $b-q-k_{s,i}\leq b-k_{s,v(s)}-\text{\sc layer}(v(s))$ for some $0\leq q\leq l-1$ and $\text{sc layer}(i)=q$, as implied by (\ref{feasibility12}) when $i\in{\bf f}(v(s),d)$ and (\ref{feasibility22}) when $i\in{\bf f}(s,v(s))$.  Therefore $b-l-k_{s,i}<b-k_{s,v(s)}-\text{\sc layer}(v(s))$ which satisfies (C5).  Second, suppose $i\in(\cup_{j\in I(d)}F_{j}({\cal N}))\setminus F_{d}({\cal N})$ and consider the following inequalities:
\begin{align}
	\label{C5proof1}
	b-k_{s,v(s)}-\text{\sc layer}(v(s))&>b-k_{s,d},\\
	\label{C5proof2}
	&>b-k_{s,i},
\end{align}
where (\ref{C5proof1}) follows from (C1).  By definition in (\ref{id}), $I(d):=\{j:{\cal S}(j)\subseteq{\cal S}(d)\}$, so (\ref{sd}) implies node $i$ only encodes sources decoded by node $d$.  Since $i\notin F_{d}({\cal N})$, node $i$ is ``downstream'' of node $d$ so that $k_{s,i}>k_{s,d}$ which implies (\ref{C5proof2}).  It follows from (\ref{C5proof2}) that $b-k_{s,v(s)}-\text{\sc layer}(v(s))>b-l-k_{s,i}$ which satisfies (C5).
\end{proof}

  For any rate vector ${\bf R}:=(R_{1},\ldots,R_{|{\cal N}|})$ and some subset $S\subseteq{\cal S}(d)$, let $R_{S}=\sum_{s\in S} R_{s}$.  Lemma \ref{splitting} implies the encoding/decoding scheme defined by (\ref{dubenc})-(\ref{AtildeN}) is well-defined contingent on (C1) and (C2).  For the product distribution $\{\Pi^{|{\cal N}|}_{i=1}p(x_{i}):x_{i}\in{\cal X}_{i},i=1,\ldots,|{\cal N}|\}$, it follows from (\ref{tippie}) that the probability node $d$ decodes some subset of sources $S\subseteq{\cal S}(d)$ incorrectly, goes to zero if the following constraint is satisfied:

\begin{align}
\label{flowdecomp}
R_{S}<\displaystyle\sum^{|{\bf L}_{d}|-1}_{l=0}I(X_{A_{l}(S)};Y_{d}|X_{\tilde{A}_{l}(S)}).
\end{align}

Let ${\cal R}({\bf F},{\bf L}_{d})$ denote the region of rate vectors that satisfy (\ref{flowdecomp}) for every $S\subseteq{\cal S}(d)$.

\section{Main Result}
\label{mainresult}
For any subset $S\subseteq{\cal N}$, recall that $F_{d}(S):=\{i\in{\bf f}(s,d):s\in S\cap{\cal S}(d)\}$ and let $\tilde{F}_{d}(S):=(\cup_{i\in I(d)}F_{i}({\cal N}))\setminus F_{d}(S)$.  For the product distribution $\{\Pi^{|{\cal N}|}_{i=1}p(x_{i}):x_{i}\in{\cal X}_{i},i=1,\ldots,|{\cal N}|\}$, let ${\cal R}_{d}({\bf F})$ denote the set of rate vectors that satisfy the following constraint for all subsets $S\subseteq{\cal S}(d)$:
\begin{align}
\label{biggiewiggie}
	R_{S}<I(X_{F_{d}(S)};Y_{d}|X_{\tilde{F}_{d}(S)}).
\end{align}   
Since $F_{d}(S)$ includes all the nodes upstream of node $d$, that encode the sources $S$, and $\tilde{F}_{d}(S)$ includes all the nodes downstream of node $d$ as well as the remaining nodes in $F_{d}({\cal N})$, it follows that (\ref{biggiewiggie}) is is an upper-bound on the decode-forward achievable sum rate at node $d$.

Two flows ${\bf F}$ and ${\bf F}^{\prime}$ are \textit{equivalent}, by definition, if for every $s\in{\cal N}$, ${\bf f}(s)\in{\bf F}$ and ${\bf f}^{\prime}(s)\in{\bf F}^{\prime}$ differ only in their encoding delays, $\{k_{i}:1\leq i\leq e_{s}\}$ and $\{k^{\prime}_{i}: 1\leq i\leq e_{s}\}$ respectively, but not the node sequences, so that  $N_{i}=N^{\prime}_{i}$ for all $1\leq i\leq e_{s}$ where ${\bf f}(s):=N_{1}\xrightarrow{k_{1}}N_{2}\xrightarrow{k_{2}}\cdots\xrightarrow{k_{e_{s}-1}}N_{e_{s}}$ and ${\bf f}^{\prime}(s):=N^{\prime}_{1}\xrightarrow{k^{\prime}_{1}}N^{\prime}_{2}\xrightarrow{k^{\prime}_{2}}\cdots\xrightarrow{k^{\prime}_{e_{s}-1}}N^{\prime}_{e_{s}}$.  Since (\ref{biggiewiggie}) depends only the nodes in the flows and not the encoding delays, it follows that ${\cal R}_{d}({\bf F})={\cal R}_{d}({\bf F}^{\prime})$ for all $d\in{\cal N}$ and channel statistics $p({\bf y}_{{\cal N}}|{\bf x}_{{\cal N}})$ if ${\bf F}$ and ${\bf F}^{\prime}$ are equivalent.
\begin{theorem}
\label{theoremone}
Fix some $d\in{\cal N}$.  For any ${\bf F}$ satisfying (C2) and ${\bf R}\in{\cal R}_{d}({\bf F})$, there exists an equivalent ${\bf F}^{\prime}$ and ${\bf L}_{d}$ satisfying (C1) and (C2) such that 
${\bf R}\in{\cal R}({\bf F}^{\prime},{\bf L}_{d})$.
\end{theorem}
\begin{proof}
	See Section \ref{prooftheoremone}.
\end{proof}

If ${\bf R}\in\cap^{|{\cal N}|}_{d=1}{\cal R}_{d}({\bf F})$, Theorem \ref{theoremone} does \textit{not} imply that ${\bf R}\in{\cal R}({\bf F}^{\prime},{\bf L}_{d})$ for each $1\leq d\leq |{\cal N}|$ and some ${\bf L}_{d}$ \textit{simultaneously}.  That result only occurs if the flow is hierarchical.  

\begin{figure*}[!ht]
        \center{\includegraphics[width=\textwidth]
        {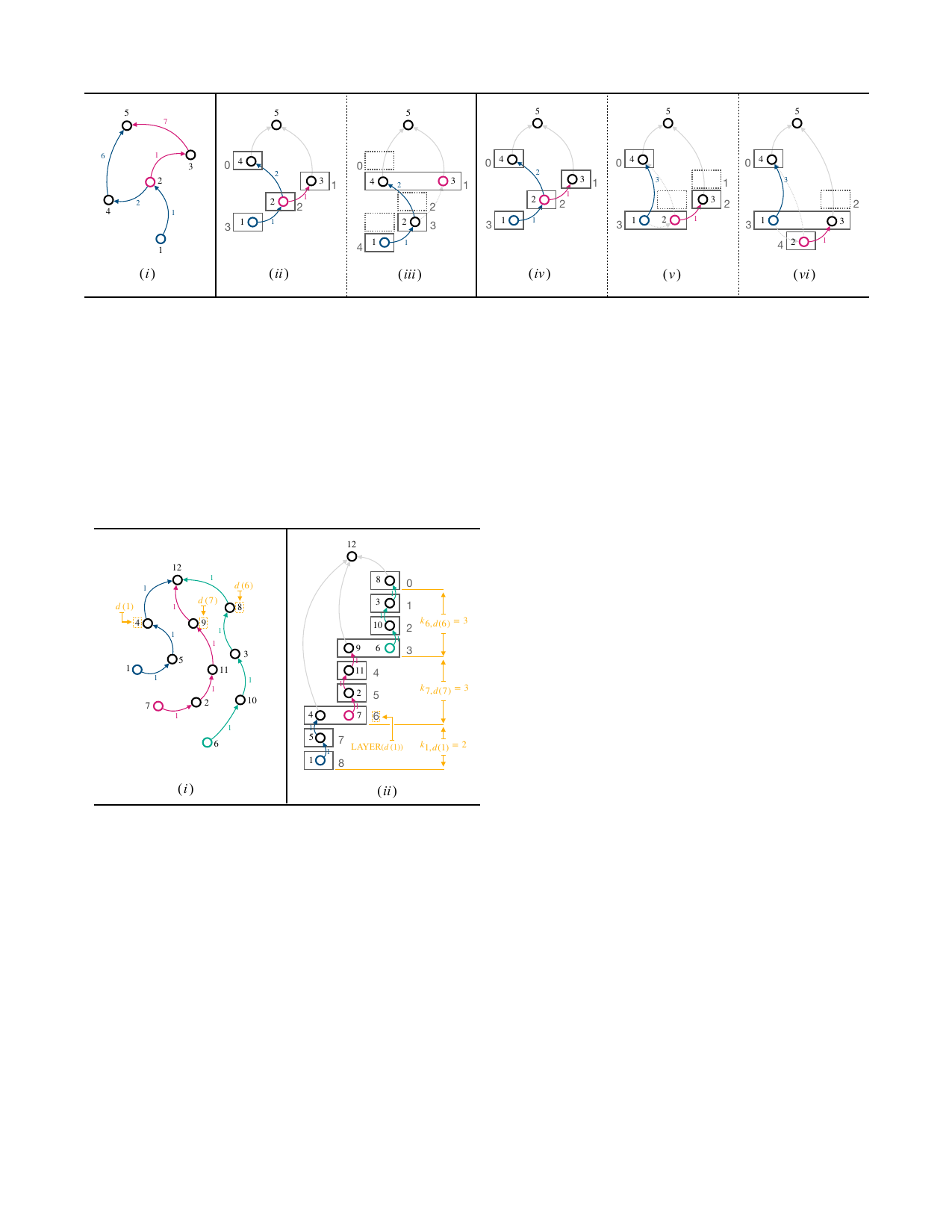}}
        \caption{(i) The flows ${\bf F}=\{{\bf f}(1),{\bf f}(2)\}$ with ${\bf f}(1)=1\xrightarrow{1}2\xrightarrow{2}4\xrightarrow{6}5$ and ${\bf f}(2)=2\xrightarrow{1}3\xrightarrow{7}5$. (ii) The flows ${\bf F}$ and layering ${\bf L}_{5}$ where ${\bf L}_{5}=(L_{0},L_{1},L_{2},L_{3})$,  $L_{0}=\{4\}$, $L_{1}=\{3\}$, $L_{2}=\{2\}$, and $L_{3}=\{1\}$.  The virtual flows are ${\bf v}(1)=1\xrightarrow{1}2\xrightarrow{2}4$ and ${\bf v}(2)=2\xrightarrow{1}3$. (iii) The flows ${\bf F}$ and layering ${\bf L}^{\prime}_{5}$ where ${\bf L}^{\prime}_{5}=\text{\sc shift}({\bf L}_{5},\{1\})$, ${\bf L}^{\prime}_{5}=(L^{\prime}_{0},L^{\prime}_{1},L_{2},L^{\prime}_{3},L^{\prime}_{4})$, $L^{\prime}_{1}=\{3,4\}$, $L^{\prime}_{3}=\{2\}$ and $L^{\prime}_{4}=\{1\}$.  The virtual flows are ${\bf v}^{\prime}(1)=1\xrightarrow{1}2\xrightarrow{2}4$ and ${\bf v}^{\prime}(2)=3$. (iv) The flows ${\bf F}$ and layering ${\bf L}_{5}$ where ${\bf L}_{5}=(L_{0},L_{1},L_{2},L_{3})$,  $L_{0}=\{4\}$, $L_{1}=\{3\}$, $L_{2}=\{2\}$, and $L_{3}=\{1\}$. (v) The flows ${\bf F}$ and layering ${\bf L}^{\prime}_{5}$ where ${\bf L}^{\prime}_{5}=\text{\sc shift}({\bf L}_{5},\{2\})$ and $L^{\prime}_{0}=\{4\}$, $L^{\prime}_{2}=\{3\}$, and $L^{\prime}_{3}=\{1,2\}$.  The virtual flows are ${\bf v}^{\prime}(1)=1\xrightarrow{3}4$ and ${\bf v}^{\prime}(2)=2\xrightarrow{1}3$. (vi) The flows ${\bf F}$ and layering ${\bf L}^{\prime\prime}_{5}$ where ${\bf L}^{\prime\prime}_{5}=\text{\sc shift}({\bf L}^{\prime}_{5},\{2\})$ and $L^{\prime\prime}_{0}=\{4\}$, $L^{\prime\prime}_{3}=\{1,3\}$, and $L^{\prime\prime}_{4}=\{2\}$.  The virtual flows are ${\bf v}^{\prime\prime}(1)=1\xrightarrow{3}4$ and ${\bf v}^{\prime\prime}(2)=2\xrightarrow{1}3$.}
        \label{shiftingexample}
 \end{figure*}

\section{Proof of Theorem \ref{theoremone}}
\label{prooftheoremone}
To prove Theorem \ref{theoremone}, we first fix ${\bf R}\in{\cal R}_{d}({\bf F})$ and pick an arbitrary layering ${\bf L}_{d}$.  Then we find the largest subset $S\subseteq{\cal S}(d)$ of (\ref{flowdecomp}) violated by ${\bf R}$, and use this subset to ``shift" ${\bf L}_{d}$.  We repeat this process until ${\bf R}$ satisfies (\ref{flowdecomp}) for all $S\subseteq{\cal S}(d)$.  Finally, we modify ${\bf F}$ and ${\bf L}_{d}$ to give an equivalent ${\bf F}^{\prime}$ and ${\bf L}^{\prime}_{d}$ that both satisfy (C1) and (C2).

Fix some ${\bf R}\in{\cal R}_{d}({\bf F})$ and let $U\subseteq{\cal S}(d)$ be the largest subset that violates (\ref{flowdecomp}).  Define ${\bf L}^{\prime}_{d}:=\text{\sc shift}({\bf L}_{d}, U)$ as follows.  Let $\text{\sc layer}^{\prime}(\cdot)$ correspond to ${\bf L}^{\prime}_{d}$ with respect to (L5), and for every $i\in F_{d}({\cal N})$:  
\begin{align}
\label{shift2}
\text{\sc layer}^{\prime}(i)=\begin{cases}l&i\in A_{l}({\cal N})\setminus A_{l}(U),\\ l+1 & i\in A_{l}(U).\end{cases}	
\end{align}

For ${\bf L}^{\prime}_{d}=\text{\sc shift}(,{\bf L}_{d},U)$, let ${\cal R}({\bf F},{\bf L}^{\prime}_{d})$ denote the set of rate vectors that satisfy:
\begin{align}
\label{flowdecomp2}
	R_{S}<\displaystyle\sum^{|{\bf L}^{\prime}_{d}|-1}_{l=0}I(X_{A^{\prime}_{l}(S)};Y_{d}|X_{\tilde{A}^{\prime}_{l}(S)}),
\end{align}
for all $S\subseteq{\cal S}(d)$, where
\begin{align}
\label{Aprime}
A^{\prime}_{l}(S)&:=\{i\in{\bf v}^{\prime}(s):s\in S\cap{\cal S}(d)\}\cap L^{\prime}_{l},\\
\label{Atildeprime}
	\tilde{A}^{\prime}_{l}(S)&:=(\cup_{i\in I(d)}F_{i}({\cal N}))\setminus(\cup^{|{\bf L}^{\prime}_{d}|-1}_{q=l+1}L^{\prime}_{q}\cup A^{\prime}_{l}(S)).
\end{align}
By definition, ${\bf L}^{\prime}_{d}:=(L^{\prime}_{0},L^{\prime}_{1},\ldots,L^{\prime}_{|{\bf L}^{\prime}_{d}|-1})$, and $\{{\bf v}^{\prime}(s):s\in{\cal S}(d)\}$ is the set of virtual flows generated by $({\bf F},{\bf L}^{\prime}_{d})$.  
For all $j\in{\bf f}(v(s),d)$:  
\begin{align}
\label{firstIndexCodingShiftedComplete2}
\text{\sc layer}(v(s))-\text{\sc layer}(j)&\leq k_{v(s),j},
\end{align}
where (\ref{firstIndexCodingShiftedComplete2}) follows from (\ref{firstIndexCoding}).  For all $j\in{\bf f}(s,v(s))$:  
\begin{align}
\label{secondIndexShiftedComplete2}
\text{\sc layer}(j)-\text{\sc layer}(v(s))&> k_{j,v(s)},
\end{align} 
where (\ref{secondIndexShiftedComplete2}) follows from (\ref{secondIndexCoding}).  Similarly, for all $j\in{\bf f}(v^{\prime}(s),d)$:
\begin{align}
\label{firstIndexCodingShiftedComplete}
\text{\sc layer}^{\prime}(v^{\prime}(s))-\text{\sc layer}^{\prime}(j)&\leq k_{v^{\prime}(s),j},
\end{align}
where (\ref{firstIndexCodingShiftedComplete}) follows from (\ref{firstIndexCoding}).  For all $j\in{\bf f}(s,v^{\prime}(s))$:
\begin{align}
\label{secondIndexShiftedComplete}
\text{\sc layer}^{\prime}(j)-\text{\sc layer}^{\prime}(v^{\prime}(s))&> k_{j,v^{\prime}(s)},
\end{align}
where (\ref{secondIndexShiftedComplete}) follows from (\ref{secondIndexCoding}).

\begin{lemma}
	\label{disjoint}
	For every $S\subseteq{\cal S}(d)$ and $l\in\{0,\ldots,|{\bf L}_{d}|-1\}$, $A_{l}(S)\setminus A_{l}(U)\subseteq A^{\prime}_{l}(S)$.
\end{lemma}
\begin{proof}
We first prove some preliminary claims.

\hspace{-3.5mm}{\bf Claim \ref{disjoint}.1} \textit{Suppose $s\notin U$ and $\text{\sc layer}^{\prime}(j)=\text{\sc layer}(j)$ for some $j\in{\bf v}(s)$.  Then:}

\begin{align}
\label{Case3v2s}
	v^{\prime}(s)&:=\left\{j:\arg\displaystyle\min_{\substack{j\in {\bf v}(s),\\ \text{\sc layer}^{\prime}(j)=\text{\sc layer}(j)}} k_{v(s),j}\right\}.
\end{align}

\begin{proof}
Since the premise of Claim \ref{disjoint}.1 is that there is some $j\in{\bf v}(s)$ such that $\text{\sc layer}^{\prime}(j)=\text{\sc layer}(j)$, it follows that $v^{\prime}(s)$ is non-empty.  To prove the claim, we invoke Lemma \ref{vssuffices} and check that (\ref{Case3v2s}) satisfies (\ref{firstIndexCodingShiftedComplete}) and (\ref{secondIndexShiftedComplete}).
  
\hspace{-3.5mm}{\bf Claim \ref{disjoint}.1.a} \textit{For all $j\in{\bf f}(v(s),v^{\prime}(s))$ and $\text{\sc layer}^{\prime}(j)=\text{\sc layer}(j)$, the $v^{\prime}(s)$ defined in (\ref{Case3v2s}) satisfies (\ref{secondIndexShiftedComplete}).}
\begin{proof}
Consider the following inequalities:
\begin{align}
	\label{Case3Stage28}
	k_{v(s),j}&\geq\text{\sc layer}(v(s))-\text{\sc layer}(j),\\
	\label{Case3Stage29}
	&>\text{\sc layer}(v(s))-\text{\sc layer}(j),
\end{align} 
where (\ref{Case3Stage28}) follows from (\ref{firstIndexCodingShiftedComplete2}).  Since $\text{\sc layer}^{\prime}(j)=\text{\sc layer}(j)$ and $j\in{\bf f}(v(s),v^{\prime}(s))$, (\ref{Case3v2s}) implies that $j\notin{\bf v}(s)$ because otherwise $j\in{\bf v}(s)$ and $\text{\sc layer}^{\prime}(j)=\text{\sc layer}(j)$ and   $k_{v(s),j}<k_{v(s),v^{\prime}(s)}$ is a contradiction.  Moreover, (V5) implies that equality in (\ref{Case3Stage28}) only holds if $j\in{\bf v}(s)$ which implies (\ref{Case3Stage29}).  Now consider the following inequalities:
\begin{align}
	\label{Case3Stage210}
	k_{v(s),v^{\prime}(s)}&=\text{\sc layer}(v(s))-\text{\sc layer}(v^{\prime}(s)),\\
	\nonumber
	&=\text{\sc layer}(v(s))-\text{\sc layer}(j)\\
	\label{Case3Stage211}
	&\hspace{18mm}+\text{\sc layer}(j)-\text{\sc layer}(v^{\prime}(s)),
\end{align}
where (\ref{Case3Stage210}) follows from (V5) and because (\ref{Case3v2s}) implies $v^{\prime}(s)\in{\bf v}(s)$, and (\ref{Case3Stage211}) follows by adding and subtracting $\text{\sc layer}(j)$.  Rearranging (\ref{Case3Stage211}) gives:
\begin{align}
	\nonumber
	&\hspace{-2mm}\text{\sc layer}(j)-\text{\sc layer}(v^{\prime}(s))\\
	\label{Case3Stage212}
	&\hspace{9mm}=k_{v(s),v^{\prime}(s)}-(\text{\sc layer}(v(s))-\text{\sc layer}(j)),\\
	\label{Case3Stage213}
	&\hspace{9mm}>k_{v(s),v^{\prime}(s)}-k_{v(s),j},\\
	\label{Case3Stage214}
	&\hspace{9mm}=k_{j,v^{\prime}(s)},
\end{align}
where (\ref{Case3Stage213}) follows from substituting (\ref{Case3Stage29}) in (\ref{Case3Stage212}).  Continuing this series of inequalities gives:
\begin{align}
	\label{Case3Stage215}
	k_{j,v^{\prime}(s)}&<\text{\sc layer}(j)-\text{\sc layer}(v^{\prime}(s)),\\
	\label{Case3Stage216}
	&=\text{\sc layer}^{\prime}(j)-\text{\sc layer}^{\prime}(v^{\prime}(s)),
\end{align}
where (\ref{Case3Stage215}) follows from (\ref{Case3Stage214}).  The premise of Claim \ref{disjoint}.1 ensures $v^{\prime}(s)$ as defined in (\ref{Case3v2s}) is non-empty.  By construction $\text{\sc layer}^{\prime}(v^{\prime}(s))=\text{\sc layer}(v^{\prime}(s))$.  Therefore (\ref{Case3Stage216}) follows since $\text{\sc layer}^{\prime}(j)=\text{\sc layer}(j)$ is the premise of Claim \ref{disjoint}.1.a.  It follows from (\ref{Case3Stage216}) that (\ref{secondIndexShiftedComplete}) is satisfied.  
\end{proof}

\hspace{-3.5mm}{\bf Claim \ref{disjoint}.1.b} \textit{For all $j\in{\bf f}(v(s),v^{\prime}(s))$ and $\text{\sc layer}^{\prime}(j)>\text{\sc layer}(j)$, the $v^{\prime}(s)$ defined in (\ref{Case3v2s}) satisfies (\ref{secondIndexShiftedComplete}).}
\begin{proof}
Consider the following inequalities:
\begin{align}
	\label{Case3newStage21}
	\text{\sc layer}(j)-\text{\sc layer}(v^{\prime}(s))&\geq k_{v(s),v^{\prime}(s)}-k_{v(s),j},\\
	\label{Case3newStage22}
	&=k_{j,v^{\prime}(s)},
\end{align}  
where (\ref{Case3newStage21}) follows from substituting (\ref{firstIndexCodingShiftedComplete2}) into (\ref{Case3Stage212}) (nothing in (\ref{Case3Stage210})-(\ref{Case3Stage212}) requires $\text{\sc layer}^{\prime}(j)=\text{\sc layer}(j)$).  We have the following inequalities:
\begin{align}
	\label{Case3newStage23}
	k_{j,v^{\prime}(s)}&\leq\text{\sc layer}(j)-\text{\sc layer}(v^{\prime}(s)),\\
	\label{Case3newStage24}
	&<\text{\sc layer}^{\prime}(j)-\text{\sc layer}^{\prime}(v^{\prime}(s)), 
\end{align}
where (\ref{Case3newStage23}) follows from (\ref{Case3newStage22}), and (\ref{Case3newStage24}) follows because $\text{\sc layer}^{\prime}(j)>\text{\sc layer}(j)$ is the premise of Claim \ref{disjoint}.1.b and (\ref{Case3v2s}) implies $\text{\sc layer}^{\prime}(v^{\prime}(s))=\text{\sc layer}(v^{\prime}(s))$.  It follows from (\ref{Case3newStage24}) that (\ref{secondIndexShiftedComplete}) is satisfied.  
\end{proof}

Claim \ref{disjoint}.1.a and Claim \ref{disjoint}.1.b together show that $v^{\prime}(s)$ as defined in (\ref{Case3v2s}) satisfies (\ref{secondIndexShiftedComplete}) for all $j\in{\bf f}(v(s),v^{\prime}(s))$.  For every $j\in{\bf f}(s,v(s))$:
\begin{align}
	\label{Case3Stage217}
	\hspace{-2mm}k_{j,v^{\prime}(s)}&<\text{\sc layer}(j)-\text{\sc layer}(v(s))+k_{v(s),v^{\prime}(s)},\\
	\label{Case3Stage218}
	&=\text{\sc layer}(j)-\text{\sc layer}(v^{\prime}(s)),\\
	\label{Case3Stage219}
	&\leq\text{\sc layer}^{\prime}(j)-\text{\sc layer}^{\prime}(v^{\prime}(s)),
\end{align}
where (\ref{Case3Stage217}) follows by adding $k_{v(s),v^{\prime}(s)}$ to both sides of (\ref{secondIndexShiftedComplete2}) and simplifying $k_{j,v(s)}+k_{v(s),v^{\prime}(s)}=k_{j,v^{\prime}(s)}$, (\ref{Case3Stage218}) follows from (V5) which implies $\text{\sc layer}(v(s))-\text{\sc layer}(v^{\prime}(s))=k_{v(s),v^{\prime}(s)}$ if $v^{\prime}(s)\in{\bf v}(s)$ and (\ref{Case3v2s}) which implies $v^{\prime}(s)\in{\bf v}(s)$, and (\ref{Case3Stage219}) follows because (\ref{shift2}) implies $\text{\sc layer}^{\prime}(j)\geq\text{\sc layer}(j)$ and (\ref{Case3v2s}) implies $\text{\sc layer}^{\prime}(v^{\prime}(s))=\text{\sc layer}(v^{\prime}(s))$.  It follows from Claim \ref{disjoint}.1.a, Claim \ref{disjoint}.1.b, and (\ref{Case3Stage219}) that (\ref{Case3v2s}) satisfies (\ref{secondIndexShiftedComplete}).  For every $j\in{\bf f}(v^{\prime}(s),d)$:
\begin{align}
	\label{Case3Stage220}
	\hspace{-2mm}k_{v^{\prime}(s),j}&\geq\text{\sc layer}(v(s))-\text{\sc layer}(j)-k_{v(s),v^{\prime}(s)},\\
	\label{Case3Stage221}
	&=\text{\sc layer}(v^{\prime}(s))-\text{\sc layer}(j),\\
	\label{Case3Stage222}
	&\geq\text{\sc layer}^{\prime}(v^{\prime}(s))-\text{\sc layer}^{\prime}(j),
\end{align}
where (\ref{Case3Stage220}) follows by subtracting $k_{v(s),v^{\prime}(s)}$ from (\ref{firstIndexCodingShiftedComplete2}) and simplifying $k_{v(s),j}-k_{v(s),v^{\prime}(s)}=k_{v^{\prime}(s),j}$, (\ref{Case3Stage221}) follows from (V5) which implies $\text{\sc layer}(v(s))-\text{\sc layer}(v^{\prime}(s))=k_{v(s),v^{\prime}(s)}$ if $v^{\prime}(s)\in{\bf v}(s)$ and (\ref{Case3v2s}) which implies $v^{\prime}(s)\in{\bf v}(s)$, and (\ref{Case3Stage222}) follows because (\ref{shift2}) implies $\text{\sc layer}^{\prime}(j)\geq\text{\sc layer}(j)$ and (\ref{Case3v2s}) implies $\text{\sc layer}^{\prime}(v^{\prime}(s))=\text{\sc layer}(v^{\prime}(s))$.  It follows from (\ref{Case3Stage222}) that $v^{\prime}(s)$ as defined in (\ref{Case3v2s}) satisfies (\ref{firstIndexCodingShiftedComplete}) which completes the proof of Claim \ref{disjoint}.1.
\end{proof}

To complete the proof of Lemma \ref{disjoint}, fix $i\in A_{l}(S)\setminus A_{l}(U)$.  First, we show that the premises of Lemma \ref{disjoint} satisfy the premises of Claim \ref{disjoint}.1.  It follows from (\ref{A}) and (\ref{shift2}) that $i\in{\bf v}(s)$ for some $s\in S\setminus U$.  Moreover, (\ref{A}) and (\ref{shift2}) imply that $i\notin{\bf v}(\tilde{s})$ for all $\tilde{s}\in U$.  It follows from (\ref{shift2}) that $\text{\sc layer}^{\prime}(i)=\text{\sc layer}(i)$ which satisfies the premise of Claim \ref{disjoint}.1, that $\text{\sc layer}^{\prime}(j)=\text{\sc layer}(j)$ for some $j\in{\bf v}(s)$ and $s\notin U$.  Therefore $v^{\prime}(s)$ satisfies (\ref{Case3v2s}).  Finally, we have the following equalities for any $i\in{\bf v}(s)$ and $s\in S\setminus U$:
\begin{align}
	\nonumber
	&\text{\sc layer}^{\prime}(v^{\prime}(s))-\text{\sc layer}^{\prime}(i)\\
	\label{Case3Stage223}
	&\hspace{10mm}=\text{\sc layer}(v^{\prime}(s))-\text{\sc layer}(i),\\
	\label{preCase3Stage224}
	&\hspace{10mm}=(\text{\sc layer}(v(s))-k_{v(s),v^{\prime}(s)})-\text{\sc layer}(i),\\
	\nonumber
	&\hspace{10mm}=(\text{\sc layer}(v(s))-k_{v(s),v^{\prime}(s)})\\
	\label{Case3Stage224}
	&\hspace{33mm}\hspace{4mm}-(\text{\sc layer}(v(s))-k_{v(s),i}),\\
	\label{Case3Stage225}
	&\hspace{10mm}=k_{v(s),i}-k_{v(s),v^{\prime}(s)}\\
	\label{Case3Stage226}
	&\hspace{10mm}=k_{v^{\prime}(s),i}.
\end{align}
where (\ref{Case3Stage223}) follows because $\text{\sc layer}^{\prime}(i)=\text{\sc layer}(i)$ as shown previously and (\ref{Case3v2s}) implies $\text{\sc layer}^{\prime}(v^{\prime}(s))=\text{\sc layer}(v^{\prime}(s))$, (\ref{preCase3Stage224}) follows from (V5) since (\ref{Case3v2s}) implies $v^{\prime}(s)\in{\bf v}(s)$, (\ref{Case3Stage224}) follows from (V5) since  $i\in{\bf v}(s)$ by assumption, and (\ref{Case3Stage225}) follows from simplifying.  Since $i\in{\bf v}(s)$ by assumption and $\text{\sc layer}^{\prime}(i)=\text{\sc layer}(i)$ as shown previously, it follows from (\ref{Case3v2s}) that $v^{\prime}(s)\in{\bf v}(s)$ and $k_{v(s),i}>k_{v(s),v^{\prime}(s)}$.  Therefore (\ref{Case3Stage226}) follows from (\ref{Case3Stage225}).  It follows from (\ref{Case3Stage226}) and (V5) that $i\in{\bf v}^{\prime}(s)$ for some $s\in S\setminus U$.

Since $i\in A_{l}(S)\setminus A_{l}(U)$, it follows that $\text{\sc layer}(i)=l$.  Moreover, $\text{\sc layer}^{\prime}(i)=\text{\sc layer}(i)$, which implies $\text{\sc layer}^{\prime}(i)=l$.  By definition in (L5), $i\in L^{\prime}_{l}$ if and only if $\text{\sc layer}^{\prime}(i)=l$.  Since $i\in{\bf v}^{\prime}(s)$ for some $s\in S\setminus U$ and $i\in L^{\prime}_{l}$, it follows from (\ref{Aprime}) that $i\in A^{\prime}_{l}(S)$.
\end{proof}

\begin{lemma}
\label{remain}
For every $S\subseteq{\cal S}(d)$ and $l\in\{0,\ldots,|{\bf L}_{d}|-1\}$, $A^{\prime}_{l}(S\setminus U)	\subseteq A_{l}(S\setminus U)$
\end{lemma}
\begin{proof}
First we prove some preliminary claims.

\hspace{-3.5mm}\text{\bf Claim \ref{remain}.1}. \textit{For every $s\notin U$, there exists some $i\in{\bf v}(s)$ such that $\text{\sc layer}^{\prime}(i)=\text{\sc layer}(i)$}.

\begin{proof}
The proof is by contradiction.  Fix some $s\notin U$, $i\in{\bf v}(s)$, and assume without lost of generality, that $\text{\sc layer}(i)=l$.  If the claim is false, then $i\in{\bf v}(s)$ implies $\text{\sc layer}^{\prime}(i)>\text{\sc layer}(i)$.  Hence, (\ref{shift2}) implies $i\in A_{l}(U)$.  This argument applies to every $i\in{\bf v}(s)$.  It follows that for every $l=0,\ldots,|{\bf L}_{d}|-1$:
\begin{align}
	\nonumber
	\hspace{-60mm}&I(X_{A_{l}(U\cup\{s\})};Y_{d}|X_{\tilde{A}_{l}(U\cup\{s\})})\\
	\label{remain1}
	&\hspace{38mm}=I(X_{A_{l}(U)};Y_{d}|X_{\tilde{A}_{l}(U)}).
\end{align}
By definition, $U\subseteq{\cal S}(d)$ is the largest subset such that:
\begin{align}
\label{remain2}
	R_{U}&>\displaystyle\sum^{|{\bf L}_{d}|-1}_{l=0}I(X_{A_{l}(U)};Y_{d}|X_{\tilde{A}_{l}(U)}).
\end{align}
Since $R_{U\cup\{s\}}\geq R_{U}$, it follows from (\ref{remain1}) and (\ref{remain2}) that:
\begin{align}
\label{remain3}
	R_{U\cup\{s\}}&>\displaystyle\sum^{|{\bf L}_{d}|-1}_{l=0}I(X_{A_{l}(U\cup\{s\})};Y_{d}|X_{\tilde{A}_{l}(U\cup\{s\})}),
\end{align}
which contradicts the definition of $U$ as the largest subset of ${\cal S}(d)$ that violates (\ref{flowdecomp}).
\end{proof}

\hspace{-3.5mm}\text{\bf Claim \ref{remain}.2}. \textit{If $i\in{\bf v}^{\prime}(s)$ for any $s\in S\setminus U$ then $\text{\sc layer}^{\prime}(i)=\text{\sc layer}(i)$.}
\begin{proof}
Fix $i\in{\bf v}^{\prime}(s)$ for some $s\in S\setminus U$.  We have the following sequence of equalities:
\begin{align}
	\label{remain4}
	k_{v^{\prime}(s),i}&=\text{\sc layer}^{\prime}(v^{\prime}(s))-\text{\sc layer}^{\prime}(i),\\
	\label{remain5}
	&=\text{\sc layer}(v^{\prime}(s))-\text{\sc layer}^{\prime}(i),\\
	\label{remain6}
	&=\text{\sc layer}(v(s))-k_{v(s),v^{\prime}(s)}-\text{\sc layer}^{\prime}(i),
\end{align}
where (\ref{remain4}) follows from (V5) and because $i\in{\bf v}^{\prime}(s)$.  To justify (\ref{remain5}), observe that Claim \ref{disjoint}.3 implies $v^{\prime}(s)$ is defined by (\ref{Case3v2s}) provided there is some $j\in{\bf v}(s)$ such that $\text{\sc layer}^{\prime}(j)=\text{\sc layer}(j)$.  Claim \ref{remain}.1 implies that such a $j\in{\bf v}(s)$ exists.  By construction in (\ref{Case3v2s}), $\text{\sc layer}^{\prime}(v^{\prime}(s))=\text{\sc layer}(v^{\prime}(s))$, so (\ref{remain5}) follows from (\ref{remain4}).  Finally, (\ref{remain6}) follows from (V5) and because (\ref{Case3v2s}) implies  $v^{\prime}(s)\in{\bf v}(s)$.  Since $i\in{\bf v}^{\prime}(s)$ and $v^{\prime}(s)\in{\bf v}(s)$ it follows that $k_{v(s),v^{\prime}(s)}+k_{v^{\prime}(s),i}=k_{v(s),i}$.  Therefore:
\begin{align}
	\label{remain7}
	k_{v(s),i}&=\text{\sc layer}(v(s))-\text{\sc layer}^{\prime}(i),
\end{align} 
where (\ref{remain7}) follows from (\ref{remain6}) and because $k_{v(s),v^{\prime}(s)}+k_{v^{\prime}(s),i}=k_{v(s),i}$.  Moreover,
\begin{align}
	\label{remain8}
	\text{\sc layer}^{\prime}(i)\geq\text{\sc layer}(i),
\end{align}
where (\ref{remain8}) follows from (\ref{shift2}).  Therefore:
\begin{align}
	\label{remain9}
	\text{\sc layer}(v(s))-\text{\sc layer}(i)&\geq k_{v(s),i},
\end{align}
where (\ref{remain9}) follows from substituting (\ref{remain8}) into (\ref{remain7}).  Now $i\in{\bf f}(v^{\prime}(s),d)$ since $i\in{\bf v}^{\prime}(s)$ and $v^{\prime}(s)\in{\bf f}(v(s),d)$ since (\ref{Case3v2s}) implies $v^{\prime}(s)\in{\bf v}(s)$.  Therefore $i\in{\bf f}(v(s),d)$.  Invoking (\ref{firstIndexCodingShiftedComplete2}) and (\ref{remain9}) together gives:
\begin{align}
	\label{remain91}
	\text{\sc layer}(v(s))-\text{\sc layer}(i)&=k_{v(s),i}.
\end{align}
Comparing (\ref{remain91}) and (\ref{remain7}) gives $\text{\sc layer}^{\prime}(i)=\text{\sc layer}(i)$.  
\end{proof}
To complete the proof of Lemma \ref{remain}, fix some $i\in A^{\prime}_{l}(S\setminus U)$.  It follows from (\ref{Aprime}) that $i\in{\bf v}^{\prime}(s)$ for some $s\in S\setminus U$.  Since $i\in{\bf v}^{\prime}(s)$, Claim \ref{remain}.2 implies that $\text{\sc layer}^{\prime}(i)=\text{\sc layer}(i)$.  Consider the following inequalities:
\begin{align}
	\nonumber
	&\text{\sc layer}(v(s))-\text{\sc layer}(i)\\
	\nonumber
	&\hspace{3mm}=\text{\sc layer}(v(s))-\text{\sc layer}(v^{\prime}(s))\\
	\label{remain10}
	&\hspace{32mm}+\text{\sc layer}(v^{\prime}(s))-\text{\sc layer}(i),\\
	\label{remain11}
	&\hspace{3mm}=k_{v(s),v^{\prime}(s)}+(\text{\sc layer}(v^{\prime}(s))-\text{\sc layer}(i)),\\
	\label{remain12}
	&\hspace{3mm}=k_{v(s),v^{\prime}(s)}+(\text{\sc layer}^{\prime}(v^{\prime}(s))-\text{\sc layer}^{\prime}(i)),\\
	\label{remain13}
	&\hspace{3mm}=k_{v(s),v^{\prime}(s)}+k_{v^{\prime}(s),i},\\
	\label{remain14}
	&\hspace{3mm}=k_{v(s),i},
\end{align}
where (\ref{remain10}) follows by adding and subtracting $\text{\sc layer}(v^{\prime}(s))$ from the left side, (\ref{remain11}) follows because (V5) implies $\text{\sc layer}(v(s))-\text{\sc layer}(v^{\prime}(s))=k_{v(s),v^{\prime}(s)}$ if $v^{\prime}(s)\in{\bf v}(s)$ and (\ref{Case3v2s}) implies $v^{\prime}(s)\in{\bf v}(s)$, (\ref{remain12}) follows because (\ref{Case3v2s}) implies $\text{\sc layer}^{\prime}(v^{\prime}(s))=\text{\sc layer}(v^{\prime}(s))$ and Claim \ref{remain}.2 implies $\text{\sc layer}^{\prime}(i)=\text{\sc layer}(i)$ if $i\in{\bf v}^{\prime}(s))$, and (\ref{remain13}) follows from (V5) which implies that $\text{\sc layer}^{\prime}(v^{\prime}(s))-\text{\sc layer}^{\prime}(i)=k_{v^{\prime}(s),i}$ if $i\in{\bf v}^{\prime}(s)$.  Since (\ref{remain14}) implies that $\text{\sc layer}(v(s))-\text{\sc layer}(i)=k_{v(s),i}$, it follows from (V5) that $i\in{\bf v}(s)$.  

Now $i\in A^{\prime}_{l}(S\setminus U)$, so (\ref{Aprime}) implies that $\text{\sc layer}^{\prime}(i)=l$.  Moreover, $i\in{\bf v}(s)$ for some $s\in S\setminus U$, so Claim \ref{remain}.2 implies $\text{\sc layer}^{\prime}(i)=\text{\sc layer}(i)$.  Therefore $\text{\sc layer}(i)=l$ which implies $i\in L_{l}$.  Since $i\in{\bf v}(s)$ and $s\in S\setminus U$ and $i\in L_{l}$, it follows from (\ref{A}) that $i\in A_{l}(S\setminus U)$.
\end{proof}

\begin{lemma}
\label{join}
	For every $S\subseteq{\cal S}(d)$ and $l\in\{0,\ldots,|{\bf L}_{d}|-1\}$, $A_{l-1}(S\cap U)\subseteq A^{\prime}_{l}(S\cap U)$
\end{lemma}
\begin{proof}
First, we introduce some preliminary claims.

\hspace{-3.5mm}{\bf Claim \ref{join}.1} \textit{Suppose $i\in{\bf v}(s)$ for some $s\in U$. If $v^{\prime}(s)=v(s)$ then $i\in{\bf v}^{\prime}(s)$.}

\begin{proof}
Fix $i\in{\bf v}(s)$ and consider the following inequalities:
\begin{align}
\nonumber
	&\hspace{-2mm}\text{\sc layer}^{\prime}(v^{\prime}(s))-\text{\sc layer}^{\prime}(i)\\
	\label{shiftedcomplete1}
	&\hspace{15mm}=\text{\sc layer}^{\prime}(v(s))-\text{\sc layer}^{\prime}(i),\\
	\label{shiftedcomplete2}
	&\hspace{15mm}=(\text{\sc layer}(v(s))+1)-(\text{\sc layer}(i)+1),\\
	\nonumber
	&\hspace{15mm}=\text{\sc layer}(v(s))-\text{\sc layer}(i),\\
	\label{shiftedcomplete4}
	&\hspace{15mm}=k_{v(s),i},\\
	\label{shiftedcomplete5}
	&\hspace{15mm}=k_{v^{\prime}(s),i},
\end{align}
where (\ref{shiftedcomplete1}) follows because $v^{\prime}(s)=v(s)$ is the premise of Claim \ref{join}.1, (\ref{shiftedcomplete2}) follows from (\ref{shift2}) and because $s\in U$ and $i\in{\bf v}(s)$, (\ref{shiftedcomplete4}) follows from (V5) and because $i\in{\bf v}(s)$, and (\ref{shiftedcomplete5}) follows because $v^{\prime}(s)=v(s)$ is the premise of Claim \ref{join}.1.  It follows from (V5) and (\ref{shiftedcomplete5}) that $i\in{\bf v}^{\prime}(s)$.	
\end{proof} 

\hspace{-3.5mm}{\bf Claim \ref{join}.2} \textit{Suppose $i\in{\bf v}(s)$ for some $s\in U$. If $v^{\prime}(s)\neq v(s)$ then $i\in{\bf v}^{\prime}(s)$.}

\begin{proof}
First we prove some preliminary sub-claims.

\hspace{-3.5mm}{\bf Claim \ref{join}.2.a} \textit{If $s\in U$ and $v^{\prime}(s)\neq v(s)$ then $v^{\prime}(s)\in{\bf f}(s,v(s))$.}
\begin{proof}
The proof is by contradiction.  Suppose $v^{\prime}(s)\in{\bf f}(v(s),d)$.  It is convenient to consider $\text{\sc layer}^{\prime}(v^{\prime}(s))>\text{\sc layer}(v^{\prime}(s))$ and $\text{\sc layer}^{\prime}(v^{\prime}(s))=\text{\sc layer}(v^{\prime}(s))$ separately.  Suppose $\text{\sc layer}^{\prime}(v^{\prime}(s))>\text{\sc layer}(v^{\prime}(s))$:
\begin{align}
	\nonumber
	&\hspace{-2mm}k_{v(s),v^{\prime}(s)}\\
	\label{shiftedcomplete6}	
	&\hspace{5mm}\geq\text{\sc layer}(v(s))-\text{\sc layer}(v^{\prime}(s)),\\
	\label{shiftedcomplete7}
	&\hspace{5mm}=(\text{\sc layer}^{\prime}(v(s))-1)-(\text{\sc layer}^{\prime}(v^{\prime}(s))-1),\\
	\label{shiftedcomplete8}
	&\hspace{5mm}=\text{\sc layer}^{\prime}(v(s))-\text{\sc layer}^{\prime}(v^{\prime}(s)),
\end{align}
where (\ref{shiftedcomplete6}) follows from (\ref{firstIndexCodingShiftedComplete2}) and because $v^{\prime}(s)\in{\bf f}(v(s),d)$, (\ref{shiftedcomplete7}) follows from (\ref{shift2}), $s\in U$, and the assumption $\text{\sc layer}^{\prime}(v^{\prime}(s))>\text{\sc layer}(v^{\prime}(s))$.  Since $v(s)\in{\bf f}(s,v^{\prime}(s))$, (\ref{shiftedcomplete8}) contradicts (\ref{secondIndexShiftedComplete}).  Now suppose $\text{\sc layer}^{\prime}(v^{\prime}(s))=\text{\sc layer}(v^{\prime}(s))$: 
\begin{align}
	\label{shiftedcomplete9}
	k_{v(s),v^{\prime}(s)}&\geq\text{\sc layer}(v(s))-\text{\sc layer}(v^{\prime}(s)),\\
	\label{shiftedcomplete10}
	&>\text{\sc layer}(v(s))-\text{\sc layer}(v^{\prime}(s)),\\
	\label{shiftedcomplete11}
	&=(\text{\sc layer}^{\prime}(v(s))-1)-\text{\sc layer}^{\prime}(v^{\prime}(s)),
\end{align}
where (\ref{shiftedcomplete9}) follows from (\ref{firstIndexCodingShiftedComplete2}). Since $\text{\sc layer}(v^{\prime}(s))=\text{\sc layer}^{\prime}(v^{\prime}(s))$, it follows from (\ref{shift2}) that $v^{\prime}(s)\notin{\bf v}(j)$ for any $j\in U$.  Therefore $v^{\prime}(s)\notin{\bf v}(s)$ since $s\in U$ is the premise of Claim \ref{join}.2.a.  From (V5), equality in (\ref{shiftedcomplete9}) occurs only if $v^{\prime}(s)\in{\bf v}(s)$.  Since $v^{\prime}(s)\notin{\bf v}(s)$, (\ref{shiftedcomplete10}) follows from (\ref{shiftedcomplete9}).  Finally, (\ref{shiftedcomplete11}) follows because (\ref{shift2}) implies $\text{\sc layer}^{\prime}(v(s))=\text{\sc layer}(v(s))+1$ since $s\in U$.  Consider the following inequalities:  
\begin{align}
	\label{shiftedcomplete12}
	\text{\sc layer}^{\prime}(v(s))-\text{\sc layer}^{\prime}(v^{\prime}(s))&<k_{v(s),v^{\prime}(s)}+1,\\
	\label{shiftedcomplete13}
	&\leq k_{v(s),v^{\prime}(s)},
\end{align}
where (\ref{shiftedcomplete12}) follows from (\ref{shiftedcomplete11}), and (\ref{shiftedcomplete13}) follows because the encoding delays are integer-valued.  Since $v(s)\in{\bf f}(s,v^{\prime}(s))$, (\ref{shiftedcomplete13}) contradicts (\ref{secondIndexShiftedComplete}).  It follows from (\ref{secondIndexShiftedComplete}), (\ref{shiftedcomplete8}), and (\ref{shiftedcomplete13}) that $v^{\prime}(s)\in{\bf f}(s,v(s))$.
\end{proof}

\hspace{-3.5mm}{\bf Claim \ref{join}.2.b}. \textit{If $s\in U$ and $v^{\prime}(s)\neq v(s)$ then $\text{\sc layer}^{\prime}(v^{\prime}(s))=\text{\sc layer}(v^{\prime}(s))$.} 

\begin{proof}
	The proof is by contradiction.  Suppose $\text{\sc layer}^{\prime}(v^{\prime}(s))\neq\text{\sc layer}(v^{\prime}(s))$.  Since (\ref{shift2}) implies $\text{\sc layer}(v^{\prime}(s))\leq\text{\sc layer}^{\prime}(v^{\prime}(s))\leq\text{\sc layer}(v^{\prime}(s))+1$, it follows that:
	\begin{align}
		\label{claim2b1}
		\text{\sc layer}^{\prime}(v^{\prime}(s))&=\text{\sc layer}(v^{\prime}(s))+1.
	\end{align}
	Consider the following inequalities:
	\begin{align}
	\label{claim2b2}
	k_{v^{\prime}(s),v(s)}&<\text{\sc layer}(v^{\prime}(s))-\text{\sc layer}(v(s)),\\
	\label{claim2b3}
	&=(\text{\sc layer}^{\prime}(v^{\prime}(s))-1)-\text{\sc layer}(v(s)),\\
	\nonumber
	&=(\text{\sc layer}^{\prime}(v^{\prime}(s))-1)\\
	\label{claim2b4}
	&\hspace{27mm}-(\text{\sc layer}^{\prime}(v(s))-1),\\
	\label{claim2b5}
	&=\text{\sc layer}^{\prime}(v^{\prime}(s))-\text{\sc layer}^{\prime}(v(s)),
\end{align}
where (\ref{claim2b2}) follows from (\ref{secondIndexShiftedComplete2}) and because $v^{\prime}(s)\in{\bf f}(s,v(s))$, (\ref{claim2b3}) follows from (\ref{claim2b1}), and (\ref{claim2b4}) follows from (\ref{shift2}) and because $s\in U$ is the premise of Claim \ref{join}.2.  Since Claim \ref{join}.1 implies $v^{\prime}(s)\in{\bf f}(s,v(s))$, it follows that $v(s)\in{\bf f}(v^{\prime}(s),d)$.  Therefore (\ref{claim2b5}) contradicts (\ref{firstIndexCodingShiftedComplete}).
\end{proof}

\hspace{-3.5mm}{\bf Claim \ref{join}.2.c}. \textit{If $s\in U$ and $v^{\prime}(s)\neq v(s)$ then $\text{\sc layer}^{\prime}(v^{\prime}(s))-\text{\sc layer}^{\prime}(v(s))=k_{v^{\prime}(s),v(s)}$.}   

\begin{proof}
Consider the following inequalities:
\begin{align}
	\label{shiftedcomplete14}
	k_{v^{\prime}(s),v(s)}&<\text{\sc layer}(v^{\prime}(s))-\text{\sc layer}(v(s)),\\
	\label{preshiftedcomplete15}
	&=\text{\sc layer}^{\prime}(v^{\prime}(s))-\text{\sc layer}(v(s)),\\
	\label{shiftedcomplete15}
	&=\text{\sc layer}^{\prime}(v^{\prime}(s))-(\text{\sc layer}^{\prime}(v(s))-1),
\end{align}
where (\ref{shiftedcomplete14}) follows from (\ref{secondIndexShiftedComplete2}) and because $v^{\prime}(s)\in{\bf f}(s,v(s))$, (\ref{preshiftedcomplete15}) follows from Claim \ref{join}.2.b, and (\ref{shiftedcomplete15}) follows from (\ref{shift2}) since $s\in U$.  Consider the following inequalities:
\begin{align}
\label{shiftedcomplete16}
\text{\sc layer}^{\prime}(v^{\prime}(s))-\text{\sc layer}^{\prime}(v(s))&>k_{v^{\prime}(s),v(s)}-1,\\
\label{shiftedcomplete17}
&\geq k_{v^{\prime}(s),v(s)},
\end{align}
where (\ref{shiftedcomplete16}) follows from (\ref{shiftedcomplete15}), (\ref{shiftedcomplete17}) follows from (\ref{shiftedcomplete16}) and because the encoding delays are integer-valued.  Since Claim \ref{join}.2.a implies $v^{\prime}(s)\in{\bf f}(s,v(s))$, it follows that $v(s)\in{\bf f}(v^{\prime}(s),d)$.  Invoking (\ref{firstIndexCodingShiftedComplete}) and (\ref{shiftedcomplete17}) together gives: 
\begin{align}
\label{shiftedcomplete18}
\text{\sc layer}^{\prime}(v^{\prime}(s))-\text{\sc layer}^{\prime}(v(s))=k_{v^{\prime}(s),v(s)},
\end{align}
which completes the proof of Claim \ref{join}.2.c.
\end{proof}
To complete the proof of Claim \ref{join}.2, fix $i\in{\bf v}(s)$ for some $s\in U$ and  consider the following inequalities:
\begin{align}
	\nonumber
	&\text{\sc layer}^{\prime}(v^{\prime}(s))-\text{\sc layer}^{\prime}(i)\\
	\nonumber
	&=\text{\sc layer}^{\prime}(v^{\prime}(s))-\text{\sc layer}^{\prime}(v(s))\\
	\label{preshiftedcomplete20}
	&\hspace{30mm}+\text{\sc layer}^{\prime}(v(s))-\text{\sc layer}^{\prime}(i),\\
	\nonumber
	&=\text{\sc layer}^{\prime}(v^{\prime}(s))-\text{\sc layer}^{\prime}(v(s))\\
	\label{shiftedcomplete20}
	&\hspace{14mm}+(\text{\sc layer}(v(s))-1)-(\text{\sc layer}(i)-1),\\
	\nonumber
	&=(\text{\sc layer}^{\prime}(v^{\prime}(s))-\text{\sc layer}^{\prime}(v(s)))\\
	\label{shiftedcomplete21}
	&\hspace{29mm}+(\text{\sc layer}(v(s))-\text{\sc layer}(i)),\\
	\label{shiftedjoin1}
	&=k_{v^{\prime}(s),v(s)}+(\text{\sc layer}(v(s))-\text{\sc layer}(i))\\
	\label{shiftedjoin2}
	&=k_{v^{\prime}(s),v(s)}+k_{v(s),i},\\
	\label{shiftedcomplete23}
	&=k_{v^{\prime}(s),i},
\end{align}
where (\ref{preshiftedcomplete20}) follows by adding and subtracting $\text{\sc layer}^{\prime}(v(^{\prime}(s))$ from the left-hand side of the equation, (\ref{shiftedcomplete20}) follows from (\ref{shift2}) since $s\in U$ and $i\in{\bf v}(s)$, (\ref{shiftedcomplete21}) follows from rearranging (\ref{shiftedcomplete20}), (\ref{shiftedjoin1}) follows from Claim \ref{join}.2.c, and (\ref{shiftedjoin2}) follows from (V5) since $i\in{\bf v}(s)$.  It follows from (\ref{shiftedcomplete23}) and (V5) that $i\in{\bf v}^{\prime}(s)$.
\end{proof}

To complete the proof of Lemma \ref{join}, fix $i\in A_{l-1}(S\cap U)$.  It follows from (\ref{A}) that $i\in{\bf v}(s)$ for some $s\in S\cap U$ and $\text{\sc layer}(i)=l-1$.  Since $i\in A_{l-1}(U)$, it follows from (\ref{shift2}) that $\text{\sc layer}^{\prime}(i)=\text{\sc layer}(i)+1=l$.  There are two cases to consider: $v^{\prime}(s)=v(s)$ and $v^{\prime}(s)\neq v(s)$.  In both cases Claim \ref{join}.1 and Claim \ref{join}.2 respectively show that $i\in{\bf v}^{\prime}(s)$.  Since $i\in{\bf v}^{\prime}(s)$ for some some $s\in S\cap U$ and $\text{\sc layer}(i)=l$, it follows from (\ref{Aprime}) that $i\in A^{\prime}_{l}(S\cap U)$.
\end{proof}



For the ${\bf R}\in{\cal R}_{d}({\bf F})$ chosen at the onset of Section \ref{prooftheoremone}, recall that $U\subseteq{\cal S}(d)$ is the largest set that violates (\ref{flowdecomp}).  Given this ${\bf R}$, let $Z\subseteq{\cal S}(d)$ denote the set that satisfies (\ref{flowdecomp}) for all $S\subseteq Z$ with respect to ${\bf L}_{d}$ and let $Z^{\prime}\subseteq{\cal S}(d)$ denote the set that satisfies (\ref{flowdecomp2}) for all $S\subseteq Z^{\prime}$ with respect to ${\bf L}^{\prime}_{d}$, where ${\bf L}_{d}$ and ${\bf L}^{\prime}_{d}$ are related through (\ref{shift2}).  


\begin{lemma}
	\label{induction}
	$Z^{\prime}=({\cal S}(d)\setminus U)\cup Z$.
\end{lemma}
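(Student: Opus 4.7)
The strategy is to verify both inclusions for $V'$ by analyzing how the typicality sum in (\ref{flowdecomp}) transforms under the shift by $U$. A preliminary step rules out $\delta(U)=0$: if $(F,\bar{L}_d)$ bifurcated $F_d({\cal S})$ into $F_d(U)$, then a telescoping identical to the one used in Lemma~\ref{throughput} (with $U$ in place of ${\cal S}$) would force the decomposition constraint (\ref{flowdecomp}) at $S=U$ to coincide with the outer-bound constraint (\ref{biggiewiggie}) at $S=U$. The latter is satisfied by $\bar{R}\in{\cal R}_d(F)$, contradicting the assumption that $U$ violates (\ref{flowdecomp}). Hence $\delta(U)=1$ and every node in $A_l(U)$ truly moves from layer $l$ to layer $l+1$ under the shift.

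The heart of the proof is a bookkeeping identity that expresses the new sum $\sum_{l} I(X_{A'_l(T)};Y_d\mid X_{\tilde A'_l(T)})$ as a chain-rule rearrangement of the old sums indexed by $T\cup U$ and by $U\setminus T$. Roughly, promoting $A_l(U)$ from layer $l$ to layer $l+1$ converts the new constraint for $T$ into a \emph{``jointly-decode $T\cup U$ then peel off $U\setminus T$''} version of the old constraint, because the nodes in $A_l(U)$ are now conditioned on alongside $A_{l+1}({\cal S})\setminus A_{l+1}(U)$ rather than being decoded with the rest of $A_l({\cal S})$. Granting such an identity, the forward inclusion $({\cal S}\setminus U)\cup V\subseteq V'$ follows easily: for any $T\subseteq ({\cal S}\setminus U)\cup V$, the set $T\cup U$ strictly contains $U$ so it is not a violator by maximality of $U$, and $T\cap U\subseteq V$ ensures that the contributions from $U\setminus T$ cancel cleanly against a good (non-violating) subset of $V$, yielding the new constraint for $T$.

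The reverse inclusion $V'\subseteq ({\cal S}\setminus U)\cup V$ is then the contrapositive: if $T\cap(U\setminus V)\neq\emptyset$, then $T\cap U$ is a subset of $U$ not contained in $V$ and therefore violates the old (\ref{flowdecomp}); the same identity transports this violation into a violating subset of $T$ under the new decomposition, contradicting $T\subseteq V'$. The main obstacle is proving the bookkeeping identity as an exact equality rather than just an inequality: the chain-rule rearrangement has to handle the conditioning interaction between the lifted set $A_l(U)$ and the stationary set $A_{l+1}({\cal S})\setminus A_{l+1}(U)$, track the (possibly new) virtual flows $g'(s,d)$ induced by the relayered $\bar{L}'_d$, and absorb any empty intermediate layers introduced by the shift without losing or double-counting any mutual-information term.
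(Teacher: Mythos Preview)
Your sketch diverges from the paper in a way that creates a real gap. The paper does \emph{not} attempt an exact bookkeeping identity; it argues entirely via one-sided inequalities and contradiction. It fixes an $S\subseteq({\cal S}\setminus U)\cup V$ that is assumed to violate the new constraint (\ref{flowdecomp}) for $(F,\bar L'_d)$, and then splits into two cases. If $S\cap U=\emptyset$, it adds the violated new inequality for $S$ to the violated old inequality for $U$ and, through the chain (\ref{lemmaU1})--(\ref{lemmaU4}) (using $A_l(S)\setminus A_l(U)\subseteq A'_l(S)$ and $\tilde A_l(S)\setminus A_l(U)\subseteq\tilde A'_l(S)$), concludes that $U\cup S$ violates the \emph{old} constraint, contradicting maximality of $U$. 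If $S\cap U\neq\emptyset$, it first applies Case~1 to $S\setminus U$ to get the non-violated inequality (\ref{lemmaVprep1}), subtracts it from (\ref{lemmaVprep2}) via (\ref{lemmaV6})--(\ref{lemmaV5}), and obtains that $S\cap U$ violates the old constraint; since $S\cap U\subseteq V$, this contradicts the definition of $V$. No equality is ever needed, and the changing virtual flows are absorbed by the inequalities rather than tracked exactly.

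Your forward-inclusion argument has a concrete flaw: the claim that $T\cup U$ \emph{strictly} contains $U$ is false whenever $T\subseteq U$ (which is possible since $T\subseteq({\cal S}\setminus U)\cup V$ allows $T\subseteq U\cap V$). In that case $T\cup U=U$, which \emph{is} the violator, so ``maximality of $U$'' gives you nothing. This is precisely the situation the paper's Case~2 handles by shifting attention from $T\cup U$ to $T\cap U\subseteq V$. Your reverse-inclusion argument is also unsupported: $T\cap U\not\subseteq V$ does not by itself imply that $T\cap U$ violates the old constraint, since $V$ is defined only by the property that all of \emph{its} subsets are non-violators. In fact the paper's proof establishes only the inclusion $({\cal S}\setminus U)\cup V\subseteq V'$, which is all that Lemma~\ref{grandfinale} uses; you do not need the reverse direction. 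Finally, your preliminary step forcing $\delta(U)=1$ implicitly assumes completeness of $(F,\bar L_d)$, which is not among the hypotheses of Lemma~\ref{induction}; the paper's inequality argument goes through regardless of $\delta(U)$.
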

\begin{proof}
	The proof is by contradiction.  Suppose there is some $S\subseteq({\cal S}\setminus U)\cup Z$ that violates (\ref{flowdecomp2}) for $({\bf F},{\bf L}^{\prime}_{d})$.  Then:
	\begin{align}
		\label{lemmaVprep2}
		R_{S}&>\sum^{|{\bf  L}^{\prime}_{d}|-1}_{l=0}I(X_{A^{\prime}_{l}(S)};Y_{d}|X_{\tilde{A}^{\prime}_{l}(S)}).
	\end{align}
	To simplify the analysis, we first examine the case $S\cap U=\{\}$ and then consider $S\cap U\neq\{\}$.
	
	\hspace{-3.5mm}{\bf Case 1:}  $S\cap U=\{\}$.
	
	We will prove the following sequence of inequalities:
	\begin{align}
		\nonumber
		R_{U\cup S}&>\sum^{|{\bf L}_{d}|-1}_{k=0}I(X_{A_{k}(U)};Y_{d}|X_{\tilde{A}_{k}(U)})\\
		\label{lemmaU1}
		&\hspace{20mm}+\sum^{|{\bf L}^{\prime}_{d}|-1}_{l=0}I(X_{A^{\prime}_{l}(S)};Y_{d}|X_{\tilde{A}^{\prime}_{l}(S)}),\\
		\nonumber
		&\geq\sum^{|{\bf L}_{d}|-1}_{k=0}I(X_{A_{k}(U)};Y_{d}|X_{\tilde{A}_{k}(U)})\\
		\label{lemmaU2}
		&\hspace{3mm}+\sum^{|{\bf L}_{d}|-1}_{l=0}I(X_{A_{l}(S)\setminus A_{l}(U)};Y_{d}|X_{\tilde{A}_{l}(S)\setminus A_{l}(U)}),\\
		\nonumber
		&=\sum^{|{\bf L}_{d}|-1}_{k=0}(I(X_{A_{k}(U)};Y_{d}|X_{\tilde{A}_{k}(U)})\\
		\label{lemmaU3}
		&\hspace{8mm}+I(X_{A_{k}(S)\setminus A_{k}(U)};Y_{d}|X_{\tilde{A}_{k}(S)\setminus A_{k}(U)})),\\
		\label{lemmaU4}
		&=\sum^{|{\bf L}_{d}|-1}_{k=0}I(X_{A_{k}(U)\cup A_{k}(S)};Y_{d}|X_{\tilde{A}_{k}(S)\setminus A_{k}(U)}),\\
		\label{lemmaU5}
		&=\sum^{|{\bf L}_{d}|-1}_{k=0}I(X_{A_{k}(U\cup S)};Y_{d}|X_{\tilde{A}_{k}(U\cup S)}),
	\end{align}
which contradicts the assumption that $U$ is the largest subset that violates (\ref{flowdecomp}) since $S$ and $U$ are disjoint.  Define:
\begin{align}
	\label{induction3}
	F_{I(d)}&:=\cup_{j\in I(d)}F_{j}({\cal N})\setminus F_{d}({\cal N}),
\end{align}
where $I(d)$ is defined in (\ref{id}).  To justify (\ref{lemmaU1}) - (\ref{lemmaU5}), observe that $\tilde{A}^{\prime}_{l}(S)$ defined in (\ref{Atildeprime}) can be rewritten as follows for any $S\subseteq{\cal S}(d)$:
\begin{align}
	\label{induction1}
	\tilde{A}^{\prime}_{l}(S)&=F_{I(d)}\cup F_{d}({\cal N})\setminus(\cup^{|{\bf L}^{\prime}_{d}|-1}_{q=l+1}L^{\prime}_{q}\cup A^{\prime}_{l}(S)),\\
	\label{induction2}
	&=F_{I(d)}\cup(\cup^{l}_{q=0}L^{\prime}_{q})\setminus A^{\prime}_{l}(S),
\end{align}
where (\ref{induction1}) follows from (\ref{Atildeprime}) and (\ref{induction3}), and (\ref{induction2}) follows because $\cup^{|{\bf L}^{\prime}_{d}|-1}_{q=0}L^{\prime}_{q}=F_{d}({\cal N})$.  Similarly, $\tilde{A}_{l}(S)$ defined in (\ref{Atilde}) can be rewritten as follows for any $S\subseteq{\cal S}(d)$:
\begin{align}
	\label{induction7}
	\tilde{A}_{l}(S)&=F_{I(d)}\cup(\cup^{l}_{q=0}L_{q})\setminus A_{l}(S).
\end{align}  

Now observe that (\ref{lemmaU1}) follows from (\ref{lemmaVprep2}) and the definition of $U$.  To justify (\ref{lemmaU2}) we prove the following claims:

\hspace{-3.5mm}{\bf Claim \ref{induction}.1}  \textit{If $S_{1}\subseteq{\cal N}$, $S_{3}\subseteq S_{2}\subseteq{\cal N}$ and $S_{1}\cap S_{2}=\{\}$ then:}
\begin{align}
	\nonumber
	&\hspace{-2.5mm}I(X_{S_{1}};Y_{d}|X_{S_{2}})+I(X_{S_{3}};Y_{d}|X_{S_{2}\setminus S_{3}})\\
	&\hspace{35mm}=I(X_{S_{1}\cup S_{3}};Y_{d}|X_{S_{2}\setminus S_{3}}).
\end{align}
\begin{proof}
	Apply the chain rule.
\end{proof}

\hspace{-3.5mm}{\bf Claim \ref{induction}.2}  \textit{If $S_{1}\subseteq S_{2}\subseteq S_{3}\subseteq{\cal N}$ then:} 
\begin{align}
\label{induction9}
I(X_{S_{2}};Y_{d}|X_{S_{3}\setminus S_{2}})&\geq I(X_{S_{1}};Y_{d}|X_{S_{3}\setminus S_{1}}).
\end{align}

\begin{proof}
Consider the following inequalities:
\begin{align}
	\nonumber
	&\hspace{-2.5mm}I(X_{S_{1}};Y_{d}|X_{S_{3}\setminus S_{1}})+I(X_{S_{2}\setminus S_{1}};Y_{d}|X_{S_{3}\setminus(S_{1}\cup S_{2})})\\
	\label{preinduction4}
	&\hspace{21mm}=I(X_{S_{1}\cup(S_{2}\setminus S_{1})};Y_{d}|X_{S_{3}\setminus(S_{1}\cup S_{2})}),\\
	\label{induction4}
	&\hspace{21mm}=I(X_{S_{1}\cup(S_{2}\setminus S_{1})};Y_{d}|X_{S_{3}\setminus S_{2}}),\\
	\label{induction5}
	&\hspace{21mm}=I(X_{S_{1}\cup S_{2}};Y_{d}|X_{S_{3}\setminus S_{2}}),\\
	\label{induction6},
	&\hspace{21mm}=I(X_{S_{2}};Y_{d}|X_{S_{3}\setminus S_{2}}).
\end{align}
To justify (\ref{preinduction4}), let $S^{\prime}_{1}:=S_{1}$, $S^{\prime}_{2}:=S_{3}\setminus S_{1}$, and $S^{\prime}_{3}:=S_{2}\setminus S_{1}$.  First we verify that the premises of Claim \ref{induction}.1 are satisfied.  Since $S_{1}\subseteq S_{2}\subseteq S_{3}$, it follows that $(S_{2}\setminus S_{1})\subseteq(S_{3}\setminus S_{1})$, which implies $S^{\prime}_{3}\subseteq S^{\prime}_{2}$.  Furthermore, $S_{1}\cap(S_{3}\setminus S_{1})=\{\}$ which implies $S^{\prime}_{1}\cap S^{\prime}_{2}=\{\}$.  Finally, $S^{\prime}_{2}\setminus S^{\prime}_{3}=(S_{3}\setminus S_{1})\setminus (S_{2}\setminus S_{1})=S_{3}\setminus(S_{1}\cup(S_{2}\setminus S_{1}))=S_{3}\setminus(S_{1}\cup S_{2})$.  Now we can invoke Claim \ref{induction}.1 to justify (\ref{preinduction4}).  Since $S_{1}\subseteq S_{2}$, (\ref{induction4}) follows from (\ref{preinduction4}) and the fact that $S_{1}\subseteq S_{2}$ and (\ref{induction6}) follows from (\ref{induction5}) and the same fact.  By the non-negativity of mutual-information, $I(X_{S_{2}\setminus S_{1}};Y_{d}|X_{S_{3}\setminus(S_{1}\cup S_{2})})\geq0$.  Therefore (\ref{induction6}) implies (\ref{induction9}) which completes the proof of Claim \ref{induction}.2.
\end{proof}

\hspace{-3.5mm}{\bf Claim \ref{induction}.3} \textit{For any $S\subseteq{\cal S}(d)$ and every $l\in 0,\ldots,|{\bf L}_{d}|-1$:}
\begin{align}
\nonumber
&\hspace{-2mm}I(X_{A^{\prime}_{l}(S)};Y_{d}|X_{\tilde{A}^{\prime}_{l}(S)})\\
\label{induction8}
&\hspace{19mm}\geq I(X_{A_{l}(S)\setminus A_{l}(U)};Y_{d}|X_{\tilde{A}_{l}(S)\setminus A_{l}(U)})
	\end{align}	
	
\begin{proof}
To justify (\ref{induction8}), we invoke Claim \ref{induction}.2.  First we verify that the premises of Claim \ref{induction}.2 are satisfied.  Let  $S_{1}:=A_{l}(S)\setminus A_{l}(U)$, $S_{2}:=A^{\prime}_{l}(S)$, and $S_{3}:=F_{I(d)}\cup(\cup^{l}_{q=0}L^{\prime}_{q})$.  Lemma \ref{disjoint} implies $S_{1}\subseteq S_{2}$.  It follows from (\ref{Aprime}) that $A^{\prime}_{l}(S)\subseteq L^{\prime}_{l}$.  Therefore $S_{2}\subseteq S_{3}$.  Moreover, (\ref{induction2}) implies $\tilde{A}^{\prime}_{l}(S):=F_{I(d)}\cup(\cup^{l}_{q=0}L^{\prime}_{q})\setminus A^{\prime}_{l}(S)=S_{3}\setminus S_{2}$.  In addition:
\begin{align}
\label{preinduction14}
S_{3}\setminus S_{1}=F_{I(d)}\cup(\cup^{l}_{q=0}L^{\prime}_{q})\setminus(A_{l}(S)\setminus A_{l}(U)).	
\end{align}
For every $l=0,\ldots,|{\bf L}_{d}|-1$, (\ref{shift2}) implies:
\begin{align}
\label{induction14}
\cup^{l}_{q=0}L^{\prime}_{q}=\cup^{l}_{q=0}L_{q}\setminus A_{l}(U).	
\end{align}
Substituting (\ref{induction14}) into (\ref{preinduction14}) gives $S_{3}\setminus S_{1}=F_{I(d)}\cup(\cup^{l}_{q=0}L_{q}\setminus A_{l}(U))\setminus(A_{l}(S)\setminus A_{l}(U))=F_{I(d)}\cup(\cup^{l}_{q=0}L_{q})\setminus(A_{l}(S)\cup A_{l}(U))$.  It follows that $S_{3}\setminus S_{1}=\tilde{A}_{l}(S)\setminus A_{l}(U)$, since (\ref{induction7}) implies $\tilde{A}_{l}(S):=F_{I(d)}\cup(\cup^{l}_{q=0}L_{q})\setminus A_{l}(S)$.  Therefore (\ref{induction8}) follows from Claim \ref{induction}.2.
\end{proof}

To justify (\ref{lemmaU2}), invoke Claim \ref{induction}.3 for $l=0,\ldots,|{\bf L}_{d}|-1$ and observe that (\ref{shift2}) implies $|{\bf L}_{d}|\leq|{\bf L}^{\prime}_{d}|$.  Since mutual informations are always non-negative, (\ref{lemmaU2}) is justified.  

\hspace{-3.5mm}{\bf Claim \ref{induction}.4} \textit{For any $S\subseteq{\cal S}(d)$ and every $l\in 0,\ldots,|{\bf L}_{d}|-1$:}
\begin{align}
\nonumber
&\hspace{-4mm}I(X_{A_{k}(U)};Y_{d}|X_{\tilde{A}^{\prime}_{k}(U)})\\
\nonumber
&\hspace{13mm}+I(X_{A_{k}(S)\setminus A_{k}(U)};Y_{d}|X_{\tilde{A}_{k}(S)\setminus A_{k}(U)})\\
\label{induction8a}
&\hspace{13mm}=I(X_{A_{k}(U)\cup A_{k}(S)};Y_{d}|X_{\tilde{A}_{k}(S)\setminus A_{k}(U)})
	\end{align}
\begin{proof}
To prove (\ref{induction8a}), we invoke Claim 6.1.  First, we verify that the premises of Claim 6.1 are satisfied.  Let $S_{1}:=A_{k}(U)$, $S_{2}:=\tilde{A}_{k}(U)$, and $S_{3}:=A_{k}(S)\setminus A_{k}(U)$.  From (\ref{induction7}):
\begin{align}
	\label{induction8b}
	\tilde{A}_{k}(U)=F_{I(d)}\cup(\cup^{k}_{q=0}L_{q})\setminus A_{k}(U).
\end{align}
Since $A_{k}(S)\subseteq L_{k}$, (\ref{induction8b}) implies $S_{3}\subseteq S_{2}$.  By inspection, $S_{1}\cap S_{2}=\{\}$.  Finally,
\begin{align}
	\nonumber
	&\hspace{-3mm}S_{2}\setminus S_{3}\\
	\nonumber
	&\hspace{-3mm}=\tilde{A}_{k}(U)\setminus (A_{k}(S)\setminus A_{k}(U)),\\
	\nonumber
	&\hspace{-3mm}=(F_{I(d)}\cup(\cup^{l}_{q=0}L_{q})\setminus A_{k}(U))\setminus(A_{k}(S)\setminus A_{k}(U)),\\
	\nonumber
	&\hspace{-3mm}=(F_{I(d)}\cup(\cup^{l}_{q=0}L_{q}))\setminus(A_{k}(S)\cup A_{k}(U)),\\
	\nonumber
	\nonumber
	&\hspace{-3mm}=(F_{I(d)}\cup(\cup^{l}_{q=0}L_{q})\setminus A_{k}(S))\setminus A_{k}(U),\\
	\label{induction8d}
	&\hspace{-3mm}=\tilde{A}_{k}(S)\setminus A_{k}(U),
\end{align}
where the equalities follow from normal set-theoretic operations and (\ref{induction8b}).  Furthermore,
	\begin{align}
		\nonumber
		S_{1}\cup S_{3}&=A_{k}(U)\cup(A_{k}(S)\setminus A_{k}(U))\\
		\label{induction8c}
		&=A_{k}(U)\cup A_{k}(S).
	\end{align}
	The premises of Claim 6.1 are satisfied in (\ref{induction8a}),  (\ref{induction8d}) and  (\ref{induction8c}). 
\end{proof}
Relabeling (\ref{lemmaU2}) gives (\ref{lemmaU3}).  To justify (\ref{lemmaU4}), invoke Claim 6.4 for $l=0,\ldots,|{\bf L}_{d}|-1$.  Now (\ref{A}) implies $i\in A_{k}(S\cup U)$ if and only if $i\in A_{k}(S)\cup A_{k}(U)$.  Therefore $A_{k}(S\cup U)=A_{k}(S)\cup A_{k}(U)$.  In addition:
\begin{align}
	\nonumber
	\tilde{A}_{k}(S)\setminus A_{k}(U)&=(F_{I(d)}\cup(\cup^{k}_{q=0}L_{q})\setminus A_{k}(S))\setminus A_{k}(U),\\
	\nonumber
	&=(F_{I(d)}\cup(\cup^{k}_{q=0}L_{q}))\setminus(A_{k}(S)\cup A_{k}(U)),\\
	\nonumber
	&=(F_{I(d)}\cup(\cup^{k}_{q=0}L_{q}))\setminus A_{k}(S\cup U),\\
	\label{induction8e}
	&=\tilde{A}_{k}(S\cup U),
\end{align} 
where the equalities follow from normal set-theoretic operations.  Applying (\ref{induction8e}) to (\ref{lemmaU4}) gives (\ref{lemmaU5}), which proves Lemma \ref{induction} for Case 1.


\hspace{-3.5mm}{\bf Case 2:}  $S\cap U\neq\{\}$.  

Since $(S\setminus U)\cap U=\{\}$, Case 1 implies:
\begin{align}
\label{lemmaVprep1}
	R_{S\setminus U}&<\sum^{|{\bf L}^{\prime}_{d}|-1}_{l=0}I(X_{A^{\prime}_{l}(S\setminus U)};Y_{d}|X_{\tilde{A}^{\prime}_{l}(S\setminus U)}).
\end{align}
We will prove the following sequence of inequalities:
\begin{align}
		\nonumber
		&\hspace{-2.7mm}R_{S\cap U}\\
		\label{lemmaV1}
		&\hspace{-2.5mm}>\sum^{|{\bf L}^{\prime}_{d}|-1}_{l=0}I(X_{A^{\prime}_{l}(S\cap U)\setminus A^{\prime}(S\setminus U)};Y_{d}|X_{\tilde{A}^{\prime}_{l}(S\cap U)\setminus A^{\prime}_{l}(S\setminus U)}),\\
		\label{lemmaV2}
		&\hspace{-2.5mm}\geq\sum^{|{\bf L}_{d}|-1}_{l=0}I(X_{A_{l}(S\cap U)};Y_{d}|X_{\tilde{A}_{l}(S\cap U)}).
\end{align}
Since $S\subseteq({\cal S}(d)\setminus U)\cup Z$ by hypothesis, it follows that $(S\cap U)\subseteq Z$.  Therefore (\ref{lemmaV2}) contradicts the assumption that all subsets of $Z$ satisfy (\ref{flowdecomp}) for $({\bf F},{\bf L}_{d})$.  First we show (\ref{lemmaV1}).

\hspace{-3.5mm}{\bf Claim \ref{induction}.5}   \textit{For any $S\subseteq{\cal S}(d)$:}
\begin{align}
	\nonumber
	R_{S\cap U}>\sum^{|{\bf L}^{\prime}_{d}|-1}_{l=0}I(X_{A^{\prime}_{l}(S\cap U)\setminus A^{\prime}(S\setminus U)};Y_{d}|X_{\tilde{A}^{\prime}_{l}(S\cap U)\setminus A^{\prime}_{l}(S\setminus U)})
\end{align}

\begin{proof}
Fix any $l\in\{0,\ldots,|{\bf L}^{\prime}_{d}|-1\}$ and consider the following sequence of equalities:
\begin{align}
\nonumber
&\hspace{-3mm}I(X_{A^{\prime}_{l}(S\setminus U)};Y_{d}|X_{\tilde{A}^{\prime}_{l}(S\setminus U)})\\
   \nonumber
	&\hspace{3mm}+I(X_{A^{\prime}_{l}(S\cap U)\setminus A^{\prime}_{l}(S\setminus U)};Y_{d}|X_{\tilde{A}^{\prime}_{l}(S\cap U)\setminus A^{\prime}_{l}(S\setminus U)})\\
	\label{induction10}
	&\hspace{3mm}= I(X_{A^{\prime}_{l}(S\setminus U)\cup A^{\prime}_{l}(S\cap U)};Y_{d}|X_{\tilde{A}^{\prime}_{l}(S\cap U)\setminus A^{\prime}_{l}(S\setminus U)}),\\
	\label{induction11}
	&\hspace{3mm}= I(X_{A^{\prime}_{l}(S)};Y_{d}|X_{\tilde{A}^{\prime}_{l}(S)}).
\end{align}
To justify (\ref{induction10}), we invoke Claim \ref{induction}.1.  First, we verify that  the premises of Claim \ref{induction}.1 are satisfied.  Let $S_{1}:=A^{\prime}_{l}(S\setminus U)$, $S_{2}:=\tilde{A}^{\prime}_{l}(S\setminus U)$, and $S_{3}:=A^{\prime}_{l}(S\cap U)\setminus A^{\prime}_{l}(S\setminus U)$.  By inspection $S_{1}\cap S_{2}=\{\}$.  Observe that (\ref{Aprime}) implies $A^{\prime}_{l}(S\cap U)\subseteq L^{\prime}_{l}$ and (\ref{induction2}) implies $\tilde{A}^{\prime}_{l}(S\setminus U):=F_{I(d)}\cup(\cup^{l}_{q=0}L^{\prime}_{q})\setminus A^{\prime}_{l}(S\setminus U)$.  It follows that: 
\begin{align}
	\label{induction16}
	A^{\prime}_{l}(S\cap U)\setminus A^{\prime}_{l}(S\setminus U)\subseteq\tilde{A}^{\prime}_{l}(S\setminus U),
\end{align}
which implies $S_{3}\subseteq S_{2}$.  Moreover,  
\begin{align}
	\nonumber
	&\hspace{-9mm}S_{2}\setminus S_{3}\\
	\nonumber
	&\hspace{3mm}\hspace{-9mm}=\tilde{A}^{\prime}_{l}(S\setminus U)\setminus (A^{\prime}_{l}(S\cap U)\setminus A^{\prime}_{l}(S\setminus U))\\
	\nonumber
	&\hspace{3mm}\hspace{-9mm}=(F_{I(d)}\cup(\cup^{l}_{q=0}L^{\prime}_{q})\setminus A^{\prime}_{l}(S\setminus U))\\
	\nonumber
	&\hspace{3mm}\hspace{-9mm}\hspace{30mm}\setminus(A^{\prime}_{l}(S\cap U)\setminus A^{\prime}_{l}(S\setminus U)),\\
	\nonumber
	&\hspace{3mm}\hspace{-9mm}=F_{I(d)}\cup(\cup^{l}_{q=0}L^{\prime}_{q})\setminus(A^{\prime}_{l}(S\setminus U)\cup A^{\prime}_{l}(S\cap U)),\\
	\nonumber
	&\hspace{3mm}\hspace{-9mm}=(F_{I(d)}\cup(\cup^{l}_{q=0}L^{\prime}_{q})\setminus A^{\prime}_{l}(S\cap U))\setminus A^{\prime}_{l}(S\setminus U),\\
	\label{induction17}
	&\hspace{3mm}\hspace{-9mm}=\tilde{A}^{\prime}_{l}(S\cap U)\setminus A^{\prime}_{l}(S\setminus U),
\end{align}
where the equalities follow from normal set-theoretic operations and (\ref{induction2}).  Furthermore, 
\begin{align}
	\nonumber
	S_{1}\cup S_{3}&=A^{\prime}_{l}(S\setminus U)\cup(A^{\prime}_{l}(S\cap U)\setminus A^{\prime}_{l}(S\setminus U))\\
	\label{induction18}
	&=A^{\prime}_{l}(S\cap U)\cup A^{\prime}_{l}(S\setminus U).
\end{align}
Since the premises of Claim \ref{induction}.1 are satisfied in (\ref{induction16}), (\ref{induction17}), and (\ref{induction18}), invoking Claim \ref{induction}.1 proves (\ref{induction10}). 
\begin{align}
\nonumber
&\tilde{A}^{\prime}_{l}(S\cap U)\setminus A^{\prime}_{l}(S\setminus U)\\
\label{induction12}
&=(F_{I(d)}\cup(\cup^{l}_{q=0}L^{\prime}_{q})\setminus A^{\prime}_{l}(S\cap U))\setminus A^{\prime}_{l}(S\setminus U),\\
\label{induction13}
&=F_{I(d)}\cup(\cup^{l}_{q=0}L^{\prime}_{q})\setminus(A^{\prime}_{l}(S\cap U)\cup A^{\prime}_{l}(S\setminus U)),\\
\label{postinduction13}
&=\tilde{A}^{\prime}_{l}(S),
\end{align}
where (\ref{induction12}) follows from (\ref{induction2}) and (\ref{induction13}) follows from normal set-theoretic operations.  The definition of (\ref{Aprime}) implies that $i\in A^{\prime}_{l}(S)$ if and only if $i\in A^{\prime}_{l}(S\setminus U)\cup A^{\prime}_{l}(S\cap U)$.  Therefore $A^{\prime}_{l}(S)=A^{\prime}_{l}(S\setminus U)\cup A^{\prime}_{l}(S\cap U)$ which implies (\ref{postinduction13}). Therefore (\ref{induction11}) follows from (\ref{postinduction13}).  Since  $R_{S}=R_{S\setminus U}+R_{S\cap U}$, Claim \ref{induction}.4 follows from (\ref{lemmaVprep2}), (\ref{lemmaVprep1}), and  (\ref{induction11}).
\end{proof} 
We invoke Claim \ref{induction}.5 to justify (\ref{lemmaV1}).  To prove (\ref{lemmaV2}), we introduce more preliminary claims.
 
 \hspace{-3.5mm}{\bf Claim \ref{induction}.6}   \textit{For any $S\subseteq{\cal S}(d)$ and every $l=0,\ldots,|{\bf L}^{\prime}_{d}|-1$:}
\begin{align}
\nonumber
	&I(X_{A^{\prime}_{l}(S\cap U)\setminus A^{\prime}_{l}(S\setminus U)};Y_{d}|X_{\tilde{A}^{\prime}_{l}(S\cap U)\setminus A^{\prime}_{l}(S\setminus U)})\geq\\
	\label{induction19}
	&\hspace{15mm}I(X_{A_{l-1}(S\cap U)};Y_{d}|X_{\tilde{A}^{\prime}_{l}(S\setminus U)\setminus A_{l-1}(S\cap U)})
\end{align}
\begin{proof}
We invoke Claim \ref{induction}.2 to justify Claim \ref{induction}.6.  First we verify that the premises of Claim \ref{induction}.2 are satisfied.  Let $S_{1}:=A_{l-1}(S\cap U)$, $S_{2}:=A^{\prime}_{l}(S\cap U)\setminus A^{\prime}_{l}(S\setminus U)$, and $S_{3}:=\tilde{A}^{\prime}_{l}(S\setminus U)$.  First, we show that $S_{3}\setminus S_{2}=\tilde{A}^{\prime}_{l}(S\cap U)\setminus A^{\prime}_{l}(S\setminus U)$.  Consider the following sequence of equalities:
\begin{align}
	\nonumber
	&\tilde{A}^{\prime}_{l}(S\cap U)\setminus A^{\prime}_{l}(S\setminus U)\\
	\label{induction15}
	&\hspace{3mm}=F_{I(d)}\cup(\cup^{l}_{q=0}L^{\prime}_{q})\setminus(A^{\prime}_{l}(S\cap U)\cup A^{\prime}_{l}(S\setminus U))\\
	\nonumber
	&\hspace{3mm}=(F_{I(d)}\cup(\cup^{l}_{q=0}L^{\prime}_{q})\setminus A^{\prime}_{l}(S\setminus U))\\
	\nonumber
	&\hspace{3mm}\hspace{30mm}\setminus(A^{\prime}_{l}(S\cap U)\setminus A^{\prime}_{l}(S\setminus U)),\\
	\nonumber
	&\hspace{3mm}=\tilde{A}^{\prime}_{l}(S\setminus U)\setminus(A^{\prime}_{l}(S\cap U)\setminus A^{\prime}_{l}(S\setminus U))\\
	\label{induction20}
	&\hspace{3mm}=S_{3}\setminus S_{2}
\end{align}
where (\ref{induction15}) follows from (\ref{induction13}) and (\ref{induction20}) follows from (\ref{induction2}) and standard set-theoretic operations.  
Next we show that $S_{1}\subseteq S_{2}$.  It follows from Lemma \ref{join} that $A_{l-1}(S\cap U)\subseteq A^{\prime}_{l}(S\cap U)$.  Moreover, Lemma \ref{remain} implies $A^{\prime}_{l}(S\setminus U)\subseteq A_{l}(S\setminus U)$.  Now (\ref{A}) implies $A_{l-1}(S\cap U)\subseteq L_{l-1}$ and $A_{l}(S\setminus U)\subseteq L_{l}$.  Moreover, (L2) implies $L_{l-1}\cap L_{l}=\{\}$.  It follows that $A_{l-1}(S\cap U)\cap A^{\prime}_{l}(S\setminus U)=\{\}$.  Therefore $A_{l-1}(S\cap U)\subseteq A^{\prime}_{l}(S\cap U)\setminus A^{\prime}_{l}(S\setminus U)$, or equivalently $S_{1}\subseteq S_{2}$.  Finally, we show $S_{2}\subseteq S_{3}$.  Since (\ref{Aprime}) implies $A^{\prime}_{l}(S\cap U)\subseteq L^{\prime}_{l}$ it follows that $A^{\prime}_{l}(S\cap U)\setminus A^{\prime}_{l}(S\setminus U)\subseteq F_{I(d)}\cup(\cup^{l}_{q=0}L^{\prime}_{q})\setminus A^{\prime}_{l}(S\setminus U)$ which implies $S_{2}\subseteq S_{3}$.  Therefore Claim \ref{induction}.2 implies (\ref{induction19}).
\end{proof} 

 \hspace{-3.5mm}{\bf Claim \ref{induction}.7}   \textit{For any $S\subseteq{\cal S}(d)$ and every $l=0,\ldots,|{\bf L}^{\prime}_{d}|-1$:}
\begin{align}
	\nonumber
	&\hspace{-5mm}I(X_{A_{l-1}(S\cap U)};Y_{d}|X_{\tilde{A}^{\prime}_{l}(S\setminus U)\setminus A_{l-1}(S\cap U)})\geq\\
	&\hspace{25mm}I(X_{A_{l-1}(S\cap U)};Y_{d}|X_{\tilde{A}_{l-1}(S\cap U)}).
\end{align}
\begin{proof}
Consider the following sequence of inequalities:
\begin{align}
	\nonumber
	&\tilde{A}^{\prime}_{l}(S\setminus U)\setminus A_{l-1}(S\cap U)\\
	\label{induction21}
	&=(F_{I(d)}\cup(\cup^{l}_{q=0}L^{\prime}_{q})\setminus A^{\prime}_{l}(S\setminus U))\setminus A_{l-1}(S\cap U),\\
	\label{induction22}
	&=F_{I(d)}\cup(\cup^{l}_{q=0}L^{\prime}_{q})\setminus (A^{\prime}_{l}(S\setminus U)\cup A_{l-1}(S\cap U)),\\
	\nonumber
	&=F_{I(d)}\cup(\cup^{l}_{q=0}L_{q})\\
	\label{induction23}
	&\hspace{17mm}\setminus(A^{\prime}_{l}(S\setminus U)\cup A_{l-1}(S\cap U)\cup A_{l}(U)),\\
	\nonumber
	&\supseteq F_{I(d)}\cup(\cup^{l}_{q=0}L_{q})\\
	\label{induction24}
	&\hspace{17mm}\setminus(A_{l}(S\setminus U)\cup A_{l-1}(S\cap U)\cup A_{l}(U)),\\
	\label{induction25}
	&\supseteq F_{I(d)}\cup(\cup^{l-1}_{q=0}L_{q})\setminus A_{l-1}(S\cap U),\\
	\label{induction26}
	&=\tilde{A}_{l-1}(S\cap U),
\end{align}
where (\ref{induction21}) follows from (\ref{induction2}), (\ref{induction22}) follows from standard set-theoretic operations, (\ref{induction23}) follows by substituting (\ref{induction14}) into (\ref{induction22}), (\ref{induction24}) follows because Lemma \ref{remain} implies $A^{\prime}_{l}(S\setminus U)\subseteq A_{l}(S\setminus U)$, (\ref{induction25}) follows because (\ref{A}) implies $A_{l}(U)\cup A_{l}(S\setminus U)\subseteq L_{l}$, and (\ref{induction26}) follows from the definition of $\tilde{A}_{l}(\cdot)$ in (\ref{induction7}).  Since removing independent conditional random variables reduces the mutual information, Claim \ref{induction}.7 follows from (\ref{induction26}).
\end{proof}
Combining Claim \ref{induction}.6 and Claim \ref{induction}.7 proves that (\ref{lemmaV2}) follows from (\ref{lemmaV1}).  Moreover, (\ref{lemmaV2}) implies:
\begin{align}
	\label{induction27}
	R_{S\cap U}&>\displaystyle\sum^{|{\bf L}^{\prime}_{d}|-1}_{l=0}I(X_{A_{l-1}(S\cap U)};Y_{d}|X_{\tilde{A}_{l-1}(S\cap U)}).
\end{align}
Observe that (\ref{shift2}) implies $|{\bf L}_{d}|\leq|{\bf L}^{\prime}_{d}|\leq|{\bf L}_{d}|+1$.  Since $ l\leq|{\bf L}_{d}^{\prime}|-1$ it follows that $l-1\leq|{\bf L}^{\prime}_{d}|-2\leq|{\bf L}_{d}|-1$.  Substituting $l^{\prime}=l-1$ into (\ref{induction27}) and relabelling $l^{\prime}=l$ yields (\ref{lemmaV2}) which completes the proof of Lemma \ref{induction}.
\end{proof}

\begin{figure*}[!ht]
        \center{\includegraphics[width=\textwidth]
        {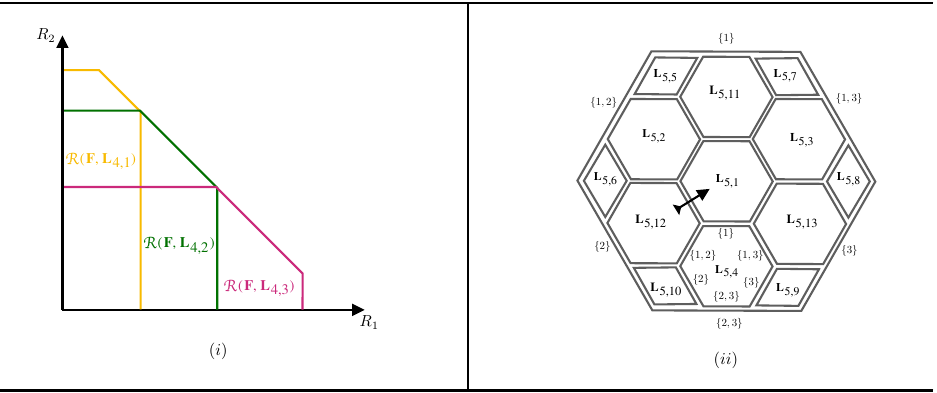}}
        \caption{(i) ${\cal R}_{4}({\bf F})$ for the two-user multiple-access relay channel in which ${\cal N}=\{1,2,3,4\}$, ${\bf F}=\{{\bf f}(1),{\bf f}(2)\}$,   ${\bf f}(1):=1\xrightarrow{1}3\xrightarrow{1}4$, and ${\bf f}(2):=2\xrightarrow{1}3\xrightarrow{1}4$.  The layerings are ${\bf L}_{4,1}=(\{1,3\},\{2\})$, ${\bf L}_{4,2}=(\{3\},\{1,2\})$ and ${\bf L}_{4,3}=(\{2,3\},\{1\})$.  The boundaries of each subregion (identified by color) map to (\ref{flowdecomp}) for $S=\{1\}, \{2\}, \{1,2\}$.  The boundaries of the full region map to (\ref{biggiewiggie}) for $S=\{1\}, \{2\}, \{1,2\}$.  A layering shifted by $\{1\}$ or $\{2\}$ generates an adjacent subregion: ${\bf L}_{4,2}=\text{\sc shift}({\bf L}_{4,1},\{1\})$ where ${\cal R}({\bf F},{\bf L}_{4,1})$ interlocks with ${\cal R}({\bf F},{\bf L}_{4,2})$.  (ii) A 2-D projection of a 3-D region ${\cal R}_{5}({\bf F})$ for the three-user multiple-access relay in which ${\cal N}=\{1,2,3,4,5\}$, ${\bf F}=\{{\bf f}(1),{\bf f}(2),{\bf f}(3)\}$, ${\bf f}(1):=1\xrightarrow{1}4\xrightarrow{1}5$, ${\bf f}(2):=2\xrightarrow{1}4\xrightarrow{1}5$, and ${\bf f}(3):=3\xrightarrow{1}4\xrightarrow{1}5$.  The layerings are ${\bf L}_{5,1}=(\{4\},\{1,2,3\})$, ${\bf L}_{5,2}=(\{3,4\},\{1,2\})$, ${\bf L}_{5,3}=(\{2,4\},\{1,3\})$, ${\bf L}_{5,4}=(\{1,4\},\{2,3\})$, ${\bf L}_{5,5}=(\{3\},\{2,4\},\{1\})$, ${\bf L}_{5,6}=(\{3\},\{2,4\},\{2\})$, ${\bf L}_{5,7}=(\{2\},\{3,4\},\{1\})$, ${\bf L}_{5,8}=(\{2\},\{1,4\},\{3\})$, ${\bf L}_{5,9}=(\{1\},\{2,4\},\{3\})$, ${\bf L}_{5,10}=(\{1\},\{3,4\},\{2\})$, ${\bf L}_{5,11}=(\{2,3,4\},\{1\})$, ${\bf L}_{5,12}=(\{1,3,4\},\{2\})$ and ${\bf L}_{5,13}=(\{1,2,4\},\{3\})$.  The six facets of each subregion map to (\ref{flowdecomp}) for the sets $S=\{1\}, \{1,2\}, \{1,3\}, \{2\}, \{3\},\{2,3\}$.  The six facets of the full region map to (\ref{biggiewiggie}) for $S=\{1\}, \{1,2\}, \{1,3\}, \{2\}, \{3\},\{2,3\}$.  A  shifted layering generates an adjacent subregion: ${\bf L}_{5,1}=\text{\sc shift}({\bf L}_{5,12},\{1,3\})$, where $\{1,3\}$ is the facet of ${\cal R}({\bf F},{\bf L}_{5,12})$ that interlocks  with ${\cal R}({\bf F},{\bf L}_{5,1})$.}
        \label{polytope}
 \end{figure*}
 
For clarity, we update some notation.  Given a sequence of layerings $\{{\bf L}_{d,n}:n\in\mathbb{N}\}$, let ${\bf L}_{d,n}:=(L_{0,n},L_{1,n},\ldots,L_{|{\bf L}_{d,n}|-1})$ be a layering that satisfies (L1)-(L5) and let $\{{\bf v}_{n}(s):s\in{\cal S}(d)\}$ and $\{v_{n}(s):s\in{\cal S}(d)\}$ denote the set of virtual flows and virtual sources respectively, generated by $({\bf F},{\bf L}_{d,n})$ that satisfy (V1)-(V5).  For any $S\subseteq{\cal S}(d)$, let $V_{n}(S):=\{i\in{\bf v}_{n}(s):s\in S\cap{\cal S}(d)\}$.  For every $S\subseteq{\cal S}(d)$, let:
\begin{align}
	\label{aln}
	A_{l,n}(S)&:=\{i\in{\bf v}_{n}(s):s\in S\cap{\cal S}(d)\}\cap L_{l,n},\\
	\label{alntilde}
	\tilde{A}_{l,n}(S)&:=F_{I(d)}\cup(\cup^{l}_{q=0}L_{q,n})\setminus A_{l,n}(S),
\end{align}
where $F_{I(d)}$ is defined in (\ref{induction3}).  For every $S\subseteq{\cal S}(d)$,  let ${\cal R}({\bf F},{\bf L}_{d,n})$ denote the set of rate vectors ${\bf R}$ that satisfy:  
\begin{align}
	\label{flowdecomp3}
	R_{S}&<\displaystyle\sum^{|{\bf L}_{d,n}|-1}_{l=0}I(X_{A_{l,n}(S)};Y_{d}|X_{\tilde{A}_{l,n}(S)}).
\end{align}   
For some fixed ${\bf R}\in{\cal R}_{d}({\bf F})$, let $U_{n}\subset{\cal S}(d)$ denote the largest set that violates (\ref{flowdecomp3}) with respect to $({\bf F},{\bf L}_{d,n})$ and ${\bf R}$, and $Z_{n}\subseteq{\cal S}(d)$ denote the set in which all subsets $S\subseteq Z_{n}$ satisfy (\ref{flowdecomp3}) with respect to $({\bf F},{\bf L}_{d,n})$ and ${\bf R}$.

Let ${\bf L}_{d,n+1}:=\text{\sc shift}({\bf L}_{d,n},U_{n})$, and define the $\text{\sc shift}(\cdot\hspace{1mm},\cdot)$ operator as follows.  Let $\text{\sc layer}_{n}(\cdot)$ correspond to the layering ${\bf L}_{d,n}$, and for every $i\in F_{d}({\cal N})$:
\begin{align}
\label{shift4}
\text{\sc layer}_{n+1}(i)=\begin{cases}l&i\in A_{l,n}({\cal N})\setminus A_{l,n}(U_{n}),\\ l+1 & i\in A_{l,n}(U_{n}).\end{cases}	
\end{align}
For every $n\in\mathbb{N}$, the pair $(U_{n},Z_{n})$ satisfies:
\begin{align}
	\label{inductionupdate}
	Z_{n+1}=({\cal S}(d)\setminus U_{n})\cup Z_{n},
\end{align} 
where (\ref{inductionupdate}) follows from Lemma \ref{induction}.  For every $j\in{\bf f}(v_{n}(s),d)$, the virtual flow $v_{n}(s)$ satisfies:
	\begin{align}
		\label{pregrandfinale3}
		\text{\sc layer}_{n}(v_{n}(s))-\text{\sc layer}_{n}(j)&\leq k_{v_{n}(s),j},
	\end{align}
	where (\ref{pregrandfinale3}) follows from (\ref{firstIndexCoding}).  For every $j\in{\bf f}(s,v_{n}(s))$, the virtual flow $v_{n}(s)$ satisfies:
	\begin{align}
		\label{pregrandfinale4}
		\text{\sc layer}_{n}(j)-\text{\sc layer}_{n}(v_{n}(s))&>k_{j,v_{n}(s)},
	\end{align}
where (\ref{pregrandfinale4}) follows from (\ref{secondIndexCoding}).  For every $j\in{\bf f}(v_{n+1}(s),d)$, the virtual flow $v_{n+1}(s)$ satisfies:
	\begin{align}
		\label{grandfinale3}
		\text{\sc layer}_{n+1}(v_{n+1}(s))-\text{\sc layer}_{n+1}(j)&\leq k_{v_{n+1}(s),j},
	\end{align}
	where (\ref{grandfinale3}) follows from (\ref{firstIndexCoding}).  For all $j\in{\bf f}(s,v_{n+1}(s))$, the virtual flow $v_{n+1}(s)$ satisfies:
	\begin{align}
		\label{grandfinale4}
		\text{\sc layer}_{n+1}(j)-\text{\sc layer}_{n+1}(v_{n+1}(s))&>k_{j,v_{n+1}(s)},
	\end{align}
	where (\ref{grandfinale4}) follows from (\ref{secondIndexCoding}).  We now introduce the last two steps of the proof of Theorem \ref{theoremone}.

\hspace{-3mm}{\bf Step 1:} Given some node $d\in{\cal N}$, flows ${\bf F}$ that satisfies (C2), and some chosen ${\bf R}\in{\cal R}_{d}({\bf F})$, pick an arbitrary layering ${\bf L}_{d,0}$.  Construct a sequence of layerings $\{{\bf L}_{d,n}:n\in\mathbb{N}\}$ as follows:
	\begin{align}
		\label{shift3}
		{\bf L}_{d,n+1}&:=\text{\sc shift}({\bf L}_{d,n},U_{n}),
	\end{align}
	where the $\text{\sc shift}(\cdot\hspace{1mm},\cdot)$ operator is defined in (\ref{shift4}) with respect to the chosen ${\bf R}$.  We will show that ${\bf R}\in{\cal R}_{d}({\bf F},{\bf L}_{d,n^{*}})$ for some $n^{*}\in\mathbb{N}$.
	
\hspace{-3mm}{\bf Step 2:} Construct equivalent flows ${\bf F}^{\prime}$ and a layering ${\bf L}_{d}$ such that ${\cal R}({\bf F}^{\prime},{\bf L}_{d})\supseteq{\cal R}({\bf F}, {\bf L}_{d,n^{*}})$ and $({\bf F}^{\prime},{\bf L}_{d})$ satisfies (C1).  We will show that $({\bf F}^{\prime},{\bf L}_{d})$ exists.

\begin{lemma}
	\label{grandefinale}
	${\bf R}\in{\cal R}({\bf F},{\bf L}_{d,n^{*}})$ for some $n^{*}\in\mathbb{N}$.
\end{lemma}

\begin{proof}
	The proof is by contradiction.  We prove some initial claims, but first introduce a classical definition of the limit inferior of any sequence of sets $\{S_{n}:n\in\mathbb{N}\}$.
\begin{align}
	\label{liminf}
	\liminf_{n} S_{n}&:=\cup^{\infty}_{n=1}\cap^{\infty}_{q=n}S_{q}.
\end{align} 
	
\hspace{-3.5mm}{\bf Claim \ref{grandefinale}.1} \textit{$\liminf\limits_{n} U_{n}\neq\{\}$.}
\begin{proof}
The proof is by contradiction.  Suppose $\liminf_{n}U_{n}=\{\}$ and Lemma \ref{grandefinale} is false.  If Lemma \ref{grandefinale} is false then $Z_{n}\neq{\cal S}(d)$ for all $n\in\mathbb{N}$.  For every $s\in{\cal S}(d)$, it follows from (\ref{liminf}) and the hypothesis $\liminf_{n}U_{n}=\{\}$ that some $n_{s}$ exists such that $s\notin U_{n_{s}}$.  Therefore (\ref{inductionupdate}) implies that $s\in Z_{n_{s}+1}$.  Moreover, (\ref{inductionupdate}) also implies $Z_{n}\subseteq Z_{n+1}$ for all $n\in\mathbb{N}$.  It follows that $Z_{n}={\cal S}(d)$, for all $n\geq\max_{s\in{\cal S}(d)}n_{s}$, which contradicts the hypothesis that $Z_{n}\neq{\cal S}(d)$ for all $n\in\mathbb{N}$.
\end{proof}

\hspace{-3.5mm}{\bf Claim \ref{grandefinale}.2} \textit{If $i\in{\bf v}_{n}(s)$ for some $s\in U_{n}$ then $i\in{\bf v}_{n+1}(s)$.}

\begin{proof}
There are two cases to consider.  If $v_{n+1}(s)=v_{n}(s)$ then Claim \ref{join}.1 implies $i\in{\bf v}_{n+1}(s)$.  Otherwise, if $v_{n+1}(s)\neq v_{n}(s)$ then Claim \ref{join}.2 implies $i\in{\bf v}_{n+1}(s)$.
\end{proof}
We introduce the following definitions: 
\begin{align}
\label{U}
U^{*}&:=\liminf\limits_{n} U_{n},\\
V^{*}_{n}&:=\{i\in{\bf v}_{n}(s):s\in U_{n}\setminus U^{*}\}.
\label{vstar}
\end{align}

\hspace{-3mm}{\bf Claim.\ref{grandefinale}.3} $\liminf\limits_{n}(U_{n}\setminus U^{*})=\{\}$.
\begin{proof}
The proof is by contradiction.  Suppose $\liminf_{n}(U_{n}\setminus U^{*})\neq\{\}$.  For some $s\in{\cal S}(d)$ and $n_{s}\in\mathbb{N}$, it follows that $s\in U_{n}\setminus U^{*}$ for all $n\geq n_{s}$.  Therefore $s\in\cap_{q\geq n_{s}} U_{q}$.  It follows from (\ref{liminf}) and (\ref{U}) that $s\in U^{*}$ which is a contradiction.
\end{proof}


\hspace{-3.5mm}{\bf Claim \ref{grandefinale}.4} \textit{$\liminf\limits_{n}V^{*}_{n}=\{\}$}.

\begin{proof}
We first prove some preliminary claims.

\hspace{-3.5mm}{\bf Claim \ref{grandefinale}.4.a} \textit{If $\text{\sc layer}_{n+1}(i)=\text{\sc layer}_{n}(i)+1$ and $i\in{\bf v}_{n}(s)$ for some $s\notin U_{n}$ then $i\notin{\bf v}_{n+1}(s)$}.

\begin{proof}
Fix $i\in{\bf v}_{n}(s)$ for some $s\notin U_{n}$.  Since $({\bf F},{\bf L}_{d,n+1})=\text{\sc shift}(({\bf F},{\bf L}_{d,n}),U_{n})$ and $s\notin U_{n}$, Claim \ref{remain}.1 implies that $\text{\sc layer}_{n+1}(j)=\text{\sc layer}_{n}(j)$ for some $j\in{\bf v}_{n}(s)$.  Therefore, Claim \ref{disjoint}.1 implies:
	\begin{align}
	\label{vnplus1}
		v_{n+1}(s)=\displaystyle\min_{\substack{j\in{\bf v}_{n}(s),\\ \text{\sc layer}_{n+1}(j)=\text{\sc layer}_{n}(j)}}k_{v_{n}(s),j}.
	\end{align}
	We have the following sequence of equalities:
	\begin{align}
		\nonumber
		&\hspace{-3mm}\text{\sc layer}_{n+1}(v_{n+1}(s))-\text{\sc layer}_{n+1}(i)\\
		\label{grandfinale32}
		&\hspace{8mm}=\text{\sc layer}_{n+1}(v_{n+1}(s))-(\text{\sc layer}_{n}(i)+1),\\
		\label{grandfinale37}
		&\hspace{8mm}=\text{\sc layer}_{n}(v_{n+1}(s))-(\text{\sc layer}_{n}(i)+1),\\		
		\nonumber
		&\hspace{8mm}=(\text{\sc layer}_{n}(v_{n}(s))-k_{v_{n}(s),v_{n+1}(s)})\\
		\label{grandfinale33}
		&\hspace{42mm}-(\text{\sc layer}_{n}(i)+1),\\
		\nonumber
		&\hspace{8mm}=\text{\sc layer}_{n}(v_{n}(s))-\text{\sc layer}_{n}(i)\\
		\label{grandfinale34}
		&\hspace{42mm}-k_{v_{n}(s),v_{n+1}(s)}-1,\\
		\label{grandfinale35}
		&\hspace{8mm}=k_{v_{n}(s),i}-k_{v_{n}(s),v_{n+1}(s)}-1,\\
		\label{grandfinale36}
		&\hspace{8mm}=k_{v_{n+1}(s),i}-1,
	\end{align}
	where (\ref{grandfinale32}) follows from the premise of Claim \ref{grandefinale}.4.a that $\text{\sc layer}_{n+1}(i)=\text{\sc layer}_{n}(i)+1$, (\ref{grandfinale37}) follows because (\ref{vnplus1}) implies $\text{\sc layer}_{n+1}(v_{n+1}(s))=\text{\sc layer}_{n}(v_{n+1}(s))$, (\ref{grandfinale33}) follows because (V5) implies $\text{\sc layer}_{n}(v_{n}(s))-\text{\sc layer}_{n}(v_{n+1}(s))=k_{v_{n}(s),v_{n+1}(s)}$ if $v_{n+1}(s)\in{\bf v}_{n}(s)$ and (\ref{vnplus1}) implies $v_{n+1}(s)\in{\bf v}_{n}(s)$, (\ref{grandfinale34}) follows by rearranging terms, and (\ref{grandfinale35}) follows from the premise of Claim \ref{grandefinale}.4.a that $i\in{\bf v}_{n}(s)$. 

To justify (\ref{grandfinale36}), it suffices to show that $i\in{\bf f}(v_{n+1}(s),d)$.  If $i\in{\bf f}(s,v_{n}(s))$ then $i\notin{\bf v}_{n}(s)$ which contradicts the premise of Claim \ref{grandefinale}.4.a that $i\in{\bf v}_{n}(s)$.  Moreover, if $i\in{\bf f}(v_{n}(s),v_{n+1}(s))$ then $i\notin{\bf v}_{n+1}(s)$ which proves Claim \ref{grandefinale}.4.a.  Therefore it suffices to assume $i\in{\bf f}(v_{n+1}(s),d)$.  It follows from (\ref{grandfinale36}) that $\text{\sc layer}_{n+1}(v_{n+1}(s))-\text{\sc layer}_{n+1}(i)\neq k_{v_{n+1}(s),i}$ which, together with (V5), implies $i\notin{\bf v}_{n+1}(s)$.
\end{proof}

\hspace{-3.5mm}{\bf Claim \ref{grandefinale}.4.b} \textit{If $\text{\sc layer}_{n+1}(i)=\text{\sc layer}_{n}(i)+1$ and $i\notin{\bf v}_{n}(s)$ for some $s\in U_{n}$ then $i\notin{\bf v}_{n+1}(s)$}.
	\begin{proof}
	There are two cases to consider.

	\hspace{-3.5mm}{\bf Case 1:} $v_{n+1}(s)=v_{n}(s)$.
	
	We have the following sequence of inequalities:
	\begin{align}
		\nonumber
		&\text{\sc layer}_{n+1}(v_{n+1}(s))-\text{\sc layer}_{n+1}(i)\\
		\label{grandfinale38}
		&\hspace{7mm}=\text{\sc layer}_{n+1}(v_{n}(s))-\text{\sc layer}_{n+1}(i),\\
		\label{grandfinale39}
		&\hspace{7mm}=(\text{\sc layer}_{n}(v_{n}(s))+1)-\text{\sc layer}_{n+1}(i),\\
		\label{postgrandfinale39}
		&\hspace{7mm}=(\text{\sc layer}_{n}(v_{n}(s))+1)-(\text{\sc layer}_{n}(i)+1),\\
		\nonumber
		&\hspace{7mm}=\text{\sc layer}_{n}(v_{n}(s))-\text{\sc layer}_{n}(i),\\
		\label{grandfinale41}
		&\hspace{7mm}\neq k_{v_{n}(s),i},\\
		\label{grandfinale42}
		&\hspace{7mm}=k_{v_{n+1}(s),i},
	\end{align}
	where (\ref{grandfinale38}) follows from the premise of Case 1 that $v_{n+1}(s)=v_{n}(s)$, (\ref{grandfinale39}) follows from (\ref{shift4}) and because $s\in U_{n}$, (\ref{postgrandfinale39}) follows from the premise of Claim \ref{grandefinale}.4.b that $\text{\sc layer}_{n+1}(i)=\text{\sc layer}_{n}(i)+1$, (\ref{grandfinale41}) follows because $i\notin{\bf v}_{n}(s)$ and (V5) implies $\text{\sc layer}_{n}(v_{n}(s))-\text{\sc layer}_{n}(i)=k_{v_{n}(s),i}$ if $i\in{\bf v}_{n}(s)$, and (\ref{grandfinale42}) follows because $v_{n+1}(s)=v_{n}(s)$ by case assumption.  It follows from (\ref{grandfinale42}) and (V5) that $i\notin{\bf v}_{n+1}(s)$.
	
	\hspace{-3.5mm}{\bf Case 2:} $v_{n+1}(s)\neq v_{n}(s)$.
	The premise of Claim \ref{grandefinale}.4.b that $s\in U_{n}$ and the premise of Case 2 that $v_{n+1}(s)\neq v_{n}(s)$, together with Claim \ref{join}.2.c imply:
	\begin{align}
	\nonumber
	&k_{v_{n+1}(s),v_{n}(s)}\\
	\label{kvnplus}
	&\hspace{7mm}=\text{\sc layer}_{n+1}(v_{n+1}(s))-\text{\sc layer}_{n+1}(v_{n}(s)).
	\end{align} 
We have the following inequalities:
	\begin{align}
		\nonumber
		&\hspace{-3mm}\text{\sc layer}_{n+1}(v_{n+1}(s))-\text	{\sc layer}_{n+1}(i)\\
		\nonumber
		&=\text{\sc layer}_{n+1}(v_{n+1}(s))-\text{\sc layer}_{n+1}(v_{n}(s))\\
		\label{grandfinale43}
		&\hspace{17mm}+\text{\sc layer}_{n+1}(v_{n}(s))-\text{\sc layer}_{n+1}(i),\\
		\nonumber
		&=k_{v_{n+1}(s),v_{n}(s)}+\text{\sc layer}_{n+1}(v_{n}(s))\\
		\label{grandfinale44}
		&\hspace{47mm}-\text{\sc layer}_{n+1}(i),\\
		\nonumber
		&=k_{v_{n+1}(s),v_{n}(s)}+(\text{\sc layer}_{n}(v_{n}(s))+1)\\
		\label{grandfinale45}
		&\hspace{47mm}-\text{\sc layer}_{n+1}(i),\\
		\nonumber
		&=k_{v_{n+1}(s),v_{n}(s)}+(\text{\sc layer}_{n}(v_{n}(s))+1)\\
		\label{postgrandfinale45}
		&\hspace{42mm}-(\text{\sc layer}_{n}(i)+1),\\
		\nonumber
		&=k_{v_{n+1}(s),v_{n}(s)}\\
		\nonumber
		&\hspace{21mm}+(\text{\sc layer}_{n}(v_{n}(s))-\text{\sc layer}_{n}(i)),\\
		\label{grandfinale47}
		&\neq k_{v_{n+1}(s),v_{n}(s)}+k_{v_{n}(s),i},\\
		\label{grandfinale48}
		&=k_{v_{n+1}(s),i},
	\end{align}
	where (\ref{grandfinale43}) follows by subtracting and adding $\text{\sc layer}_{n+1}(v_{n}(s))$ to the left side, (\ref{grandfinale44}) follows from (\ref{kvnplus}), (\ref{grandfinale45}) follows from (\ref{shift4}) and because $s\in U_{n}$, (\ref{postgrandfinale45}) follows from the premise of Claim \ref{grandefinale}.4.b that $\text{\sc layer}_{n+1}(i)=\text{\sc layer}_{n}(i)+1$, and (\ref{grandfinale47}) follows from (V5) and the premise of Claim \ref{grandefinale}.4.b that $i\notin{\bf v}_{n}(s)$.  It follows from (\ref{grandfinale48}) and (V5) that $i\notin{\bf v}_{n+1}(s)$.
	\end{proof}
	
	\hspace{-3.5mm}{\bf Claim \ref{grandefinale}.4.c} \textit{If $\text{\sc layer}_{n+1}(i)=\text{\sc layer}_{n}(i)+1$ and $i\notin{\bf v}_{n}(s)$ for some $s\notin U_{n}$ then $i\notin{\bf v}_{n+1}(s)$}.
	
	\begin{proof}
	Since $s\notin U_{n}$, it follows from Claim \ref{disjoint}.1 and Claim \ref{remain}.1 that $v_{n+1}(s)$ is defined by (\ref{vnplus1}).  Without loss of generality, assume $i\in{\bf f}(v_{n+1}(s),d)$ since the claim is trivial otherwise.  We have the following sequence of equalities:
	\begin{align}
		\nonumber
		&\text{\sc layer}_{n+1}(v_{n+1}(s))-\text{\sc layer}_{n+1}(i)\\
		\label{grandfinale49}
		&=\text{\sc layer}_{n}(v_{n+1}(s))-\text{\sc layer}_{n+1}(i),\\
		\label{grandfinale50}
		&=\text{\sc layer}_{n}(v_{n}(s))-k_{v_{n}(s),v_{n+1}(s)}-\text{\sc layer}_{n+1}(i),\\
		\nonumber
		&=\text{\sc layer}_{n}(v_{n}(s))-k_{v_{n}(s),v_{n+1}(s)}\\
		\label{grandfinale51}
		&\hspace{45mm}-(\text{\sc layer}_{n}(i)+1),\\
		\nonumber
		&=\text{\sc layer}_{n}(v_{n}(s))-\text{\sc layer}_{n}(i)\\
		\nonumber
		&\hspace{45mm}-k_{v_{n}(s),v_{n+1}(s)}-1,\\
		\label{grandfinale53}
		&\leq k_{v_{n}(s),i}-k_{v_{n}(s),v_{n+1}(s)}-1,\\
		\label{grandfinale54}
		&=k_{v_{n+1}(s),i}-1,\\
		\label{grandfinale55}
		&<k_{v_{n+1}(s),i},
	\end{align}
	where (\ref{grandfinale49}) follows because (\ref{vnplus1}) implies $\text{\sc layer}_{n+1}(v_{n+1}(s))=\text{\sc layer}_{n}(v_{n+1}(s))$, (\ref{grandfinale50}) follows from (\ref{kvnplus}), (\ref{grandfinale51}) follows from the premise of Claim \ref{grandefinale}.3.c that $\text{\sc layer}_{n+1}(i)=\text{\sc layer}_{n}(i)+1$, and (\ref{grandfinale53}) follows because (\ref{pregrandfinale3}) implies $\text{\sc layer}_{n}(v_{n}(s))-\text{\sc layer}_{n}(i)\leq k_{v_{n}(s),i}$ if $i\in{\bf f}(v_{n}(s),d)$.  Since (\ref{vnplus1}) implies $v_{n+1}(s)\in{\bf v}_{n}(s)$ and $i\in{\bf f}(v_{n+1}(s),d)$ by assumption, it follows that $i\in{\bf f}(v_{n}(s),d)$.  Furthermore (\ref{grandfinale54}) follows because $v_{n+1}(s)\in{\bf f}(v_{n}(s),i)$.  It follows from (V5) and (\ref{grandfinale55}) that $i\notin{\bf v}_{n+1}(s)$.
	\end{proof}
	
	\hspace{-3.5mm}{\bf Claim \ref{grandefinale}.4.d} \textit{Suppose $s\notin U_{n_{s}}$ for some $n_{s}\in\mathbb{N}$.  If $\text{\sc layer}_{n+1}(i)=\text{\sc layer}_{n}(i)+1$ for all $n\geq n_{s}$ then $i\notin{\bf v}_{n}(s)$ for all $n\geq n_{s}+1$.}
	
	\begin{proof}
	The proof is by induction.  First we prove the initial condition that $i\notin{\bf v}_{n_{s}+1}(s)$ if $\text{\sc layer}_{n_{s}+1}(i)=\text{\sc layer}_{n_{s}}(i)+1$ for some $s\notin U_{n_{s}}$.  There are two cases to consider for the initial condition.

First suppose $i\notin{\bf v}_{n_{s}}(s)$.  Since $s\notin U_{n_{s}}$, and $i\notin{\bf v}_{n_{s}}(s)$, it follows from Claim \ref{grandefinale}.4.c that $i\notin{\bf v}_{n_{s}+1}(s)$.  Next suppose $i\in{\bf v}_{n_{s}}(s)$.  Since $s\notin U_{n_{s}}$ and $i\in{\bf v}_{n_{s}}(s)$, it follows from Claim \ref{grandefinale}.4.a that $i\notin{\bf v}_{n_{s}+1}(s)$.

Now assume the inductive hypothesis that $i\notin{\bf v}_{n}(s)$ for some fixed $n\geq n_{s}+1$.  If $s\in U_{n}$ then Claim \ref{grandefinale}.4.b implies $i\notin{\bf v}_{n+1}(s)$.  If $s\notin U_{n}$ then Claim \ref{grandefinale}.4.c implies $i\notin{\bf v}_{n+1}(s)$.  It follows that $i\notin{\bf v}_{n}(s)$ for all $n\geq n_{s}+1$.
	\end{proof}

	We can now complete the proof of Claim \ref{grandefinale}.4.  The proof is by contradiction.  Suppose $\liminf\limits_{n}V^{*}_{n}\neq\{\}$.  For some $i\in F_{d}({\cal N})$ and $n_{i}\in\mathbb{N}$, it follows that $i\in V^{*}_{n}$ for all $n\geq n_{i}$.  Therefore (\ref{vstar}) implies $i\in v_{n}(s)$ for some $s\in U_{n}\setminus U^{*}$ and all $n\geq n_{i}$.  It follows from (\ref{shift4}) that: 
	\begin{align}
	\label{layernplus1}
	\text{\sc layer}_{n+1}(i)=\text{\sc layer}_{n}(i)+1\hspace{1mm}\text{for all}\hspace{1mm}n\geq n_{i}.	
	\end{align}
Fix $s\in U_{n_{i_{s}}}\setminus U^{*}$ such that $i\in{\bf v}_{n_{i_{s}}}(s)$ for some $n_{i_{s}}\geq n_{i}$.  Since $\liminf_{n} U_{n}\setminus U^{*}=\{\}$ from Claim \ref{grandefinale}.3, it follows that $s\notin U_{n_{s}}\setminus U^{*}$ for some $n_{s}\geq n_{i_{s}}$.  Moreover, $s\notin U^{*}$ by construction, so $s\notin U_{n_{s}}$.  Since $n_{s}\geq n_{i}$, Claim \ref{grandefinale}.4.d and (\ref{layernplus1}) imply that $i\notin{\bf v}_{n}(s)$ for all $n\geq n_{s}+1$.  Similarly, for every $s\in U_{n}\setminus U^{*}$ such that $i\in{\bf v}_{n}(s)$ for some $n\geq n_{i}$, there exists some $n_{s}\geq n_{i}$ such that $i\notin{\bf v}_{n}(s)$ for all $n\geq n_{s}+1$.  For all other $s\notin U^{*}$, we set $n_{s}:=0$.  It follows that $i\notin{\bf v}_{n}(s)$ for any $s\in U_{n}\setminus U^{*}$ and all $n\geq \max_{s}n_{s}+1$.  Therefore $i\notin V^{*}_{n}$ for all $n\geq \max_{s}n_{s}+1$ which is a contradiction.      
\end{proof}

\hspace{-3.5mm}{\bf Claim \ref{grandefinale}.5} $V_{n}(U^{*})=F_{d}(U^{*})$ $\forall n\geq K_{5}$ and some $K_{5}\in\mathbb{N}$.
\begin{proof}
We first prove a preliminary claim.

\hspace{-3.5mm}{\bf Claim \ref{grandefinale}.5.a}\textit{ $\lim_{n\rightarrow\infty}{\bf v}_{n}(s)$ exists for every $s\in U^{*}$.} 

\begin{proof}
To prove Claim \ref{grandefinale}.5.a, we show for every $s\in U^{*}$, there exists some $n_{s}\in\mathbb{N}$ such that ${\bf v}_{n}(s)={\bf v}(s)$ for all $n\geq n_{s}$.

Suppose $i\in{\bf v}_{n_{i}}(s)$ for some $n_{i}\in\mathbb{N}$ and $s\in U^{*}$.  It follows from (\ref{U}), that $s\in U_{n}$ for all $n\geq n_{i}$.  Therefore Claim \ref{grandefinale}.2 implies $i\in{\bf v}_{n}(s)$ for all $n\geq n_{i}$.  Hence, $V_{n}(\{s\})\subseteq V_{n+1}(\{s\})\subseteq F_{d}(\{s\})$ for all $n\in\mathbb{N}$.  Since $\{V_{n}(\{s\}):n\in\mathbb{N}\}$ is a monotonically increasing sequence of finite sets, ordered by inclusion with bounded size, there exists some $V(\{s\})\subseteq F_{d}(\{s\})$ and $n_{s}\in\mathbb{N}$ such that: 
\begin{align}
	\label{vdn}
	V_{n}(\{s\})=V(\{s\})\hspace{1mm}\text{for all}\hspace{1mm}n\geq n_{s}.
\end{align}
By construction in (V1)-(V5), both ${\bf v}_{n+1}(s)$ and ${\bf v}_{n}(s)$ are subsequences of ${\bf f}(s,d)$.  Since (\ref{vdn}) implies $V_{n}(\{s\})=V_{n+1}(\{s\})=V(\{s\})$, it follows that ${\bf v}_{n+1}(s)$ and ${\bf v}_{n}(s)$ have the same nodes.  Therefore ${\bf v}_{n+1}(s)={\bf v}_{n}(s)={\bf v}(s)$ for all $n\geq n_{s}$. 
\end{proof}




We can now complete the proof of Claim \ref{grandefinale}.5.  The proof is by contradiction.  For every $s\in U^{*}$, Claim \ref{grandefinale}.5.a implies ${\bf v}_{n}(s)={\bf v}(s)$ for all $n\geq n_{s}$ and some $n_{s}\in\mathbb{N}$.  Define $V(U^{*}):=\{i\in{\bf v}(s):s\in U^{*}\}$.  It follows that $V_{n}(U^{*})=V(U^{*})$ for all $n\geq\max_{s\in U^{*}}n_{s}$.  Therefore $\lim_{n\rightarrow\infty}V_{n}(U^{*})=V(U^{*})$.  Suppose $V(U^{*})\neq F_{d}(U)$, and pick some $i\in F_{d}(U^{*})\setminus V(U^{*})$.  Now $i\in{\bf f}(s,d)$ for some $s\in U^{*}$.  There are two cases to consider.

\hspace{-3.5mm}{\bf Case 1:} $i\in{\bf f}(v(s),d)$.

For all $n\in\mathbb{N}$:
\begin{align}
	\label{grandfinale61}
	\text{\sc layer}_{n}(v_{n}(s))-\text{\sc layer}_{n}(i)\leq k_{v_{n}(s),i},
\end{align}
where (\ref{grandfinale61}) follows from (\ref{pregrandfinale3}).  Claim \ref{grandefinale}.5.a implies ${\bf v}_{n}(s)={\bf v}(s)$ for all $n\geq n_{s}$.  For all $n\geq n_{s}$, Claim \ref{grandefinale}.5.a and (\ref{grandfinale61}) imply:
\begin{align}
	\label{grandfinale62}
	\text{\sc layer}_{n}(v(s))-\text{\sc layer}_{n}(i)\leq k_{v(s),i}.
\end{align} 
For all $n\in\mathbb{N}$:
\begin{align}
	\label{grandfinale63}
	\text{\sc layer}_{n}(i)\leq\text{\sc layer}_{n+1}(i)\leq\text{\sc layer}_{n}(i)+1,
\end{align}
where (\ref{grandfinale63}) follows from (\ref{shift4}).  The definition of $U^{*}$ in (\ref{U}) implies that $U^{*}\subseteq U_{n}$ for all $n\geq K_{1}$ and some $K_{1}\in\mathbb{N}$.  For all $n\geq\max\{n_{s},K_{1}\}$:
\begin{align}
	\label{grandfinale64}
	\text{\sc layer}_{n+1}(v(s))=\text{\sc layer}_{n}(v(s))+1.
\end{align}
To justify (\ref{grandfinale64}), observe that Claim \ref{grandefinale}.5.a implies  $v_{n}(s)=v(s)$ for all $n\geq n_{s}$.  Therefore (\ref{grandfinale64}) follows from (\ref{shift4}) and because $U^{*}\subseteq U_{n}$ for all $n\geq K_{1}$.  

Define the sequence $a_{n}:=\text{\sc layer}_{n}(v(s))-\text{\sc layer}_{n}(i)$.  Both   (\ref{grandfinale63}) and (\ref{grandfinale64}) imply $a_{n}$ is monotonically non-decreasing for all $n\geq\max\{n_{s}, K_{1}\}$.  We will show $a_{n}$ is actually unbounded from above, which contradicts (\ref{grandfinale62}).  

Since $\liminf_{n}V^{*}_{n}=\{\}$ from Claim \ref{grandefinale}.4, there exists an infinite sequence $\{n_{q}:q\in\mathbb{N}\}$ where $n_{0}:=\max\{n_{s},K_{1}\}$ such that $i\notin V^{*}_{n_{q}}$.  It follows from (\ref{vstar}) that $i\notin{\bf v}_{n_{q}}(s^{\prime})$ for all $s^{\prime}\in U_{n_{q}}\setminus U^{*}$ and $q\in\mathbb{N}$.  Moreover, since $i\in F_{d}(U)\setminus V(U)$, it follows that $i\notin{\bf v}(s^{\prime})$ for all $s^{\prime}\in U^{*}$.  Hence, $i\notin{\bf v}(s^{\prime})$ for any $s^{\prime}\in U_{n_{q}}$.  It follows from (\ref{shift4}) that:
\begin{align}
	\label{grandfinale65}
	\text{\sc layer}_{n_{q}+1}(i)=\text{\sc layer}_{n_{q}}(i)\hspace{1mm}\text{for all}\hspace{1mm}q\in\mathbb{N}.
\end{align}
For every $q\in\mathbb{N}$:
\begin{align}
\label{grandfinale66}
a_{n_{q}+1}&:=\text{\sc layer}_{n_{q}+1}(v_{n_{q}+1}(s))-\text{\sc layer}_{n_{q}+1}(i),\\
\label{grandfinale67}
&=(\text{\sc layer}_{n_{q}}(v_{n_{q}}(s))+1)-\text{\sc layer}_{n_{q}+1}(i),\\
\label{postgrandfinale67}
&=(\text{\sc layer}_{n_{q}}(v_{n_{q}}(s))+1)-\text{\sc layer}_{n_{q}}(i),\\
\nonumber
&=\text{\sc layer}_{n_{q}}(v_{n_{q}}(s))-\text{\sc layer}_{n_{q}}(i)+1,\\
\label{grandfinale68}
&=a_{n_{q}}+1,
\end{align}
where (\ref{grandfinale66}) follows from the definition of $a_{n}$, (\ref{grandfinale67}) follows from (\ref{grandfinale64}) and because $n_{0}:=\max\{n_{s},K_{1}\}$, (\ref{postgrandfinale67}) follows from (\ref{grandfinale65}), and (\ref{grandfinale68}) follows from the definition of $a_{n}$.  It follows from (\ref{grandfinale68}), that $a_{n}$ is unbounded from above.  Therefore $a_{n}$ is monotonically non-decreasing and unbounded from above which contradicts (\ref{grandfinale62}).  It follows that $i\in V(U^{*})$ if $i\in{\bf f}(v(s),d)$.

\hspace{-3.5mm}{\bf Case 2:} $i\in{\bf f}(s,v(s))$.

For all $n\in\mathbb{N}$:
\begin{align}
	\label{grandfinale69}
	\text{\sc layer}_{n}(i)-\text{\sc layer}_{n}(v_{n}(s))>k_{i,v_{n}(s)},
\end{align}
where (\ref{grandfinale69}) follows from (\ref{pregrandfinale4}).  Claim \ref{grandefinale}.5.a implies ${\bf v}_{n}(s)={\bf v}(s)$ for all $n\geq n_{s}$ and some $n_{s}\in\mathbb{N}$.  Therefore: 
\begin{align}
	\label{grandfinale70}
	\text{\sc layer}_{n}(i)-\text{\sc layer}_{n}(v(s))>k_{i,v(s)}.
\end{align}
where (\ref{grandfinale70}) follows from (\ref{grandfinale69}) and Claim \ref{grandefinale}.5.a. 

Define the sequence $b_{n}:=\text{\sc layer}_{n}(i)-\text{\sc layer}_{n}(v(s))$.  Both (\ref{grandfinale63}) and (\ref{grandfinale64}) imply $b_{n}$ is monotonically non-increasing for all $n\geq\max\{n_{s}, K_{1}\}$.  We will show that $b_{n}$ is actually unbounded from below which contradicts (\ref{grandfinale70}).  

As before, we construct an infinite sequence $\{n_{q}:q\in\mathbb{N}\}$ where $n_{0}:=\max\{n_{s},K_{1}\}$ such that $i\notin V^{*}_{n_{q}}$.  For every $q\in\mathbb{N}$: 
\begin{align}
	\label{grandfinale71}
	b_{n_{q}+1}&:=\text{\sc layer}_{n_{q}+1}(i)-\text{\sc layer}_{n_{q}+1}(v_{n_{q}+1}(s)),\\
	\label{grandfinale72}
	&=\text{\sc layer}_{n_{q}+1}(i)-(\text{\sc layer}_{n_{q}}(v_{n_{q}}(s))+1),\\
	\label{postgrandfinale72}
	&=\text{\sc layer}_{n_{q}}(i)-(\text{\sc layer}_{n_{q}}(v_{n_{q}}(s))+1),\\
	\nonumber
	&=\text{\sc layer}_{n_{q}}(i)-\text{\sc layer}_{n_{q}}(v_{n_{q}}(s))-1,\\
	\label{grandfinale74}
	&=b_{n_{q}}-1,
\end{align}
where (\ref{grandfinale71}) follows from the definition of $b_{n}$, (\ref{grandfinale72}) follows from (\ref{grandfinale64}) and the fact that $n_{0}:=\max\{n_{s},n_{U}\}$,  (\ref{postgrandfinale72}) follows from (\ref{grandfinale65}), and (\ref{grandfinale74}) follows from the definition of $b_{n}$.  It follows from (\ref{grandfinale74}), that $b_{n}$ is unbounded from below.  Therefore $b_{n}$ is monotonically non-increasing and unbounded from below, which contradicts (\ref{grandfinale70}).  It follows that $i\in V(U^{*})$ if $i\in{\bf f}(s,v(s))$. 

Case 1 and Case 2 imply $V(U^{*})=F_{d}(U^{*})$ which completes the proof of Claim \ref{grandefinale}.5. 
\end{proof}

\hspace{-3.5mm}{\bf Claim \ref{grandefinale}.6} \textit{For some function $l(n)$, $K_{6}\in\mathbb{N}$, and all $n\geq K_{6}$:}
\begin{align}
	\label{FDU}
	F_{d}(U^{*})&\subseteq\displaystyle\cup^{|{\bf L}_{d,n}|-1}_{q=l(n)}L_{q,n},\\
	\label{FDU2}
	F_{d}({\cal N})\setminus F_{d}(U^{*})&\subseteq\cup^{l(n)}_{q=0}L_{q,n},\\
	\label{FDU3}
	V_{n}(U_{n}\setminus U^{*})&\subseteq\cup^{l(n)}_{q=0}L_{q,n}.
\end{align}
\begin{proof}
We introduce some preliminary notation.  For all $i,j\in F_{d}({\cal N})$ define:
\begin{align}
	\label{enij}
	e_{n}(i,j):=\text{\sc layer}_{n}(i)-\text{\sc layer}_{n}(j).
\end{align}
To prove (\ref{FDU}) and (\ref{FDU2}), it suffices to establish the following limits.  If $i,j\in F_{d}(U^{*})$ then:
\begin{align}
	\label{limenij1}
	\lim_{n\rightarrow\infty}e_{n}(i,j)&=e(i,j),
\end{align}
where $e(i,j)$ is a constant with respect to $n$.  If $i\in F_{d}(U^{*})$ and $j\in F_{d}({\cal N})\setminus F_{d}(U^{*})$ then:
\begin{align}
	\label{limenij2}
	\lim_{n\rightarrow\infty}e_{n}(i,j)&=\infty.
\end{align}
Both (\ref{limenij1}) and (\ref{limenij2}) also imply (\ref{FDU3}).  To justify (\ref{FDU3}), suppose by contradiction that $i\in F_{d}(U^{*})\cap V_{n}(U_{n}\setminus U^{*})$ and $j\in V_{n}(U_{n}\setminus U^{*})\setminus F_{d}(U^{*})$ for some $s\in U_{n}\setminus U^{*}$ and $i,j\in v_{n}(s)$.  Then,
\begin{align}
	\nonumber
	&\hspace{-6mm}\text{\sc layer}_{n}(i)-\text{\sc layer}_{n}(j)\\
	\nonumber
	&\hspace{7mm}=(\text{\sc layer}_{n}(v_{n}(s))-\text{\sc layer}_{n}(j)),\\
	\label{FDU4}
	&\hspace{13mm}-(\text{\sc layer}_{n}(v_{n}(s))-\text{\sc layer}_{n}(i)),\\
	\label{FDU5}
	&\hspace{7mm}=k_{v_{n}(s),j}-k_{v_{n}(s),i},\\
	\nonumber
	&\hspace{7mm}\leq k_{v_{n}(s),j},\\
	\label{FDU6}
	&\hspace{7mm}\leq k_{s,j},
\end{align}
where (\ref{FDU4}) follows by adding and subtracting $\text{\sc layer}_{n}(v_{n}(s))$, (\ref{FDU5}) follows from (V5) and the assumption that $i,j\in v_{n}(s)$, and (\ref{FDU6}) follows from the definition of $k_{v_{n}(s),j}$ in  (V5).  The bound in (\ref{FDU6}) contradicts (\ref{limenij2}).  Therefore  (\ref{limenij1}) and (\ref{limenij2}) implies (\ref{FDU})-(\ref{FDU3}).  It remains to prove (\ref{limenij1}) and (\ref{limenij2}). 

The definition of $U^{*}$ in (\ref{U}) implies that $U^{*}\subseteq U_{n}$ for all $n\geq K_{1}$ and some $K_{1}\in\mathbb{N}$.  For all $i\in V_{n}(U^{*})$ and $n\geq K_{1}$, it follows that:
\begin{align}
	\label{grandfinale81}
	\text{\sc layer}_{n+1}(i)&=\text{\sc layer}_{n}(i)+1,
\end{align}
where (\ref{grandfinale81}) follows from (\ref{shift4}) and the fact that $U^{*}\subseteq U_{n}$ for all $n\geq K_{1}$.

First, we justify (\ref{limenij1}).  Claim \ref{grandefinale}.5, implies that $V_{n}(U^{*})=F_{d}(U^{*})$ for all $n\geq K_{5}$ and some $K_{5}\in\mathbb{N}$.  Fix $i,j\in F_{d}(U^{*})$. For all $n\geq K_{5}$:
\begin{align}
	\label{grandfinale84}
	e_{n+1}(i,j)&=\text{\sc layer}_{n+1}(i)-\text{\sc layer}_{n+1}(j),\\
	\label{grandfinale85}
	&=(\text{\sc layer}_{n}(i)+1)+(\text{\sc layer}_{n}(j)+1),\\
	\nonumber
	&=\text{\sc layer}_{n}(i)-\text{\sc layer}_{n}(j),\\
	\label{grandfinale87}
	&=e_{n}(i,j)
\end{align}
where (\ref{grandfinale84}) follows from (\ref{enij}), (\ref{grandfinale85}) follows from (\ref{grandfinale81}) and because $F_{d}(U^{*})=V_{n}(U^{*})$ for $n\geq K_{5}$, and (\ref{grandfinale87}) follows from (\ref{enij}). Set $e(i,j):=e_{n}(i,j)$ for $n=K_{5}$.  Now (\ref{grandfinale87}) implies that $e_{n}(i,j)=e(i,j)$ for all $n\geq K_{5}$, which proves (\ref{limenij1}).

Now we justify (\ref{limenij2}).  Fix $i\in F_{d}(U^{*})$ and $j\notin F_{d}(U^{*})$.  Invoking Claim \ref{grandefinale}.4, we construct an infinite sequence $\{n_{q}:q\in\mathbb{N}\}$ where $n_{0}=K_{5}$ such that $j\notin V^{*}_{n_{q}}$.  First we show that $j$ satisfies:
\begin{align}
	\label{grandfinale83}
	\text{\sc layer}_{n_{q}+1}(j)=\text{\sc layer}_{n_{q}}(j)\hspace{2mm}\text{for all}\hspace{1mm}q\in\mathbb{N}.
\end{align}
Since $V_{n_{q}}(U^{*})\subseteq F_{d}(U^{*})$ for all $q\in\mathbb{N}$, it follows that $j\notin V_{n_{q}}(U^{*})$.  If $j\notin V_{n_{q}}(U^{*})$ and $j\notin V^{*}_{n_{q}}$, it follows that $j\notin {\bf v}_{n_{q}}(s)$ for any $s\in U^{*}$ and, as implied by (\ref{vstar}), any $s\in U_{n_{q}}\setminus U^{*}$.  Therefore $j\notin{\bf v}_{n_{q}}(s)$ for any $s\in U_{n_{q}}$, so (\ref{grandfinale83}) follows from (\ref{shift4}). 
For all $n\geq K_{5}$:
\begin{align}
	\label{grandfinale93}
	e_{n_{q}+1}(i,j)&=\text{\sc layer}_{n_{q}+1}(i)-\text{\sc layer}_{n_{q}+1}(j),\\
	\label{grandfinale94}
	&=(\text{\sc layer}_{n_{q}}(i)+1)-\text{\sc layer}_{n_{q}+1}(j),\\
	\label{pregrandfinale95}
	&=(\text{\sc layer}_{n_{q}}(i)+1)-\text{\sc layer}_{n_{q}}(j),\\
	\nonumber
	&=\text{\sc layer}_{n_{q}}(i)-\text{\sc layer}_{n_{q}}(j)+1,\\
	\label{grandfinale96}
	&=e_{n_{q}}(i,j)+1,
\end{align}
where (\ref{grandfinale93}) follows from (\ref{enij}), (\ref{grandfinale94}) follows from (\ref{grandfinale81}) and because $F_{d}(U^{*})=V_{n}(U^{*})$ for $n\geq K_{5}$, (\ref{pregrandfinale95}) follows from (\ref{grandfinale83}), and (\ref{grandfinale96}) follows from (\ref{enij}).  More generally, for all $n\geq K_{5}$:
\begin{align}
	\label{grandfinale97}
	e_{n+1}(i,j)&=\text{\sc layer}_{n+1}(i)-\text{\sc layer}_{n+1}(j),\\
	\label{grandfinale98}
	&=(\text{\sc layer}_{n}(i)+1)-\text{\sc layer}_{n+1}(j),\\
	\label{grandfinale99}
	&\geq(\text{\sc layer}_{n}(i)+1)-(\text{\sc layer}_{n}(j)+1),\\
	\nonumber
	&=\text{\sc layer}_{n}(i)-\text{\sc layer}_{n}(j),\\
	\label{grandfinale100}
	&=e_{n}(i,j),
\end{align}
where (\ref{grandfinale97}) follows from (\ref{enij}), (\ref{grandfinale98}) follows because (\ref{grandfinale81}) applies for $n\geq K_{1}$ and because $F_{d}(U^{*})=V_{n}(U^{*})$ for $n\geq K_{5}$, (\ref{grandfinale99}) follows because (\ref{shift4}) implies $\text{\sc layer}_{n}(j)\leq\text{\sc layer}_{n+1}(j)\leq\text{\sc layer}_{n}(j)+1$, and (\ref{grandfinale100}) follows from (\ref{enij}).  Combining (\ref{grandfinale96}) and (\ref{grandfinale100}) implies $\lim_{n\rightarrow\infty}e_{n}(i,j)=\infty$ if $i\in F_{d}(U^{*})$ and $j\in F_{d}({\cal N})\setminus F_{d}(U^{*})$, which proves (\ref{limenij2}).
\end{proof}

We can now complete the proof of Lemma \ref{grandefinale}.  The proof is by contradiction.  If the Lemma is false, then Claim \ref{grandefinale}.1 implies there exists some non-empty $U^{*}$ defined by (\ref{U}) for all $n\geq K_{1}$ and some $K_{1}\in\mathbb{N}$.  For some function $l(n)$, all $q\in\{l(n)+1,\ldots,|{\bf L}_{d,n}|-1\}$, all $n\geq K_{6}$ and some $K_{6}\in\mathbb{N}$, Claim \ref{grandefinale}.5 and \ref{grandefinale}.6 imply:   
\begin{align}
	\label{aln2}
	A_{q,n}(U^{*})&=L_{q,n}.
\end{align}
To justify (\ref{aln2}), suppose by contradiction, there is some $i\in L_{q,n}$ and $q\in\{l(n)+1,\ldots,|{\bf L}_{d,n}|-1\}$ such that $i\notin A_{q,n}(U^{*})$ for some $n\geq K_{6}$.  It follows from (\ref{FDU}) and (\ref{FDU2}) that $i\in F_{d}(U^{*})$.  But $i\notin A_{q,n}(U^{*})$ implies $i\notin v_{n}(s)$ for all $s\in U^{*}$ which contradicts Claim \ref{grandefinale}.5.  

For all $q\in\{l(n)+1,\ldots,|{\bf L}_{d,n}|-1\}$ and $n\geq K_{6}$: 
\begin{align}
	\nonumber
	&\tilde{A}_{q,n}(U^{*})\\
	\label{newgrandefinale2}
	&=F_{I(d)}\cup(\cup^{q}_{k=0}L_{k,n})\setminus A_{q,n}(U^{*}),\\
	\nonumber
	&=F_{I(d)}\cup(\cup^{l(n)}_{k=0}L_{k,n})\\
	\label{newgrandefinale3}
	&\hspace{29mm}\cup(\cup^{q}_{k=l(n)+1}L_{k,n})\setminus A_{q,n}(U^{*}),\\
	\nonumber
	&=F_{I(d)}\cup(F_{d}({\cal N})\setminus F_{d}(U^{*}))\\
	\label{newgrandefinale4}
	&\hspace{29mm}\cup(\cup^{q}_{k=l(n)+1}L_{k,n})\setminus A_{q,n}(U^{*}),\\
	\label{newgrandefinale5}
	&=\tilde{F}_{d}(U^{*})\cup(\cup^{q}_{k=l(n)+1}L_{k,n})\setminus A_{q,n}(U^{*}),\\
	\label{newgrandefinale6}
	&=\tilde{F}_{d}(U^{*})\cup(\cup^{q-1}_{k=l(n)+1}A_{k,n}(U^{*})),
\end{align}
where (\ref{newgrandefinale2}) follows from (\ref{alntilde}), (\ref{newgrandefinale3}) follows because $q\in\{l(n)+1,\ldots,|{\bf L}_{d,n}|-1\}$ by assumption, (\ref{newgrandefinale4}) follows from (\ref{FDU2}), (\ref{newgrandefinale5}) follows from the definition of $\tilde{F}_{d}(U^{*})$, and (\ref{newgrandefinale6}) follows from (\ref{aln2}).  

For every $s\notin U^{*}$, Claim \ref{grandefinale}.3 implies $s\notin U_{n_{s}}\setminus U^{*}$ for some $n_{s}\in\mathbb{N}$.  Therefore $s\notin U_{n_{s}}$.  Lemma \ref{induction} implies $s\in Z_{n_{s}+1}$.  Define $K_{7}:=\max_{s\notin U^{*}}n_{s}+1$.  Since Lemma \ref{induction} also implies $Z_{n}\subseteq Z_{n+1}$ for all $n\in\mathbb{N}$, it follows that $({\cal S}(d)\setminus U^{*})\subseteq Z_{n}$ for all $n\geq K_{7}$.  For all $n\geq\max\{K_{6},K_{7}\}$:
\begin{align}
	\nonumber
	&R_{U_{n}\setminus U^{*}}\\
	\label{newgrandfinale12}
	&\hspace{4mm}<\sum^{|{\bf L}_{d,n}|-1}_{q=0}I(X_{A_{q,n}(U_{n}\setminus U^{*})};Y_{d}|X_{\tilde{A}_{q,n}(U_{n}\setminus U^{*})}),\\
	\label{newgrandfinale13}
	&\hspace{4mm}=\sum^{l(n)}_{q=0}I(X_{A_{q,n}(U_{n}\setminus U^{*})};Y_{d}|X_{\tilde{A}_{q,n}(U_{n}\setminus U^{*})}),
\end{align}
where (\ref{newgrandfinale12}) follows because $(U_{n}\setminus U^{*})\subseteq({\cal S}(d)\setminus U^{*})\subseteq Z_{n}$ and all subsets of $Z_{n}$ satisfy (\ref{flowdecomp3}) by definition, and (\ref{newgrandfinale13}) follows from (\ref{FDU3}).  For all $n\geq\max\{K_{6},K_{7}\}$:
\begin{align}
	\label{newgrandfinale7}
	R_{U_{n}}&>\sum^{|{\bf L}_{d,n}|-1}_{q=0}I(X_{A_{q,n}(U_{n})};Y_{d}|X_{\tilde{A}_{q,n}(U_{n})}),\\
	\nonumber
	&=\sum^{l(n)}_{q=0}I(X_{A_{q,n}(U_{n})};Y_{d}|X_{\tilde{A}_{q,n}(U_{n})})\\
	\nonumber
	&\hspace{4mm}+\sum^{|{\bf L}_{d,n}|-1}_{q=l(n)+1}I(X_{A_{q,n}(U_{n})};Y_{d}|X_{\tilde{A}_{q,n}(U_{n})}),\\
	\nonumber
	&=\sum^{l(n)}_{q=0}I(X_{A_{q,n}(U_{n})};Y_{d}|X_{\tilde{A}_{q,n}(U_{n})})\\
	\label{newgrandfinale9}
	&\hspace{5mm}+\sum^{|{\bf L}_{d,n}|-1}_{q=l(n)+1}I(X_{A_{q,n}(U^{*})};Y_{d}|X_{\tilde{A}_{q,n}(U^{*})}),\\
	\nonumber
	&=\sum^{l(n)}_{q=0}I(X_{A_{q,n}(U_{n})};Y_{d}|X_{\tilde{A}_{q,n}(U_{n})})\\
	\label{newgrandfinale10}
	&\hspace{23mm}+I(X_{F_{d}(U^{*})};Y_{d}|X_{\tilde{F}_{d}(U^{*})}),\\
	\nonumber
	&=\sum^{l(n)}_{q=0}I(X_{A_{q,n}(U_{n}\setminus U^{*})};Y_{d}|X_{\tilde{A}_{q,n}(U_{n}\setminus U^{*})})\\
	\label{newgrandfinale11}
	&\hspace{23mm}+I(X_{F_{d}(U^{*})};Y_{d}|X_{\tilde{F}_{d}(U^{*})}),
\end{align}
where (\ref{newgrandfinale7}) follows from the definition of $U_{n}$, (\ref{newgrandfinale9}) follows from (\ref{aln2}), (\ref{newgrandfinale10}) follows from (\ref{FDU2}), and (\ref{newgrandfinale11}) follows from Claim \ref{grandefinale}.5, (\ref{FDU}) and (\ref{FDU3}).

Now $R_{U_{n}}=R_{U_{n}\setminus U^{*}}+R_{U^{*}}$.  Since $n\geq\max\{K_{6},K_{7}\}$ we can invoke (\ref{newgrandfinale13}).  Together (\ref{newgrandfinale13}) and (\ref{newgrandfinale11}) imply:
\begin{align}
	\label{actualgrandefinale}
	R_{U^{*}}&>I(X_{F_{d}(U^{*})};Y_{d}|X_{\tilde{F}_{d}(U^{*})}).
\end{align}
By selection ${\bf R}\in{\cal R}_{d}({\bf F})$ satisfies (\ref{biggiewiggie}) for all $S\subseteq{\cal S}(d)$, which contradicts (\ref{actualgrandefinale}).  It follows that the non-empty $U^{*}$ defined in (\ref{U}) does not exist.  Therefore Claim \ref{grandefinale}.1 implies ${\bf R}\in{\cal R}({\bf F},{\bf L}_{d,n})$ for some $n\in\mathbb{N}$.
\end{proof}

\begin{figure*}[!ht]
        \center{\includegraphics[width=\textwidth]
        {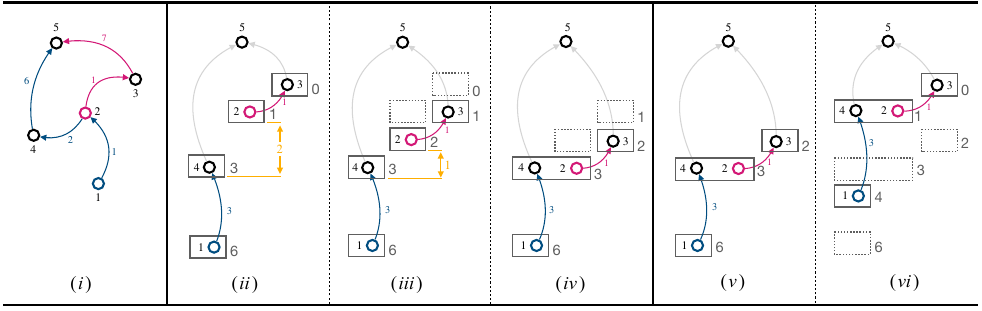}}
        \caption{(i) ${\bf F}=\{{\bf f}(1),{\bf f}(2)\}$, ${\bf f}(1):=1\xrightarrow{1}2\xrightarrow{2}4\xrightarrow{6}5$, ${\bf f}(2):=2\xrightarrow{1}3\xrightarrow{7}5$. (ii) The flows ${\bf F}$ and layering ${\bf L}_{5}$ where ${\bf L}_{ 5}=(\{3\},\{2\},\{\},\{4\},\{\},\{\},\{1\})$ and ${\cal S}(5)=\{1,2\}$.  The virtual flows are ${\bf v}(1)=1\xrightarrow{3}4$ and ${\bf v}(2)=2\xrightarrow{1}3$.  Since $V(\{2\})\subseteq\cup^{1}_{q=0}L_{q}$ and $V(\{1\})\subseteq\cup^{6}_{q=3}L_{q}$, $({\bf F},{\bf L}_{5})$ satisfies (P1) for $S_{c}=\{2\}$, $l=1$, and $k=2$. (iii) The flows ${\bf F}$ and layering ${\bf L}^{\prime}_{5}$ where ${\bf L}^{\prime}_{5}=\text{\sc shift}({\bf L}_{5},\{2\})$ and $L^{\prime}_{5}=(\{\},\{3\},\{2\},\{4\},\{\},\{\},\{1\})$.  The virtual flows are ${\bf v}^{\prime}(1)=1\xrightarrow{3}4$ and ${\bf v}^{\prime}(2)=2\xrightarrow{1}3$.  Since $V^{\prime}(\{2\})\subseteq\cup^{2}_{q=0}L_{q}$ and $V^{\prime}(\{1\})\subseteq\cup^{6}_{q=3}L_{q}$, $({\bf F},{\bf L}^{\prime}_{5})$ satisfies (P1) for $S_{c}=\{2\}$, $l=2$, and $k=1$.  (iv) The flows ${\bf F}$ and layering ${\bf L}^{\prime\prime}_{5}$ where ${\bf L}^{\prime\prime}_{5}=\text{\sc shift}({\bf L}^{\prime}_{5},\{2\})$ and ${\bf L}^{\prime\prime}_{5}=(\{\},\{\},\{3\},\{2,4\},\{\},\{\},\{1\})$.  The virtual flows are ${\bf v}^{\prime\prime}(1)=1\xrightarrow{3}4$ and ${\bf v}^{\prime\prime}(2)=2\xrightarrow{1}3$.  There are no subsets $S_{c}\subseteq{\cal S}(5)$ for which $({\bf F},{\bf L}^{\prime\prime}_{5})$ satisfies (P1). (v) The flows ${\bf F}$ and layering ${\bf L}^{\prime\prime}_{5}$ where ${\bf L}^{\prime\prime}_{5}=(\{\},\{\},\{3\},\{2,4\},\{\},\{\},\{1\})$. (vi) The flows ${\bf F}$ and layering ${\bf L}^{\prime\prime\prime}_{5}=\text{\sc norm}({\bf L}^{\prime\prime}_{5},2)$ where $L^{\prime\prime\prime}_{5}=(\{3\},\{2,4\},\{\},\{\},\{1\})$.  The virtual flows are ${\bf v}^{\prime\prime\prime}(1)=1\xrightarrow{3}4$ and ${\bf v}^{\prime\prime\prime}(2)=2\xrightarrow{1}3$.  ${\cal R}({\bf F},{\bf L}^{\prime\prime}_{5})={\cal R}({\bf F},{\bf L}^{\prime\prime\prime}_{5})$ since $L^{\prime\prime\prime}_{q}=L^{\prime\prime}_{q+2}$ for $q=0,\ldots,|{\bf L}^{\prime\prime\prime}_{5}|$.}
        \label{coarsepartitions}
 \end{figure*}

For Step 2 of the proof of Theorem \ref{theoremone}, we reuse the notation used in the proofs of Lemmas \ref{disjoint} - Lemma \ref{induction}.  Let ${\bf F}:=\{{\bf f}(s):s\in{\cal N}\}$ where ${\bf f}(s)$ satisfies (F1)-(F5) for all $s\in{\cal N}$.  For any $S\subseteq{\cal N}$, let $F_{d}({\cal N}):=\{j\in{\bf f}(s,d):s\in S\cap{\cal S}(d)\}$ for some fixed $d\in{\cal N}$.  Let ${\bf L}_{d}:=(L_{0},L_{1},\ldots,L_{|{\bf L}_{d}|-1})$ be a layering of $F_{d}({\cal N})$ that satisfies (L1)-(L5).  Let $\{{\bf v}(s):s\in{\cal S}(d)\}$ denote the virtual flows generated by $({\bf F},{\bf L}_{d})$ where ${\bf v}(s)$ satisfies (V1)-(V5) for all $s\in{\cal S}(d)$.  

Let ${\cal R}({\bf F},{\bf L}_{d})$ denote the set of rate vectors that satisfy:
\begin{align}
	\label{rfd}
	R_{S}&<\sum^{|{\bf L}_{d}|-1}_{l=0}I(X_{A_{l}(S)};Y_{d}|X_{\tilde{A}_{l}(S)}),
\end{align}
for all $S\subseteq{\cal S}(d)$ where, 
\begin{align}
	\label{newA}
	A_{l}(S)&:=\{i\in{\bf v}(s):s\in S\cap{\cal S}(d)\}\cap L_{l},\\
	\label{newAtilde}
	\tilde{A}_{l}(S)&:=(\cup_{i\in I(d)}F_{i}({\cal N}))\setminus(\cup^{|{\bf L}_{d}|-1}_{q=l+1}L_{q}\cup A_{l}(S)).
\end{align}
and $I(d)$ is defined in (\ref{id}).  Let ${\cal R}_{d}({\bf F})$ denote the set of rate vectors that satisfy (\ref{biggiewiggie}) for all $S\subseteq{\cal S}(d)$.  Lemma \ref{grandefinale} proves that for any ${\bf R}\in{\cal R}_{d}({\bf F})$, there exists some layering ${\bf L}_{d}$ such that ${\bf R}\in{\cal R}({\bf F},{\bf L}_{d})$, which finishes Step 1 of the proof of Theorem \ref{theoremone}.  

For any $S\subseteq{\cal S}(d)$, let $V(S):=\{j\in{\bf v}(s):s\in S\}$.  For Step 2, we define the notion of a \textit{coarse} layering or partition in which $({\bf F},{\bf L}_{d})$ satisfies the following condition:

(P1) For some non-empty $S_{c}\subseteq{\cal S}(d)$, fixed $k\geq 1$, and all $l\in\{0,\ldots,|{\bf L}_{d}|-1\}$:
\begin{align}
	\label{coarse1}
	V(S_{c})&\subseteq\cup^{l}_{q=0}L_{q},\\
	\label{coarse2}
	V({\cal S}(d)\setminus S_{c})&\subseteq\cup^{|{\bf L}_{d}|-1}_{q=l+k}L_{q}.
\end{align}

We successively refine the coarse layering by removing partition layers using a $\text{\sc shift}$ operator.  Given some $({\bf F},{\bf L}_{d})$ and $S_{c}\subseteq{\cal S}(d)$ that satisfies (P1), define ${\bf L}^{\prime}_{d}:=\text{\sc shift}({\bf L}_{d},S_{c})$ where ${\bf L}^{\prime}_{d}:=(L^{\prime}_{0},L^{\prime}_{1},\ldots,L^{\prime}_{|{\bf L}^{\prime}_{d}|-1})$ is a layering of $F_{d}({\cal N})$ that satisfies (L1)-L(5).  Let $\text{\sc layer}^{\prime}(\cdot)$ correspond to ${\bf L}^{\prime}_{d}$, and for every $i\in F_{d}({\cal N})$:
\begin{align}
\label{shift5}
\text{\sc layer}^{\prime}(i)=\begin{cases}l&i\in A_{l}({\cal N})\setminus A_{l}(S_{c}),\\ l+1 & i\in A_{l}(S_{c}).\end{cases}	
\end{align}
Let $\{{\bf v}^{\prime}(s):s\in{\cal S}(d)\}$ denote the virtual flows generated by $({\bf F},{\bf L}^{\prime}_{d})$ where ${\bf v}^{\prime}(s)$ satisfies (V1)-(V5) for all $s\in{\cal S}(d)$.  Let ${\cal R}({\bf F},{\bf L}^{\prime}_{d})$ denote the set of rate vectors that satisfy:
\begin{align}
	\label{rfdprime}
	R_{S}&<\sum^{|{\bf L}_{d}|-1}_{l=0}I(X_{A^{\prime}_{l}(S)};Y_{d}|X_{\tilde{A}^{\prime}_{l}(S)}),
\end{align}
for all $S\subseteq{\cal S}(d)$ where, 
\begin{align}
	\label{newAprime}
	A^{\prime}_{l}(S)&:=\{i\in{\bf v}^{\prime}(s):s\in S\cap{\cal S}(d)\}\cap L^{\prime}_{l},\\
	\label{newAprimetilde}
	\tilde{A}^{\prime}_{l}(S)&:=(\cup_{i\in I(d)}F_{i}({\cal N}))\setminus(\cup^{|{\bf L}^{\prime}_{d}|-1}_{q=l+1}L^{\prime}_{q}\cup A^{\prime}_{l}(S)).
\end{align}
For all $j\in{\bf f}(v(s),d)$:
\begin{align}
	\label{above}
	\text{\sc layer}(v(s))-\text{\sc layer}(j)\leq k_{v(s),j},
\end{align}
where (\ref{above}) follows from (\ref{firstIndexCoding}).  For all $j\in{\bf f}(s,v(s))$:
\begin{align}
	\label{below}
	\text{\sc layer}(j)-\text{\sc layer}(v(s))>k_{j,v(s)},
\end{align}
where (\ref{below}) follows from (\ref{secondIndexCoding}).  Similarly, for all $j\in{\bf f}(v^{\prime}(s),d)$:
\begin{align}
	\label{aboveprime}
	\text{\sc layer}^{\prime}(v^{\prime}(s))-\text{\sc layer}^{\prime}(j)\leq k_{v^{\prime}(s),j},
\end{align}
where (\ref{aboveprime}) follows from (\ref{firstIndexCoding}).  For all $j\in{\bf f}(s,v^{\prime}(s))$:
\begin{align}
	\label{belowprime}
	\text{\sc layer}^{\prime}(j)-\text{\sc layer}^{\prime}(v^{\prime}(s))>k_{j,v^{\prime}(s)},
\end{align}
where (\ref{belowprime}) follows from (\ref{secondIndexCoding}).  We now show that redundant layers can be eliminated from a coarse layering using the $\text{\sc shift}$ operator defined in (\ref{shift5}).  

\begin{lemma}
	\label{fine}
	 Suppose $({\bf F},{\bf L}_{d})$ satisfies (P1) for some $S_{c}\subseteq{\cal S}(d)$.  If ${\bf L}^{\prime}_{d}=\text{\sc shift}({\bf L}_{d},S_{c})$ then ${\cal R}({\bf F},{\bf L}_{d})\subseteq{\cal R}({\bf F},{\bf L}^{\prime}_{d})$.
\end{lemma}
\begin{proof}
We prove some preliminary claims.

\hspace{-3.5mm}{\bf Claim \ref{fine}.1.} \textit{If $i\in{\bf v}(s)$ for some $s\in{\cal S}(d)\setminus S_{c}$ then $i\in{\bf v}^{\prime}(s)$.}
\begin{proof}
The premise of Lemma \ref{fine} is that $({\bf F},{\bf L}_{d})$ satisfies (P1) for $S_{c}$.  Therefore (\ref{coarse2}) and the premise of Claim \ref{fine}.1, that $i\in{\bf v}(s)$ for some $s\in{\cal S}(d)\setminus S_{c}$, together imply that $i\in\cup^{|{\bf L}_{d}|-1}_{q=l+k}L_{q}$.  It follows that $\text{\sc layer}(i)\geq l+k$.  Moreover,
\begin{align}
	\label{vsnotinVS}
	v(s)\notin V(S_{c}),
\end{align}
since otherwise (\ref{coarse1}) implies $\text{\sc layer}(v(s))\leq l$, and $\text{\sc layer}(i)>\text{\sc layer}(v(s))$ which contradicts (V5).  Therefore (\ref{newA}), (\ref{shift5}), and (\ref{vsnotinVS}) imply that:
\begin{align}
	\label{layervs}
	\text{\sc layer}^{\prime}(v(s))=\text{\sc layer}(v(s)).
\end{align}
First we show that $v^{\prime}(s)=v(s)$ by verifying (\ref{aboveprime}) and (\ref{belowprime}).

\hspace{-3.5mm}{\bf Case 1:} $j\in\cup^{|{\bf L}_{d}|-1}_{q=l+k}L_{q}$.

If $j\in{\bf f}(v^{\prime}(s),d)$, we have the following inequalities.
\begin{align}
	\nonumber
	&\text{\sc layer}^{\prime}(v^{\prime}(s))-\text{\sc layer}^{\prime}(j)\\
	\label{fine1}
	&\hspace{30mm}=\text{\sc layer}^{\prime}(v(s))-\text{\sc layer}^{\prime}(j),\\
	\label{fine2}
	&\hspace{30mm}=\text{\sc layer}(v(s))-\text{\sc layer}^{\prime}(j),\\
	\label{prefine3}
	&\hspace{30mm}=\text{\sc layer}(v(s))-\text{\sc layer}(j),\\
	\label{fine3}
	&\hspace{30mm}\leq k_{v(s),j},\\
	\label{fine4}
	&\hspace{30mm}=k_{v^{\prime}(s),j},
\end{align}
where (\ref{fine1}) follows from the hypothesis that $v^{\prime}(s)=v(s)$, (\ref{fine2}) follows from (\ref{layervs}), and (\ref{prefine3}) follows because the premise of Case 1 that $j\in\cup^{|{\bf L}_{d}|-1}_{q=l+k}L_{q}$ and (\ref{coarse1}) imply that $j\notin V(S_{c})$.  Therefore (\ref{newA}) and (\ref{shift5}) imply that $\text{\sc layer}^{\prime}(j)=\text{\sc layer}(j)$.  Moreover, (\ref{fine3}) follows from (\ref{above}) and the hypothesis that $v^{\prime}(s)=v(s)$ since $j\in{\bf f}(v^{\prime}(s),d)$ implies $j\in{\bf f}(v(s),d)$ if $v^{\prime}(s)=v(s)$, and (\ref{fine4}) follows from the hypothesis that $v^{\prime}(s)=v(s)$.  It follows from (\ref{fine4}) that $v^{\prime}(s)=v(s)$ satisfies (\ref{aboveprime}) for all $j\in{\bf f}(v^{\prime}(s),d)$ and $j\in\cup^{|{\bf L}_{d}|-1}_{q=l+k}L_{q}$.  

If $j\in{\bf f}(s,v^{\prime}(s))$, we have the following inequalities: 
\begin{align}
	\nonumber
	&\text{\sc layer}^{\prime}(j)-\text{\sc layer}^{\prime}(v^{\prime}(s))\\
	\label{fine5}
	&\hspace{30mm}=\text{\sc layer}^{\prime}(j)-\text{\sc layer}^{\prime}(v(s)),\\
	\label{fine6}
	&\hspace{30mm}=\text{\sc layer}^{\prime}(j)-\text{\sc layer}(v(s)),\\
	\label{prefine7}
	&\hspace{30mm}=\text{\sc layer}(j)-\text{\sc layer}(v(s)),\\
	\label{fine7}
	&\hspace{30mm}>k_{j,v(s)},\\
	\label{fine8}
	&\hspace{30mm}=k_{j,v^{\prime}(s)},
\end{align} 
where (\ref{fine5}) follows from the hypothesis that $v^{\prime}(s)=v(s)$, (\ref{fine6}) follows from (\ref{layervs}), (\ref{prefine7}) follows because the premise of Case 1 that $j\in\cup^{|{\bf L}_{d}|-1}_{q=l+k}L_{q}$ and (\ref{coarse1}) imply that $j\notin V(S_{c})$.  Therefore (\ref{newA}) and (\ref{shift5}) imply that $\text{\sc layer}^{\prime}(j)=\text{\sc layer}(j)$.  Moreover, (\ref{fine7}) follows from (\ref{below}) and the hypothesis that $v^{\prime}(s)=v(s)$ since $j\in{\bf f}(v^{\prime}(s),d)$ implies $j\in{\bf f}(v(s),d)$ if $v^{\prime}(s)=v(s)$, and (\ref{fine8}) follows from the hypothesis that $v^{\prime}(s)=v(s)$.  It follows from (\ref{fine8}) that $v^{\prime}(s)=v(s)$ satisfies (\ref{secondIndexShiftedComplete}) for all $j\in{\bf f}(v^{\prime}(s),d)$ and $j\in\cup^{|{\bf L}_{d}|-1}_{q=l+k}L_{q}$.

\hspace{-3.5mm}{\bf Case 2:} $j\in\cup^{l}_{q=0}L_{q}$.

If $j\in{\bf f}(v^{\prime}(s),d)$, we have the following inequalities.
\begin{align}
	\nonumber
	&\text{\sc layer}^{\prime}(v^{\prime}(s))-\text{\sc layer}^{\prime}(j)\\
	\label{fine9}
	&\hspace{21mm}=\text{\sc layer}^{\prime}(v(s))-\text{\sc layer}^{\prime}(j),\\
	\label{fine10}
	&\hspace{21mm}=\text{\sc layer}(v(s))-\text{\sc layer}^{\prime}(j),\\
	\label{fine11}
	&\hspace{21mm}\leq\text{\sc layer}(v(s))-\text{\sc layer}(j),\\
	\label{fine12}
	&\hspace{21mm}\leq k_{v(s),j},\\
	\label{fine13}
	&\hspace{21mm}=k_{v^{\prime}(s),j},
\end{align}
where (\ref{fine9}) follows from the hypothesis that $v^{\prime}(s)=v(s)$, (\ref{fine10}) follows from (\ref{layervs}), (\ref{fine11}) follows because (\ref{shift5}) implies that $\text{\sc layer}(j)\leq\text{\sc layer}^{\prime}(j)\leq\text{\sc layer}(j)+1$.  Moreover, (\ref{fine12}) follows from (\ref{above}) and the hypothesis that $v^{\prime}(s)=v(s)$ since $j\in{\bf f}(v^{\prime}(s),d)$ implies $j\in{\bf f}(v(s),d)$ if $v^{\prime}(s)=v(s)$, and (\ref{fine13}) follows from the hypothesis that $v^{\prime}(s)=v(s)$.  It follows from (\ref{fine13}) that $v^{\prime}(s)=v(s)$ satisfies (\ref{aboveprime}) for all $j\in{\bf f}(v^{\prime}(s),d)$ and $j\in\cup^{l}_{q=0}L_{q}$.

If $v^{\prime}(s)=v(s)$ and $j\in\cup^{l}_{q=0}L_{q}$ we show that $j\notin{\bf f}(s,v^{\prime}(s))$.  It follows from (\ref{coarse1}) and (\ref{vsnotinVS}) that $\text{\sc layer}(v(s))\geq l+k$.  If $j\in{\bf f}(s,v^{\prime}(s))$ and $v^{\prime}(s)=v(s)$ then $j\in{\bf f}(s,v(s))$.  By the case assumption, $j\in\cup^{l}_{q=0}L_{q}$, which implies $\text{\sc layer}(j)\leq l$.  This contradicts (\ref{belowprime}), so $j\notin{\bf f}(s,v^{\prime}(s))$.  Therefore $v^{\prime}(s)=v(s)$ satisfies (\ref{aboveprime}) and (\ref{belowprime}) for both Case 1 and Case 2. 

Finally, for any $s\in{\cal S}(d)\setminus S_{c}$, we show that $i\in{\bf v}^{\prime}(s)$ if $i\in{\bf v}(s)$.  We have the following equalities:
\begin{align}
	\nonumber
	&\text{\sc layer}^{\prime}(v^{\prime}(s))-\text{\sc layer}^{\prime}(i)\\
	\label{fine14}
	&\hspace{21mm}=\text{\sc layer}^{\prime}(v(s))-\text{\sc layer}^{\prime}(i),\\
	\label{fine15}
	&\hspace{21mm}=\text{\sc layer}(v(s))-\text{\sc layer}^{\prime}(i),\\
	\label{fine16}
	&\hspace{21mm}=\text{\sc layer}(v(s))-\text{\sc layer}(i),\\
	\label{fine17}
	&\hspace{21mm}=k_{v(s),i},\\
	\label{fine18}
	&\hspace{21mm}=k_{v^{\prime}(s),i},
\end{align}
where (\ref{fine14}) follows from the hypothesis that $v^{\prime}(s)=v(s)$ and (\ref{fine15}) follows from (\ref{layervs}).  Since $i\in{\bf v}(s)$ for some $s\in{\cal S}(d)\setminus S$, it follows from (\ref{coarse1}) and (\ref{coarse2}) that $i\notin V(S_{c})$.  Therefore (\ref{fine16}) follows from both (\ref{newA}) and (\ref{shift5}).  The premise of Claim \ref{fine}.1 that $i\in{\bf v}(s)$ and (V5) imply (\ref{fine17}), and (\ref{fine18}) follows because $v^{\prime}(s)=v(s)$.  It follows from (V5) and (\ref{fine18}) that $i\in{\bf v}^{\prime}(s)$ which completes the proof of Claim \ref{fine}.1.
\end{proof}

\hspace{-3.5mm}{\bf Claim \ref{fine}.2.} \textit{If $i\in{\bf v}(s)$ for some $s\in S_{c}$ then $i\in{\bf v}^{\prime}(s)$.}
\begin{proof}
There are two cases to consider.  If $v^{\prime}(s)=v(s)$ then Claim \ref{join}.1 implies $i\in{\bf v}^{\prime}(s)$.  Otherwise if $v^{\prime}(s)\neq v(s)$, then Claim \ref{join}.2 implies $i\in{\bf v}^{\prime}(s)$.  Both claims impose no conditions on the $S\subseteq{\cal S}(d)$ in the shift operator.
\end{proof}

We can now complete the proof of Lemma \ref{fine}.  Together (\ref{coarse1}), (\ref{coarse2}) and Claim \ref{fine}.1 imply that $A_{q}(S)\subseteq A_{q}^{\prime}(S)$ for all $q\in\{l+k,\ldots,|{\bf L}^{\prime}_{d}|-1\}$ and $S\subseteq{\cal S}(d)$.  It follows from Claim \ref{induction}.2 that:  
	\begin{align}
		\label{firstA}
		I(X_{A^{\prime}_{q}(S)};Y_{d}|X_{\tilde{A}^{\prime}_{q}(S)})>I(X_{A_{q}(S)};Y_{d}|X_{\tilde{A}_{q}(S)}),
	\end{align}
	for all $S\subseteq{\cal S}(d)$ and $q\in\{l+k,\ldots,|{\bf L}^{\prime}_{d}|-1\}$.  Similarly, (\ref{newA}), (\ref{coarse1}), (\ref{shift5}), and Claim \ref{fine}.2 imply that $A_{q-1}(S)\subseteq A_{q}^{\prime}(S)$ for all $q\in\{1,\ldots,l+1\}$ and $S\subseteq{\cal S}(d)$.  It follows from Claim \ref{induction}.2 that:
	\begin{align}
		\label{secondA}
		\hspace{-2mm}I(X_{A^{\prime}_{q}(S)};Y_{d}|X_{\tilde{A}^{\prime}_{q}(S)})>I(X_{A_{q-1}(S)};Y_{d}|X_{\tilde{A}_{q-1}(S)}),
	\end{align}  
	for all $S\subseteq{\cal S}(d)$ and $q\in\{0,\ldots,l\}$.  Since ${\bf R}\in{\cal R}({\bf F},{\bf L}_{d})$ as defined (\ref{rfd}) is the premise of Lemma \ref{fine}, it follows from (\ref{firstA}) and (\ref{secondA}) that ${\bf R}\in{\cal R}({\bf F},{\bf L}^{\prime}_{d})$ as defined in (\ref{rfdprime}).
\end{proof}

We also normalize ${\bf L}_{d}$ to remove any redundant initial layers so that $L_{0}\neq\{\}$.  Given ${\bf L}_{d}=\{L_{0},\ldots,L_{|{\bf L}_{d}|-1}\}$, define:
\begin{align}
	\nonumber
	M&:=\{l:L_{l}=\{\}\},\\
	\nonumber
	l_{0}&:=\min_{l\in M}l.
\end{align}
Now define ${\bf L}^{\prime}_{d}:=\text{\sc norm}({\bf L}_{d},l_{0})$ where ${\bf L}^{\prime}_{d}:=(L^{\prime}_{0},L^{\prime}_{1},\ldots,L^{\prime}_{|{\bf L}^{\prime}_{d}|-1})$ is a layering of $F_{d}({\cal N})$.  Let $\text{\sc layer}^{\prime}(\cdot)$ correspond to ${\bf L}^{\prime}_{d}$ and for every $i\in F_{d}({\cal N})$:
\begin{align}
	\label{norm}
	\text{\sc layer}^{\prime}(i)&=\text{\sc layer}(i)-l_{0}.
\end{align}
Let $\{{\bf v}^{\prime}(s):s\in{\cal S}(d)\}$ denote the virtual flows generated by $({\bf F},{\bf L}^{\prime}_{d})$.  Let ${\cal R}({\bf F},{\bf L}^{\prime}_{d})$ denote the set of rate vectors that satisfy (\ref{flowdecomp3}) where $A^{\prime}_{l}(\cdot)$ and $\tilde{A}^{\prime}_{l}(\cdot)$ are defined in (\ref{newAprime}) and (\ref{newAprimetilde}).  Lemma \ref{vssuffices} implies that $v(s)$ and $v^{\prime}(s)$ uniquely satisfy (\ref{above}), (\ref{below}), (\ref{aboveprime}) and (\ref{belowprime}).
\begin{lemma}
	\label{normalize}
	Fix ${\bf F}$ and suppose ${\bf L}^{\prime}_{d}=\text{\sc norm}({\bf L}_{d},l_{0})$.  Then ${\cal R}({\bf F},{\bf L}_{d})\subseteq{\cal R}({\bf F},{\bf L}^{\prime}_{d})$.
\end{lemma}
\begin{proof}
	First we show that $v^{\prime}(s)=v(s)$ satisfies (\ref{aboveprime}) and (\ref{belowprime}).  For all $j\in{\bf f}(v(s),d)$:
	\begin{align}
		\nonumber
		&\text{\sc layer}(v(s))-\text{\sc layer}^{\prime}(j)\\
		\label{normalize1}
		&\hspace{7mm}=(\text{\sc layer}(v(s))-l_{0})-(\text{\sc layer}(j)-l_{0}),\\
		\nonumber
		&\hspace{7mm}=\text{\sc layer}(v(s))-\text{\sc layer}(j),\\
		\label{normalize3}
		&\hspace{7mm}\leq k_{v(s),j},\\
		\label{normalize4}
		&\hspace{7mm}=k_{v^{\prime}(s),j},
	\end{align}
	where (\ref{normalize1}) follows from (\ref{norm}), (\ref{normalize3}) follows from (\ref{above}) and the hypothesis that $v^{\prime}(s)=v(s)$ since $j\in{\bf f}(v^{\prime}(s),d)$ implies $j\in{\bf f}(v(s),d)$ if $v^{\prime}(s)=v(s)$, and (\ref{normalize4}) follows from the hypothesis that $v^{\prime}(s)=v(s)$.  It follows from (\ref{normalize4}) that $v^{\prime}(s)=v(s)$ satisfies (\ref{aboveprime}).  For all $j\in{\bf f}(s,v^{\prime}(s))$:
	\begin{align}
		\nonumber
		&\text{\sc layer}^{\prime}(j)-\text{\sc layer}^{\prime}(v(s))\\
		\label{normalize5}
		&\hspace{7mm}=(\text{\sc layer}(j)-l_{0})-(\text{\sc layer}(v(s))-l_{0}),\\
		\nonumber
		&\hspace{7mm}=\text{\sc layer}(j)-\text{\sc layer}(v(s)),\\
		\label{normalize6}
		&\hspace{7mm}> k_{j,v(s)},\\
		\label{normalize7}
		&\hspace{7mm}=k_{j,v^{\prime}(s)},
	\end{align}
	where (\ref{normalize5}) follows from (\ref{norm}), (\ref{normalize6}) follows from (\ref{below}) and the hypothesis that $v^{\prime}(s)=v(s)$ since $j\in{\bf f}(s,v^{\prime}(s))$ implies $j\in{\bf f}(s,v(s))$ if $v^{\prime}(s)=v(s)$, and (\ref{normalize7}) follows from the hypothesis that $v^{\prime}(s)=v(s)$.  It follows from (\ref{normalize7}) that $v^{\prime}(s)=v(s)$ satisfies (\ref{belowprime}).  Finally we show that $i\in{\bf v}(s)$ implies $i\in{\bf v}^{\prime}(s)$ for all $s\in{\cal S}(d)$.
	\begin{align}
		&\text{\sc layer}^{\prime}(v^{\prime}(s))-\text{\sc layer}^{\prime}(i),\\
		\label{normalize8}
		&\hspace{7mm}=\text{\sc layer}^{\prime}(v(s))-\text{\sc layer}^{\prime}(i),\\
		\label{normalize9}
		&\hspace{7mm}=(\text{\sc layer}(v(s))-l_{0})-(\text{\sc layer}(i)-l_{0}),\\
		\nonumber
		&\hspace{7mm}=\text{\sc layer}(v(s))-\text{\sc layer}(i),\\
		\label{normalize11}
		&\hspace{7mm}=k_{v(s),i},\\
		\label{normalize12}
		&\hspace{7mm}=k_{v^{\prime}(s),i},
	\end{align}
	where  (\ref{normalize8}) follows because $v^{\prime}(s)=v(s)$, (\ref{normalize9}) follows from (\ref{norm}), (\ref{normalize11}) follows from (V5) and the premise of Lemma \ref{normalize} that $i\in{\bf v}(s)$, and (\ref{normalize12}) follows because $v^{\prime}(s)=v(s)$.  It follows from (V5) and (\ref{normalize12}) that $i\in{\bf v}^{\prime}(s)$.
	
	Since $i\in{\bf v}(s)$ implies $i\in{\bf v}^{\prime}(s)$, it follows from (\ref{norm}) that $A_{q}(S)\subseteq A^{\prime}_{q-l_{0}}(S)$ for all $q=0,\ldots,|{\bf L}_{d}|-1$ and $S\subseteq{\cal S}(d)$.  Therefore Claim \ref{induction}.2 implies:
	\begin{align}
		\label{normalize13}
		\hspace{-2.8mm}I(X_{A^{\prime}_{q}(S)};Y_{d}|X_{\tilde{A}^{\prime}_{q}(S)})\geq I(X_{A_{q-l_{0}}(S)};Y_{d}|X_{\tilde{A}_{q-l_{0}}(S)})
	\end{align}
	for all $S\subseteq{\cal S}(d)$ and $q\in\{0,\ldots,|{\bf L}_{d}|-1\}$.  Moreover, (\ref{norm}) implies $|{\bf L}_{d}|\geq|{\bf L}^{\prime}_{d}|$.  Therefore, if ${\bf R}\in{\cal R}({\bf F},{\bf L}_{d})$, it follows from (\ref{normalize13}) that ${\bf R}\in{\cal R}({\bf F},{\bf L}^{\prime}_{d})$, which implies ${\cal R}({\bf F},{\bf L}_{d})\subseteq{\cal R}({\bf F},{\bf L}^{\prime}_{d})$.
\end{proof}

We modify the ${\bf L}_{d,n^{*}}$ obtained at the end of Step 1 of the proof of Theorem \ref{theoremone}, by removing the superfluous partition layers as described in Algorithm \ref{algo}.

\begin{figure}[t]
        \center{\includegraphics[width=0.5\textwidth]
        {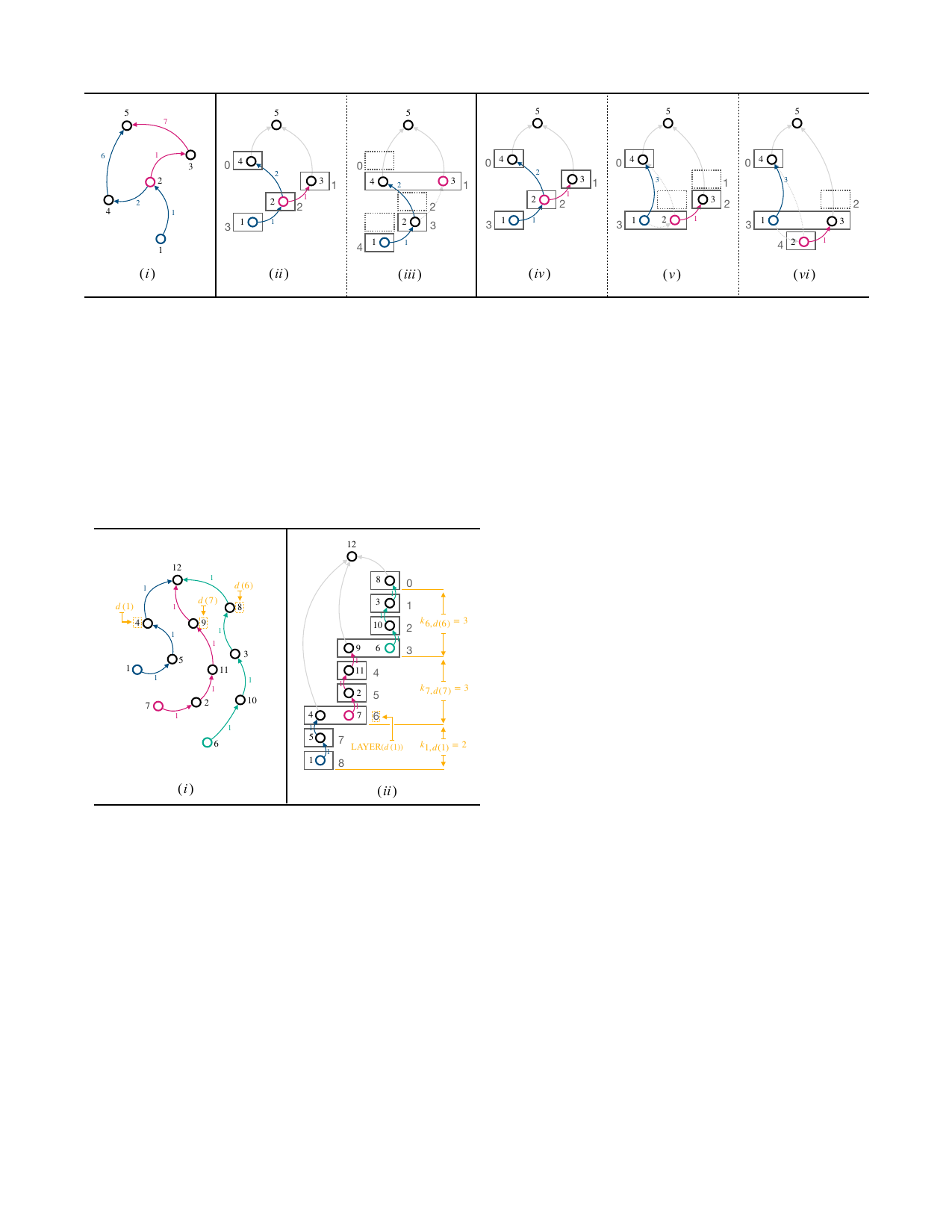}}
        \caption{(i) The flows ${\bf F}=\{{\bf f}(1),{\bf f}(6),{\bf f}(7)\}$ where ${\bf f}(1)=1\xrightarrow{1}5\xrightarrow{1}4\xrightarrow{1}12$, ${\bf f}(6)=6\xrightarrow{1}10\xrightarrow{1}3\xrightarrow{1}8\xrightarrow{1}12$, and ${\bf f}(7)=7\xrightarrow{1}2\xrightarrow{1}11\xrightarrow{1}9\xrightarrow{1}12$.  ${\cal S}(12)=\{1,6,7\}$. (ii) The flows ${\bf F}$ and layering ${\bf L}_{12}$ where ${\bf L}_{12}=(\{8\},\{3\},\{10\},\{6,9\},\{11\},\{2\},\{4,7\},\{5\},\{1\})$.  The virtual flows are ${\bf v}(1)={\bf f}(1,12)=1\xrightarrow{1}5\xrightarrow{1}4$, ${\bf v}(6)={\bf f}(6,12)=6\xrightarrow{1}10\xrightarrow{1}3\xrightarrow{1}8$, and ${\bf v}(7)={\bf f}(7,12)=7\xrightarrow{1}2\xrightarrow{1}11\xrightarrow{1}9$.  For $d=12$, it follows that $d(1)=4$, $d(6)=8$, and $d(7)=9$.  Since $({\bf F},{\bf L}_{12})$ violates (P1) for all subsets of ${\cal S}(12)$ and $L_{0}\neq\{\}$, it follows that $\text{\sc layer}(d(4))\leq k_{6,d(6)}+k_{7,d(7)}=6$.}
        \label{stacked}
 \end{figure}

\begin{algorithm}
\SetAlgoLined
\KwResult{An ${\bf L}_{d}$ for which ${\cal R}({\bf F},{\bf L}_{d,n^{*}})\subseteq{\cal R}({\bf F},{\bf L}_{d})$, $({\bf F}, {\bf L}_{d})$ violates (P1) for all $S_{c}\subseteq{\cal S}(d)$, and $L_{0}\neq\{\}$.}
 ${\bf L}_{d}:={\bf L}_{d,n^{*}}$\\
 \While{${\bf L}_{d}$ satisfies (P1) for some $S_{c}\subseteq{\cal S}(d)$}{
${\bf L}^{\prime}_{d}:=\text{\sc shift}({\bf L}_{d},S_{c})$\;
${\bf L}_{d}:={\bf L}^{\prime}_{d}$\;
 }
 ${\bf L}^{\prime}_{d}:=\text{\sc norm}({\bf L}_{d},l_{0})$\;
 ${\bf L}_{d}:={\bf L}^{\prime}_{d}$\;
 \caption{Trim and normalize $({\bf F},{\bf L}_{d,n^{*}})$}
 \label{algo}
\end{algorithm}

First we show that Algorithm \ref{algo} terminates.  Given $({\bf F},{\bf L}^{\prime}_{d})$ in Line 3, it follows from Claim \ref{fine}.1 and Claim \ref{fine}.2 that $V(S_{c})\subseteq V^{\prime}(S_{c})$ and $V({\cal S}(d)\setminus S_{c})\subseteq V^{\prime}({\cal S}(d)\setminus S_{c})$.  Moreover, $\text{\sc layer}^{\prime}(j)=\text{\sc layer}(j)$ if $j\in\cup^{|{\bf L}_{d}|-1}_{q=l+k}L_{q}$ and $\text{\sc layer}^{\prime}(j)=\text{\sc layer}(j)+1$ if $j\in\cup^{l+1}_{q=0}L_{q}$.  Now suppose $({\bf F},{\bf L}^{\prime}_{d})$ still satisfies (P1) at $S_{c}$.  For some $m_{1}\geq 1$, and $m_{2}\geq 0$ it follows that:
\begin{align}
	\label{coarse1prime}
	V^{\prime}(S_{c})\subseteq\cup^{l+m_{1}}_{q=0}L^{\prime}_{q},\\
	\label{coarse2prime}
	V^{\prime}({\cal S}(d)\setminus S_{c})\subseteq\cup^{|{\bf L}^{\prime}_{d}|-1}_{q=l+k-m_{2}}L^{\prime}_{q}.
\end{align} 
Comparing (\ref{coarse1prime}) and (\ref{coarse2prime}) with (\ref{coarse1}) and (\ref{coarse2}), it follows that ${\bf L}^{\prime}_{d}$ has $m_{1}+m_{2}$ fewer partition layers than ${\bf L}_{d}$.  Therefore the exit condition in Line 2 for $S_{c}$ will be triggered after some finite number of iterations.  Since the number of subsets of ${\cal S}(d)$ is finite, Algorithm \ref{algo} must terminate.  

Next we show that Algorithm \ref{algo} returns a layering ${\bf L}_{d}$ in which $L_{0}\neq\{\}$, $({\bf F},{\bf L}_{d})$ violates (P1) defined by (\ref{coarse1}) and (\ref{coarse2}), for all $S_{c}\subseteq{\cal S}(d)$, and ${\cal R}({\bf F},{\bf L}_{d,n^{*}})\subseteq{\cal R}({\bf F},{\bf L}_{d})$.  Lemma \ref{fine} implies that ${\cal R}({\bf F},{\bf L}_{d})\subseteq{\cal R}({\bf F},{\bf L}^{\prime}_{d})$ in Line 3 and Lemma \ref{normalize} implies that ${\cal R}({\bf F},{\bf L}_{d})\subseteq{\cal R}({\bf F},{\bf L}^{\prime}_{d})$ in Line 6.  Since Algorithm \ref{algo} terminates, it follows that ${\bf L}_{d}$ in Line 6 violates (P1) for all $S_{c}\subseteq{\cal S}(d)$.  By construction, ${\bf L}_{d}$ in Line 7 is normalized, so $L_{0}\neq\{\}$, which proves that Algorithm \ref{algo} is correct. 

Finally, we construct an ${\bf F}^{\prime}$ equivalent to ${\bf F}$ so that $({\bf F}^{\prime},{\bf L}_{d})$ satisfies (C1) and (C2).  Let ${\bf f}(s):=N_{1}\xrightarrow{k_{1}}N_{2}\xrightarrow{k_{2}}\cdots\xrightarrow{k_{e_{s}-1}}N_{e_{s}}$ and ${\bf F}:=\{{\bf f}(s):s\in{\cal N}\}$.  By assumption ${\bf F}$ satisfies (C2).  Suppose $d\in N_{l}$ for some $l\in\{1,\ldots,e_{s}\}$, and let $d(s):=N_{l-1}$ denote the one-hop predecessor(s) of node $d$ in ${\bf f}(s,d)$.  Recall from (F5) that $k_{s,d(s)}:=\sum^{l-2}_{q=1}k_{q}$.  Since $d\in N_{l}$, and $d(s)=N_{l-1}$, it follows that $k_{d(s),d}=k_{l-1}$.  

For every $s\in{\cal S}(d)$, set ${\bf f}^{\prime}(s):={\bf f}(s)$.  We modify the one-hop encoding delays $k^{\prime}_{d(s),d}$ as follows.  For every $s\in{\cal S}(d)$, set:
\begin{align}
	\label{finfin1}
  	k^{\prime}_{d(s),d}&:=\left[\sum_{i\in{\cal S}(d)\setminus\{s\}}k_{i,d(i)}\right]+1.
\end{align}

Observe that ${\bf f}^{\prime}(s,d)={\bf f}(s,d)$ and $k^{\prime}_{s,d(s)}=k_{s,d(s)}$ for all $s\in{\cal S}(d)$.  Let ${\bf F}^{\prime}:=\{{\bf f}^{\prime}(s):s\in{\cal S}(d)\}$.  By inspection, ${\bf F}^{\prime}$ is ``equivalent'' to ${\bf F}$ since the flows only differ in the encoding delays.  Therefore ${\bf F}^{\prime}$ satisfies (C2).  To verify (C1), we check that:
\begin{align}
	\label{finfin2}
	\text{\sc layer}(v(s))<k_{v(s),d}.
\end{align}
We have the following:
\begin{align}
	\nonumber
	\hspace{-8mm}\text{\sc layer}(v(s))&=(\text{\sc layer}(v(s))-\text{\sc layer}(d(s)))\\
	\nonumber
	&\hspace{34mm}+\text{\sc layer}(d(s)),\\
	\label{finfin3}
	&\leq k_{v(s),d(s)}+\text{\sc layer}(d(s)),
\end{align}
where (\ref{finfin3}) follows from (\ref{above}) since $d(s)\in{\bf f}(v(s),d)$.  Furthermore,
\begin{align}
	\label{finfin4}
	k_{v(s),d}&=k_{v(s),d(s)}+k_{d(s),d}.
\end{align}

Therefore if ${\bf L}_{d}$ satisfies:
\begin{align}
	\label{finfin5}
	\text{\sc layer}(d(s))<k_{d(s),d},
\end{align}
for all $s\in{\cal S}(d)$, it follows from (\ref{finfin3}) and (\ref{finfin4}) that ${\bf L}_{d}$ satisfies (\ref{finfin2}) and thus (C1).

By construction in Algorithm \ref{algo}, ${\bf L}_{d}$ violates (P1) (as defined by (\ref{coarse1}) and (\ref{coarse2})) for all $S_{c}\subseteq{\cal S}(d)$.  Therefore ${\bf L}_{d}$ violates (P1) for all $S_{c}\subseteq{\cal S}(d)$, since ${\bf f}^{\prime}(s,d)={\bf f}(s,d)$ for all $s\in{\cal S}(d)$.  For every $s\in{\cal S}(d)$, it follows that:
	\begin{align}
		\label{finfin6}
		\text{\sc layer}(d(s))&\leq\sum_{i\in{\cal S}(d)\setminus\{s\}}k^{\prime}_{i,d(i)},
	\end{align} 
since $L_{0}\neq\{\}$.  Equality in (\ref{finfin6}) only occurs if ${\bf v}(s)={\bf f}^{\prime}(s,d)$ for all $s\in{\cal S}(d)$ and the flows in $\{{\bf v}(i):s\in{\cal S}(d)\setminus\{s\}\}$ are sequentially ``stacked'' in ${\bf L}_{d}$ (as depicted in Figure \ref{stacked}(ii).  For all $s\in{\cal S}(d)$, ${\bf F}^{\prime}$ by construction satisfies:
\begin{align}
	\label{finfin7}
	k^{\prime}_{d(s),d}&:=\left[\sum_{i\in{\cal S}(d)\setminus\{s\}}k^{\prime}_{i,d(i)}\right]+1,
\end{align}
where (\ref{finfin7}) follows from (\ref{finfin1}) and the fact that $k^{\prime}_{s,d(s)}:=k_{s,d(s)}$.  It follows from (\ref{finfin6}) and (\ref{finfin7}) that:
\begin{align}
	\label{finfin8}
	\text{\sc layer}(d(s))<k^{\prime}_{d(s),d}.
\end{align}  
Comparing (\ref{finfin5}) and (\ref{finfin8}) proves that ${\bf L}_{d}$ satisfies (C1).  For any ${\bf R}\in{\cal R}_{d}({\bf F})$ where ${\bf F}$ satisfies (C2) it follows that ${\bf R}\in{\cal R}({\bf F}^{\prime},{\bf L}_{d})$ for some equivalent ${\bf F}^{\prime}$ and layering ${\bf L}_{d}$ that satisfies (C1), which completes the proof of Theorem \ref{theoremone}.
\section{Conclusion}
\label{conclusion}
Two sets of constraints define an outer-bound on the DF region: the mutual-information constraints derived from the one-relay channel, and causality constraints that ensure that nodes only forward messages they have already decoded.  We introduced the framework of flow decomposition to prove these constraints are also sufficient.  

In general networks there are intractably many ways of routing and decoding information.  Nevertheless, these individual schemes have an underlying  order; they contribute to a larger region whose high-dimensional shape always projects back to the one-dimensional rate of the one-relay channel.  A similar result also applies to the compress-forward setting using a simplified version of the proof presented here.

Even in channels of fixed size, there are infinitely many flows because there are infinitely many encoding delays.  There are also infinitely many layerings.  The causality constraints indicate whether particulars flow and layerings are causal, but say nothing as to whether such flows and layerings exist.  In separate work, we prove that only the mutual-information constraints are active in channels with hierarchical flow, which implies the achievable region has minimal complexity.  In all-cast channels, the achievable region is computable, but at higher complexity.

The DF schemes we consider belong to the index-coding family, which requires independent input distributions and either hierarchical or all-cast flow (effectively).  Correlated inputs support multi-cast flows, but the region itself while still a derivative of the one-relay DF rate, depends on auxiliary random variables.  The capacity of the one-relay channel remains unresolved.

\bibliographystyle{IEEEtran}
\bibliography{flowdecomp}
\appendices
\section{The Diamond Relay Channel}
\label{Example}
  \begin{figure*}[t]
        \center{\includegraphics[width=\textwidth]
        {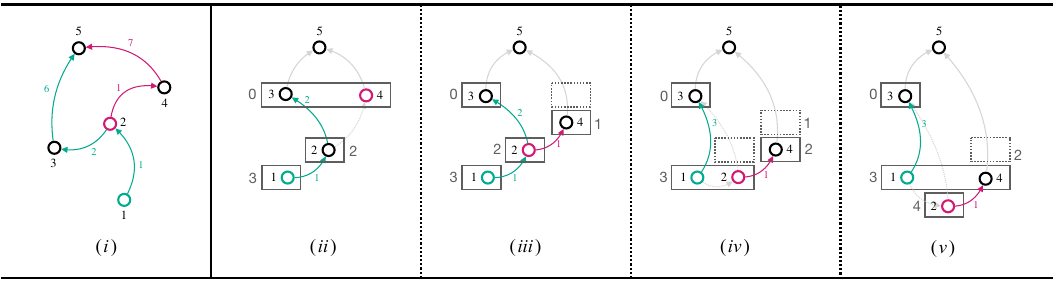}}
       \caption{(i) ${\bf F}=\{{\bf f}(1),{\bf f}(2)\}$ where ${\bf f}(1)=1\xrightarrow{1}2\xrightarrow{2}3\xrightarrow{6}5$ and ${\bf f}(2)=2\xrightarrow{1}4\xrightarrow{7}5$ (ii) ${\bf L}_{5,1}=(\{3,4\},\{\},\{2\},\{1\})$ with virtual flows ${\bf v}(1)=1\xrightarrow{1}2\xrightarrow{2}3$ and ${\bf v}(2)=4$ (iii) ${\bf L}_{5,2}=(\{3\},\{4\},\{2\},\{1\})$ with virtual flows ${\bf v}(1)=1\xrightarrow{1}2\xrightarrow{2}3$ and ${\bf v}(2)=2\xrightarrow{1}4$ (iv) ${\bf L}_{5,3}=(\{3\},\{\},\{4\},\{1,2\})$ with virtual flows ${\bf v}(1)=1\xrightarrow{3}3$ and ${\bf v}(2)=2\xrightarrow{1}4$ (v) ${\bf L}_{5,4}=(\{3\},\{\},\{\},\{1,4\},\{2\})$ with virtual flows ${\bf v}(1)=1\xrightarrow{3}3$ and ${\bf v}(2)=2\xrightarrow{1}4$.}   
	\label{figure4}
        \label{figure4}
 \end{figure*}
Given ${\cal N}=\{1,2,3,4,5\}$, let ${\bf F}=\{{\bf f}(1),{\bf f}(2)\}$ where ${\bf f}(1)=1\xrightarrow{1}2\xrightarrow{2}3\xrightarrow{6}5$ and ${\bf f}(2)=2\xrightarrow{1}4\xrightarrow{7}5$, both depicted in Figure \ref{figure4}(i).  In block $b$, node 1 encodes the pair $(m_{1}(b),``1")$, node 2 encodes $(m_{1}(b-1),m_{2}(b))$, node 3 encodes $(m_{1}(b-3),``1")$, and node 4 encodes $(``1",m_{2}(b-1))$, where $m_{1}(b)\in\{1,\ldots,2^{nR_{1}}\}$ and $m_{2}(b)\in\{1,\ldots,2^{nR_{2}}\}$ for all  $b=1,\ldots,B$.  

To encode the message vector $(m_{1},m_{2})$ where $m_{1}\in\{1,\ldots,2^{nR_{1}}\}$ and $m_{2}\in\{1,\ldots,2^{nR_{2}}\}$, node $i$ sends the codeword ${\bf x}_{i}(w)$ where $w\in\{1,\ldots,2^{n(R_{1}+R_{2})}\}$ is the unique index that maps to the message pair $(m_{1},m_{2})$ and $i\in\{1,2,3,4\}$.  To simplify the exposition, we express ${\bf x}_{i}(w)$ as a function of the message vector itself, so ${\bf x}_{2}(m_{1},m_{2}):={\bf x}_{2}(w)$.  Since $m_{2}=1$ at node 1 and node 3, and $m_{1}=1$ at node 4, we define ${\bf x}_{1}(m_{1}):={\bf x}_{1}(w)$, ${\bf x}_{3}(m_{1}):={\bf x}_{3}(w)$, and ${\bf x}_{4}(m_{2}):={\bf x}_{4}(w)$.    

By definition, ${\cal R}_{5}({\bf F})$ is set of rate vectors ${\bf R}=(R_{1},R_{2})$ that satisfies:
 \begin{align}
 	\label{outerR1}
 	R_{1}&<I(X_{1}X_{2}X_{3};Y_{5}|X_{4})\\
 	\label{outerR2}
 	R_{2}&<I(X_{2}X_{4};Y_{5}|X_{1}X_{3})\\
 	\label{outerR1R2}
 	R_{1}+R_{2}&<I(X_{1}X_{2}X_{3}X_{4};Y_{5})
 \end{align}
  
 Four different decoding schemes collectively achieve all the rate vectors in ${\cal R}_{5}({\bf F})$.
 
\subsection{The first decoding scheme}

Set ${\bf L}_{5,1}=(\{3,4\},\{\},\{2\},\{1\})$.  It follows that $v(1)=1$ since $\text{\sc layer}(1)-\text{\sc layer}(2)=1\leq k_{1,2}$ and $\text{\sc layer}(1)-\text{\sc layer}(3)=3\leq k_{1,3}$ which satisfy (\ref{firstIndexCoding}).  Similarly, $v(2)=4$ since $\text{\sc layer}(2)-\text{\sc layer}(4)=2>1$ which satisfies (\ref{secondIndexCoding}).  Then ${\bf v}(1)=1\xrightarrow{1}2\xrightarrow{2}3$ and ${\bf v}(2)=4$.  The virtual flows are depicted in Figure \ref{figure4}(ii).  

In block $b$, node $5$ decodes $(m_{1}(b-3),m_{2}(b-1))$ by finding the unique pair $\hat{m}_{1}\in\{1,\ldots,2^{nR_{1}}\}$ and $\hat{m}_{2}\in\{1,\ldots,2^{nR_{2}}\}$ that jointly satisfy the following typicality checks:
	\begin{align}
		\nonumber
		&({\bf x}_{3}(\hat{m}_{1}),{\bf x}_{4}(\hat{m}_{2}),{\bf Y}_{5}(b))\in T^{(n)}_{\epsilon}(X_{3},X_{4},Y_{5})\\
		\nonumber
		&({\bf x}_{2}(\hat{m}_{1},m_{2}(b-2)),{\bf X}_{3}(b-2),{\bf X}_{4}(b-2),{\bf Y}_{5}(b-2))\\
		\label{typetwo}
		&\hspace{40mm}\in T^{(n)}_{\epsilon}(X_{2},X_{3},X_{4},Y_{5})\\
		\nonumber
		&({\bf x}_{1}(\hat{m}_{1}),{\bf X}_{2}(b-3),{\bf X}_{3}(b-3),{\bf X}_{4}(b-3),{\bf Y}_{5}(b-3))\hspace{4mm}\\
		\nonumber
		&\hspace{40mm}\in T^{(n)}_{\epsilon}(X_{2},X_{3},X_{4},Y_{5})
	\end{align} 
	In each block, node 2 sends its own source message and a source message from node 1.  The message $m_{2}(b-2)$ has already been decoded by node $5$ in block $b$, so $\hat{m}_{1}$ uniquely determines ${\bf x}_{2}(\hat{m}_{1},m_{2}(b-2))$ in (\ref{typetwo}).  The probability of error goes to zero if ${\bf R}$ satisfies the following conditions:
	\begin{align}
		\nonumber
		R_{1}&<I(X_{3};Y_{5}|X_{4})+I(X_{2};Y_{5}|X_{3}X_{4})\\
		\nonumber
		&\hspace{40mm}+I(X_{1};Y_{5}|X_{2}X_{3}X_{4})\\
		\label{scheme1r1}
		&=I(X_{1}X_{2}X_{3};Y_{5}|X_{4})\\
		\label{scheme1r2}
		R_{2}&<I(X_{4};Y_{5}|X_{3})\\
		\nonumber
		R_{1}+R_{2}&<I(X_{3}X_{4};Y_{5})+I(X_{2};Y_{5}|X_{3}X_{4})\\
		\nonumber
		&\hspace{35mm}+I(X_{1};Y_{5}|X_{2}X_{3}X_{4})\\
		\label{scheme1r1r2}
		&=I(X_{1}X_{2}X_{3}X_{4};Y_{5})
	\end{align}
	Each of the three inequalities above addresses an error event.  The probability that $m_{2}$ is decoded correctly and $m_{1}$ is not, goes to zero if (\ref{scheme1r1}) is satisfied.  Similarly, the probability that $m_{1}$ is decoded correctly and $m_{2}$ is not, goes to zero if (\ref{scheme1r2}) is satisfied.  Finally, the probability that both $m_{1}$ and $m_{2}$ are not decoded correctly goes to zero if (\ref{scheme1r1r2}) is satisfied.  Let $({\bf F},{\bf L}^{\prime}_{5,1}):=\text{\sc shift}(({\bf F},{\bf L}_{5,1}),\{2\})$.
	
	\subsection{The second decoding scheme}
	Set ${\bf L}_{5,2}=(\{3\},\{4\},\{2\},\{1\})$.  It follows that $v(1)=1$ since $\text{\sc layer}(1)-\text{\sc layer}(2)=1\leq k_{1,2}$ and $\text{\sc layer}(1)-\text{\sc layer}(3)=3\leq k_{1,3}$ which satisfy (\ref{firstIndexCoding}).  Similarly, $v(2)=2$ since $\text{\sc layer}(2)-\text{\sc layer}(4)\leq k_{2,4}=1$ which satisfies (\ref{firstIndexCoding}).  Then ${\bf v}(1)=1\xrightarrow{1}2\xrightarrow{2}3$ and ${\bf v}(2)=2\xrightarrow{1}4$.  The virtual flows are depicted in Figure \ref{figure4}(iii).  Observe that $({\bf F},{\bf L}^{\prime}_{5,1})=({\bf F},{\bf L}_{5,2})$.   
	
	In block $b$, node $5$ decodes $(m_{1}(b-3),m_{2}(b-2))$ by finding the unique pair $\hat{m}_{1}\in\{1,\ldots,2^{nR_{1}}\}$ and $\hat{m}_{2}\in\{1,\ldots,2^{nR_{2}}\}$ that jointly satisfy the following typicality checks:  
  \begin{align}
		\nonumber
		&({\bf x}_{3}(\hat{m}_{1}),{\bf Y}_{5}(b))\in T^{(n)}_{\epsilon}(X_{3},Y_{5})\\
		\nonumber
		&({\bf x}_{4}(\hat{m}_{2}),{\bf X}_{3}(b-1),{\bf Y}_{5}(b-1))\in T^{(n)}_{\epsilon}(X_{3},X_{4},Y_{5})\\
		\nonumber
		&({\bf x}_{2}(\hat{m}_{1},\hat{m}_{2}),{\bf X}_{3}(b-2),{\bf X}_{4}(b-2),{\bf Y}_{5}(b-2))\\
		\nonumber
		&\hspace{53mm}\in T^{(n)}_{\epsilon}(X_{2},X_{3},X_{4},Y_{5})\\
		\nonumber
		&({\bf x}_{1}(\hat{m}_{1}),{\bf X}_{2}(b-3),{\bf X}_{3}(b-3),{\bf X}_{4}(b-3),{\bf Y}_{5}(b-3))\hspace{4mm}\\
		\nonumber
		&\hspace{53mm}\in T^{(n)}_{\epsilon}(X_{2},X_{3},X_{4},Y_{5})
	\end{align} 
  The probability of error goes to zero if ${\bf R}$ satisfies the following conditions:
	\begin{align}
		\nonumber
		R_{1}&<I(X_{3};Y_{5})+I(X_{2};Y_{5}|X_{3}X_{4})\\
		\nonumber
		&\hspace{40mm}+I(X_{1};Y_{5}|X_{2}X_{3}X_{4})\\
		\label{scheme2r1}
		&=I(X_{3};Y_{5})+I(X_{1}X_{2};Y_{5}|X_{3}X_{4})\\
		\label{scheme2r2}
		R_{2}&<I(X_{4};Y_{5}|X_{3})+I(X_{2};Y_{5}|X_{3}X_{4})\\
		\nonumber
		R_{1}+R_{2}&<I(X_{3};Y_{5})+I(X_{4};Y_{5}|X_{3})\\
		\nonumber
		&\hspace{5mm}+I(X_{2};Y_{5}|X_{3}X_{4})+I(X_{1};Y_{5}|X_{2}X_{3}X_{4})\\
		\label{scheme2r1r2}
		&=I(X_{1}X_{2}X_{3}X_{4};Y_{5})
	\end{align}
	Let $({\bf F},{\bf L}^{\prime}_{5,2})=\text{\sc shift}({\bf F},{\bf L}_{5,2},\{2\})$.  

\subsection{The third decoding scheme}	
	Set ${\bf L}_{5,3}=(\{3\},\{\},\{4\},\{1,2\})$.  It follows that $v(1)=1$ since $\text{\sc layer}(1)-\text{\sc layer}(2)=0\leq k_{1,2}$ and $\text{\sc layer}(1)-\text{\sc layer}(3)=3=k_{1,3}$ which satisfy (\ref{firstIndexCoding}).  Similarly, $v(2)=2$ since $\text{\sc layer}(2)-\text{\sc layer}(4)=1=k_{2,4}$ which satisfies (\ref{firstIndexCoding}).  Then ${\bf v}(1)=1\xrightarrow{3}3$ and ${\bf v}(2)=2\xrightarrow{1}4$.  The virtual flows are depicted in Figure \ref{figure4}(iv).  Observe that $({\bf F},{\bf L}^{\prime}_{5,2})=({\bf F},{\bf L}_{5,3})$.
	
	In block $b$, node $5$ decodes $(m_{1}(b-3),m_{2}(b-3))$ by finding the unique pair $\hat{m}_{1}\in\{1,\ldots,2^{nR_{1}}\}$ and $\hat{m}_{2}\in\{1,\ldots,2^{nR_{2}}\}$ that jointly satisfy the following typicality checks: 
	 \begin{align}
		\nonumber
		&({\bf x}_{3}(\hat{m}_{1}),{\bf Y}_{5}(b))\in T^{(n)}_{\epsilon}(X_{3},Y_{5})\\
		\nonumber
		&({\bf x}_{4}(\hat{m}_{2}),{\bf X}_{3}(b-2),{\bf Y}_{5}(b-2))\in T^{(n)}_{\epsilon}(X_{3},X_{4},Y_{5})\\
		\nonumber
		&({\bf x}_{1}(\hat{m}_{1}),{\bf x}_{2}(m_{1}(b-4),\hat{m}_{2}),{\bf X}_{3}(b-3),{\bf X}_{4}(b-3),\\
		\label{scheme3type3}
		&\hspace{15mm}{\bf Y}_{d}(b-3))\in T^{(n)}_{\epsilon}(X_{1},X_{2},X_{3},X_{4},Y_{5})
	\end{align} 
	The message $m_{1}(b-4)$ has already been decoded by node 5 in block $b$, so $\hat{m}_{2}$ uniquely determines ${\bf x}_{2}(m_{1}(b-4),\hat{m}_{2})$ in (\ref{scheme3type3}).  The probability of error goes to zero if ${\bf R}$ satisfies the following conditions:
	\begin{align}
		\label{scheme3r1}
		R_{1}&<I(X_{3};Y_{5})+I(X_{1};Y_{5}|X_{2}X_{3}X_{4})\\
		\label{scheme3r2}
		R_{2}&<I(X_{4};Y_{5}|X_{3})+I(X_{2};Y_{5}|X_{1}X_{3}X_{4})\\
		\nonumber
		R_{1}+R_{2}&<I(X_{3};Y_{5})+I(X_{4};Y_{5}|X_{3})\\
		\nonumber
		&\hspace{25mm}+I(X_{1}X_{2};Y_{5}|X_{3}X_{4})\\
		\label{scheme3r1r2}
		&=I(X_{1}X_{2}X_{3}X_{4};Y_{5})
	\end{align}
	Let $({\bf F},{\bf L}^{\prime}_{5,3})=\text{\sc shift}({\bf F},{\bf L}_{5,3}),\{2\})$.

\subsection{The fourth decoding scheme}
Set ${\bf L}_{5,4}=(\{3\},\{\},\{\},\{1,4\},\{2\})$.  It follows that $v(1)=1$ since $\text{\sc layer}(1)-\text{\sc layer}(2)=0\leq k_{1,2}$ and $\text{\sc layer}(1)-\text{\sc layer}(3)=3=k_{1,3}$ which satisfy (\ref{firstIndexCoding}).  Similarly, $v(2)=2$ since $\text{\sc layer}(2)-\text{\sc layer}(4)=1=k_{2,4}$ which satisfies (\ref{firstIndexCoding}).  Then ${\bf v}(1)=1\xrightarrow{3}3$ and ${\bf v}(2)=2\xrightarrow{1}4$.  The virtual flows are depicted in Figure \ref{figure4}(v).  Observe that $({\bf F},{\bf L}^{\prime}_{5,3})=({\bf F},{\bf L}_{5,4})$.  

In block $b$, node 5 decodes $(m_{1}(b-3),m_{2}(b-4))$ by finding the unique pair $\hat{m}_{1}\in\{1,\ldots,2^{nR_{1}}\}$ and $\hat{m}_{2}\in\{1,\ldots,2^{nR_{2}}\}$ that jointly satisfy the following typicality checks: 
	\begin{align}
		\nonumber
		&({\bf x}_{3}(\hat{m}_{1}),{\bf Y}_{5}(b))\in T^{(n)}_{\epsilon}(X_{3},Y_{5})\\
		\nonumber
		&({\bf x}_{1}(\hat{m}_{1}),{\bf x}_{4}(\hat{m}_{2}),{\bf X}_{3}(b-3),{\bf Y}_{5}(b-3))\\
		\nonumber
		&\hspace{50mm}\in T^{(n)}_{\epsilon}(X_{1},X_{3},X_{4},Y_{5})\\
		\nonumber
		&({\bf x}_{2}(m_{1}(b-5),\hat{m}_{2}),{\bf X}_{1}(b-4),{\bf X}_{3}(b-4),{\bf X}_{4}(b-4),\\
		\nonumber
		&\hspace{28mm}{\bf Y}_{5}(b-4))\in T^{(n)}_{\epsilon}(X_{1},X_{2},X_{3},X_{4},Y_{5})
	\end{align} 
	The message $m_{1}(b-5)$ has already been decoded by node 5 in block $b$, so $\hat{m}_{2}$ uniquely determines ${\bf x}_{2}(m_{1}(b-5),\hat{m}_{2})$ in (\ref{scheme3type3}).  The probability of error goes to zero if ${\bf R}$ satisfies the following conditions:
	\begin{align}
		\label{scheme4r1}
		R_{1}&<I(X_{3};Y_{5})+I(X_{1};Y_{5}|X_{3}X_{4})\\
		\nonumber
		R_{2}&<I(X_{4};Y_{5}|X_{1}X_{3})+I(X_{2};Y_{5}|X_{1}X_{3}X_{4})\\
		\label{scheme4r2}
		&=I(X_{2}X_{4};Y_{5}|X_{1}X_{3})\\
		\nonumber
		R_{1}+R_{2}&<I(X_{3};Y_{5})+I(X_{1}X_{4};Y_{5}|X_{3})\\
		\nonumber
		&\hspace{25mm}+I(X_{2};Y_{5}|X_{1}X_{3}X_{4})\\
		\label{scheme4r1r2}
		&=I(X_{1}X_{2}X_{3}X_{4};Y_{5})
	\end{align}	
	\subsection{The achievability of ${\cal R}_{5}({\bf F})$}
	To prove that ${\cal R}_{5}({\bf F})$ is achievable, it suffices to show that any rate vector in the region defined by (\ref{outerR1})-(\ref{outerR1R2}) is in the region defined by (\ref{scheme1r1})-(\ref{scheme1r1r2}) or (\ref{scheme2r1})-(\ref{scheme2r1r2}) or (\ref{scheme3r1})-(\ref{scheme3r1r2}) or (\ref{scheme4r1})-(\ref{scheme4r1r2}).  Suppose ${\bf R}$ is not in (\ref{scheme1r1})-(\ref{scheme1r1r2}).  Since (\ref{scheme1r1}) and (\ref{scheme1r1r2}) define the boundaries of ${\cal R}_{5}({\bf F})$, ${\bf R}$ must violate (\ref{scheme1r2}).  Then  (\ref{scheme1r1r2}) implies that ${\bf R}$ satisfies (\ref{scheme2r1}).  If ${\bf R}$ satisfies (\ref{scheme2r2}) the proof is finished since (\ref{scheme2r1r2}) is a boundary of ${\cal R}_{5}({\bf F})$.  If ${\bf R}$ violates (\ref{scheme2r2}) then (\ref{scheme2r1r2}) implies that ${\bf R}$ satisfies (\ref{scheme3r1}).  If ${\bf R}$ satisfies (\ref{scheme3r2}) the proof is finished since (\ref{scheme3r1r2}) is a boundary of ${\cal R}_{5}({\bf F})$.  Otherwise (\ref{scheme3r1r2}) implies that ${\bf R}$ satisfies (\ref{scheme4r1}).  Here the proof is finished  since (\ref{scheme4r2}) and (\ref{scheme4r1r2}) are boundaries of ${\cal R}_{5}({\bf F})$.

\section{The Two-Source Multiple-Access Relay Channel}
\label{appendix2}
Given ${\cal N}=\{1,2,3,4\}$, let ${\bf F}=\{{\bf f}(1),{\bf f}(2)\}$, where ${\bf f}(1)=1\xrightarrow{1}3\xrightarrow{1}4$ and ${\bf f}(2)=2\xrightarrow{1}3\xrightarrow{1}4$ as depicted in Figure \ref{figure6}(i).  In block $b$, node 1 encodes the pair $(m_{1}(b),``1")$, node 2 encodes $(``1",m_{2}(b))$, and node 3 encodes $(m_{1}(b-1),m_{2}(b-1))$.  

To encode the message vector $(m_{1},m_{2})$ where $m_{1}\in\{1,\ldots,2^{nR_{1}}\}$ and $m_{2}\in\{1,\ldots,2^{nR_{2}}\}$, node $i$ sends the codeword ${\bf x}_{i}(w)$ where $w\in\{1,\ldots,2^{n(R_{1}+R_{2})}\}$ is the unique index that maps to the message pair $(m_{1},m_{2})$ and $i\in\{1,2,3\}$.  To simplify the exposition, we express ${\bf x}_{i}(w)$ as a function of the message vector itself, so ${\bf x}_{3}(m_{1},m_{2}):={\bf x}_{3}(w)$.  Since $m_{2}=1$ at node 1, and $m_{1}=1$ at node 2, we define ${\bf x}_{1}(m_{1}):={\bf x}_{1}(w)$ and ${\bf x}_{2}(m_{2}):={\bf x}_{2}(w)$.

 Let ${\bf F}=\{{\bf f}(1),{\bf f}(2)\}$.  By definition, ${\cal R}_{4}({\bf F})$ is set of rate vectors ${\bf R}=(R_{1},R_{2})$ that satisfies:
 \begin{align}
 	\label{outerR1A2}
 	R_{1}&<I(X_{1}X_{3};Y_{4}|X_{2})\\
 	\label{outerR2A2}
 	R_{2}&<I(X_{2}X_{3};Y_{4}|X_{1})\\
 	\label{outerR1R2A2}
 	R_{1}+R_{2}&<I(X_{1}X_{2}X_{3};Y_{4})
 \end{align}
  
 Three different decoding schemes collectively achieve all the rate vectors in ${\cal R}_{4}({\bf F})$ as depicted in Figure \ref{polytope}(i).
  \begin{figure*}[t]
        \center{\includegraphics[width=\textwidth]
        {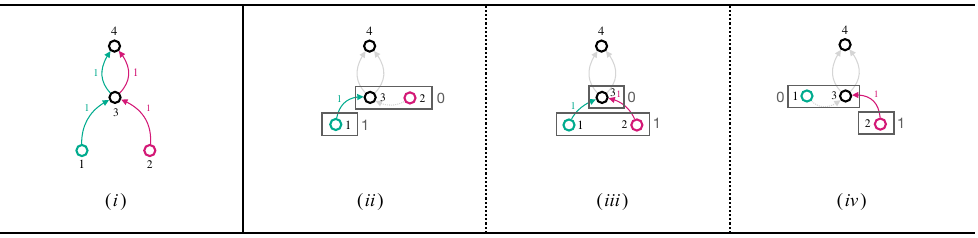}}
       \caption{(i) ${\bf F}=\{{\bf f}(1),{\bf f}(2)\}$ where ${\bf f}(1)=1\xrightarrow{1}3\xrightarrow{1}4$ and ${\bf f}(2)=2\xrightarrow{1}3\xrightarrow{1}4$ (ii) ${\bf L}_{5,1}=(\{2,3\},\{1\})$ with virtual flows ${\bf v}(1)=1\xrightarrow{1}3$ and ${\bf v}(2)=2$ (iii) ${\bf L}_{5,2}=(\{3\},\{1,2\})$ with virtual flows ${\bf v}(1)=1\xrightarrow{1}3$ and ${\bf v}(2)=2\xrightarrow{1}3$ (iv) ${\bf L}_{5,3}=(\{1,3\},\{2\})$ with virtual flows ${\bf v}(1)=1$ and ${\bf v}(2)=1\xrightarrow{1}3$.}   
	\label{figure6}
 \end{figure*}

\subsection{The first decoding scheme}

Set ${\bf L}_{4,1}=(\{2,3\},\{1\})$.  It follows that $v(1)=1$ since $\text{\sc layer}(1)-\text{\sc layer}(3)=1=k_{1,3}$ which satisfies (\ref{firstIndexCoding}).  Similarly, $v(2)=2$ since $\text{\sc layer}(2)-\text{\sc layer}(3)=0\leq k_{2,3}$ which satisfies (\ref{firstIndexCoding}).  Then ${\bf v}(1)=1\xrightarrow{1}3$ and ${\bf v}(2)=2\xrightarrow{1}3$ as depicted in Figure \ref{figure6}(ii).  

In block $b$, node 4 decodes $(m_{1}(b-1),m_{2}(b))$ by finding the unique pair $\hat{m}_{1}\in\{1,\ldots,2^{nR_{1}}\}$ and $\hat{m}_{2}\in\{1,\ldots,2^{nR_{2}}\}$ that jointly satisfies the following typicality checks:
	\begin{align}
		\nonumber
		&({\bf x}_{3}(\hat{m}_{1},m_{2}(b-1)),{\bf x}_{2}(\hat{m}_{2}),{\bf Y}_{4}(b))\in T^{(n)}_{\epsilon}(X_{2},X_{3},Y_{4})\\
		\nonumber
		&({\bf x}_{1}(\hat{m}_{1}),{\bf X}_{3}(b-1),{\bf X}_{2}(b-1),{\bf Y}_{4}(b-1))\\
		\label{marctypcheck}
		&\hspace{40mm}\in T^{(n)}_{\epsilon}(X_{1},X_{2},X_{3},Y_{4})
	\end{align} 
The message $m_{2}(b-1)$ has already been decoded by node 4 in block $b$, so $\hat{m}_{1}$ uniquely determines ${\bf x}_{3}(\hat{m}_{1},m_{2}(b-1))$ in (\ref{marctypcheck}).  The probability of error goes to zero if ${\bf R}$ satisfies the following conditions:
	\begin{align}
		\nonumber
		R_{1}&<I(X_{1};Y_{4}|X_{2}X_{3})+I(X_{3};Y_{2}|X_{1})\\
		\label{scheme1r1A2}
		&=I(X_{1}X_{3};Y_{2}|X_{2})\\
		\label{scheme1r2A2}
		R_{2}&<I(X_{1};Y_{4}|X_{2})\\
		\nonumber
		R_{1}+R_{2}&<I(X_{2}X_{3};Y_{4})+I(X_{1};Y_{4}|X_{2}X_{3})\\
		\label{scheme1r1r2A2}
		&=I(X_{1}X_{2}X_{3};Y_{4})
	\end{align}
	In Figure \ref{polytope}(i), ${\cal R}({\bf F},{\bf L}_{4,1})$ denotes the region of rate vectors that satisfies (\ref{scheme1r1A2})-(\ref{scheme1r1r2A2}).
	
	Each of the three inequalities above addresses an error event.  The probability that $\hat{m}_{2}$ is decoded correctly and $\hat{m}_{1}$ is not, goes to zero if (\ref{scheme1r1A2}) is satisfied.  Similarly, the probability that $\hat{m}_{1}$ is decoded correctly and $\hat{m}_{2}$ is not, goes to zero if (\ref{scheme1r2A2}) is satisfied.  Finally, the probability that both $\hat{m}_{1}$ and $\hat{m}_{2}$ are not decoded correctly goes to zero if (\ref{scheme1r1r2A2}) is satisfied.
	
	\subsection{The second decoding scheme}
	Set ${\bf L}_{4,2}=(\{3\},\{1,2\})$.  It follows that $v(1)=1$ since $\text{\sc layer}(1)-\text{\sc layer}(3)=1\leq k_{1,3}$ which satisfies (\ref{firstIndexCoding}).  Similarly, $v(2)=2$ since $\text{\sc layer}(2)-\text{\sc layer}(3)=1\leq k_{2,3}$ which satisfies (\ref{firstIndexCoding}).  Then ${\bf v}(1)=1\xrightarrow{1}3$ and ${\bf v}(2)=2\xrightarrow{1}3$ as depicted in Figure \ref{figure6}(iii).  Observe that $({\bf F},{\bf L}_{4,2})=\text{\sc shift}(({\bf F},{\bf L}_{4,1}),\{2\})$.   
	
	In block $b$, node 4 decodes $(m_{1}(b-1),m_{2}(b-1))$ by finding the unique pair $\hat{m}_{1}\in\{1,\ldots,2^{nR_{1}}\}$ and $\hat{m}_{2}\in\{1,\ldots,2^{nR_{2}}\}$ that jointly satisfies the following typicality checks:  
  \begin{align}
		\nonumber
		&({\bf x}_{3}(\hat{m}_{1},\hat{m}_{2}),{\bf Y}_{4}(b))\in T^{(n)}_{\epsilon}(X_{3},Y_{4})\\
		\nonumber
		&({\bf x}_{1}(\hat{m}_{1}),{\bf x}_{2}(\hat{m}_{2}),{\bf X}_{3}(b-1),{\bf Y}_{4}(b-1))\\
		\nonumber
		&\hspace{51mm}\in T^{(n)}_{\epsilon}(X_{1}X_{2}X_{3}Y_{4})
	\end{align} 
  The probability of error goes to zero if ${\bf R}$ satisfies the following conditions:
	\begin{align}
		\nonumber
		R_{1}&<I(X_{3};Y_{4})+I(X_{1};Y_{4}|X_{2}X_{3})\\
		\label{scheme2r1A2}
		&=I(X_{1}X_{3};Y_{4}|X_{2})\\
		\nonumber
		R_{2}&<I(X_{3};Y_{4})+I(X_{2};Y_{4}|X_{1}X_{3})\\
		\label{scheme2r2A2}
		&=I(X_{2}X_{3};Y_{4}|X_{1})\\
		\nonumber
		R_{1}+R_{2}&<I(X_{3};Y_{4})+I(X_{2};Y_{4}|X_{1}X_{3})\\
		\label{scheme2r1r2A2}
		&=I(X_{1}X_{2}X_{3};Y_{4})
	\end{align}
	In Figure \ref{polytope}(i), ${\cal R}({\bf F},{\bf L}_{4,2})$ denotes the region of rate vectors that satisfies (\ref{scheme2r1A2})-(\ref{scheme2r1r2A2}).
	
\subsection{The third decoding scheme}	
	Set ${\bf L}_{4,3}=(\{1,3\},\{2\})$.  It follows that $v(1)=1$ since $\text{\sc layer}(1)-\text{\sc layer}(2)=-1\leq k_{1,2}$ which satisfies (\ref{firstIndexCoding}).  Similarly, $v(2)=2$ since $\text{\sc layer}(2)-\text{\sc layer}(3)=1=k_{2,3}$ which satisfies (\ref{firstIndexCoding}).  Then ${\bf v}(1)=1$ and ${\bf v}(2)=2\xrightarrow{1}3$ as depicted in Figure \ref{figure6}(iv).  Observe that $({\bf F},{\bf L}_{4,2})=\text{\sc shift}({\bf F},{\bf L}_{4,3}),\{3\})$.  
	
	In block $b$, node 4 decodes $(m_{1}(b),m_{2}(b-1))$ by finding the unique pair $\hat{m}_{1}\in\{1,\ldots,2^{nR_{1}}\}$ and $\hat{m}_{2}\in\{1,\ldots,2^{nR_{2}}\}$ that jointly satisfies the following typicality checks: 
	 \begin{align}
		\nonumber
		&({\bf x}_{1}(\hat{m}_{1}),{\bf x}_{3}(m_{1}(b-1),\hat{m}_{2}),{\bf Y}_{4}(b))\\
		\label{scheme3type3A2}
		&\hspace{40mm} \in T^{(n)}_{\epsilon}(X_{1},X_{3},Y_{4})\\
		\nonumber
		&({\bf x}_{2}(\hat{m}_{2}),{\bf X}_{1}(b-1),{\bf X}_{3}(b-1),{\bf Y}_{4}(b-1))\\
		\nonumber
		&\hspace{40mm}\in T^{(n)}_{\epsilon}(X_{1},X_{2},X_{3},Y_{4})
	\end{align} 
	The message $m_{1}(b-1)$ has already been decoded by node 4 in block $b$, so $\hat{m}_{2}$ uniquely determines ${\bf x}_{3}(m_{1}(b-1),\hat{m}_{2})$ in (\ref{scheme3type3A2}).  The probability of error goes to zero if ${\bf R}$ satisfies the following conditions:
	
	\begin{align}
		\label{scheme3r1A2}
		R_{1}&<I(X_{1};Y_{4}|X_{3})\\
		\nonumber
		R_{2}&<I(X_{3};Y_{4}|X_{1})+I(X_{2};Y_{4}|X_{1}X_{3})\\
		\label{scheme3r2A2}
		&=I(X_{2}X_{3};Y_{4}|X_{1})\\
		\nonumber
		R_{1}+R_{2}&<I(X_{1}X_{3};Y_{4})+I(X_{2};Y_{4}|X_{1}X_{3})\\
		\label{scheme3r1r2A2}		
		&=I(X_{1}X_{2}X_{3};Y_{4})
	\end{align}
	In Figure \ref{polytope}(i), ${\cal R}({\bf F},{\bf L}_{4,3})$ denotes the region of rate vectors that satisfies (\ref{scheme3r1A2})-(\ref{scheme3r1r2A2}).
	
	\subsection{The achievability of ${\cal R}_{4}({\bf F})$}
	To show that ${\cal R}_{4}({\bf F})$ is achievable, it suffices to show that any rate vector in the region defined by (\ref{outerR1A2})-(\ref{outerR1R2A2}) is in the region defined by (\ref{scheme1r1A2})-(\ref{scheme1r1r2A2}) or (\ref{scheme2r1A2})-(\ref{scheme2r1r2A2}) or (\ref{scheme3r1A2})-(\ref{scheme3r1r2A2}).  Suppose ${\bf R}$ is not in (\ref{scheme1r1A2})-(\ref{scheme1r1r2A2}).  Since (\ref{scheme1r1A2}) and (\ref{scheme1r1r2A2}) define the boundaries of ${\cal R}_{4}({\bf F})$, ${\bf R}$ must violate (\ref{scheme1r2A2}).  Then (\ref{scheme1r1r2A2}) implies that ${\bf R}$ satisfies (\ref{scheme2r1A2}).  If ${\bf R}$ satisfies (\ref{scheme2r2A2}) the proof is finished since (\ref{scheme2r1r2A2}) is a boundary of ${\cal R}_{4}({\bf F})$.  If ${\bf R}$ violates (\ref{scheme2r2A2}) then (\ref{scheme2r1r2A2}) implies that ${\bf R}$ satisfies (\ref{scheme3r1A2}).  Now the proof is finished since (\ref{scheme3r2A2}) and (\ref{scheme3r1r2A2}) are boundaries of ${\cal R}_{4}({\bf F})$.
	
	Observe that ${\cal R}_{4}({\bf F})=\cup^{3}_{k=1}{\cal R}({\bf F},{\bf L}_{4,k})$ as depicted in Figure \ref{polytope}(i); the achievable regions of the individual coding schemes collectively achieve the DF outer-bound.

\section{The Three-Source Multiple-Access Relay Channel}
\label{appendix3}
  \begin{figure*}[t]
        \center{\includegraphics[width=\textwidth]
        {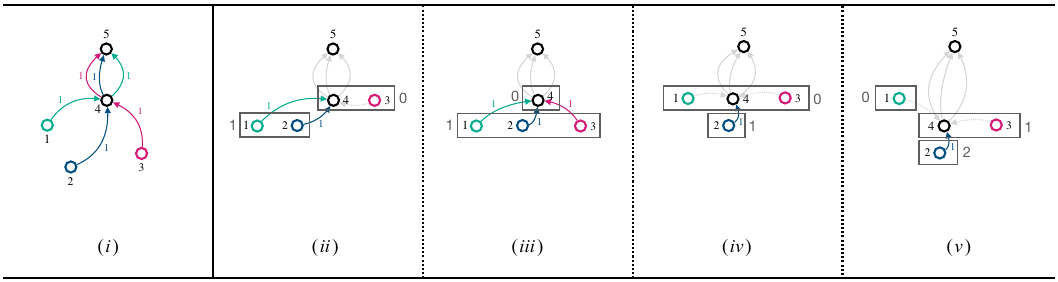}}
   \caption{(i) ${\bf F}=\{{\bf f}(1),{\bf f}(2),{\bf f}(3)\}$ where ${\bf f}(1)=1\xrightarrow{1}4\xrightarrow{1}5$, ${\bf f}(2)=2\xrightarrow{1}4\xrightarrow{1}5$ and ${\bf f}(3)=3\xrightarrow{1}4\xrightarrow{1}5$ (ii) ${\bf L}_{5,3}=(\{2,4\},\{1,3\})$ with virtual flows ${\bf v}(1)=1\xrightarrow{1}4$, ${\bf v}(2)=2\xrightarrow{1}4$, and ${\bf v}(3)=3$ (iii) ${\bf L}_{5,1}=(\{4\},\{1,2,3\})$ with virtual flows ${\bf v}(1)=1\xrightarrow{1}4$, ${\bf v}(2)=2\xrightarrow{1}4$, and ${\bf v}(3)=3\xrightarrow{1}4$ (iv) ${\bf L}_{5,11}=(\{2,3,4\},\{1\})$ with virtual flows ${\bf v}(1)=1$, ${\bf v}(2)=2\xrightarrow{1}4$, and ${\bf v}(3)=3$ (v) ${\bf L}_{5,7}=(\{2\},\{3,4\},\{1\})$ with virtual flows ${\bf v}(1)=1$, ${\bf v}(2)=2\xrightarrow{1}4$, and ${\bf v}(3)=3$.}   
	\label{figure7}
 \end{figure*}
Given ${\cal N}=\{1,2,3,4,5\}$, let ${\bf F}=\{{\bf f}(1),{\bf f}(2),{\bf f}(3)\}$ where ${\bf f}(1)=1\xrightarrow{1}4\xrightarrow{1}5$, ${\bf f}(2)=2\xrightarrow{1}4\xrightarrow{1}5$ and ${\bf f}(3)=3\xrightarrow{1}4\xrightarrow{1}5$  depicted in Figure \ref{figure7}(i).  In block $b$, node 1 encodes $(m_{1}(b),``1",``1")$, node 2 encodes $(``1",m_{2}(b),``1")$, node 3 encodes $(``1",``1",m_{3}(b))$ and node 4 encodes $(m_{1}(b-1),m_{2}(b-1),m_{3}(b-1))$.  

To encode the message vector $(m_{1},m_{2},m_{3})$ where $m_{1}\in\{1,\ldots,2^{nR_{1}}\}$, $m_{2}\in\{1,\ldots,2^{nR_{2}}\}$ and $m_{3}\in\{1,\ldots,2^{nR_{3}}\}$, node $i$ sends the codeword ${\bf x}_{i}(w)$ where $w\in\{1,\ldots,2^{n(R_{1}+R_{2}+R_{3})}\}$ is the unique index that maps to the message pair $(m_{1},m_{2},m_{3})$ and $i\in\{1,2,3,4\}$.  To simplify the exposition, we express ${\bf x}_{i}(w)$ as a function of the message vector itself, so ${\bf x}_{4}(m_{1},m_{2},m_{3}):={\bf x}_{4}(w)$.  Since $m_{2}=``1"$ and $m_{3}=``1"$ at node 1, $m_{1}=1$ and $m_{3}=``1"$ at node 2, and $m_{1}=``1"$ and $m_{2}=``1"$ at node 3, we define ${\bf x}_{1}(m_{1}):={\bf x}_{1}(w)$, ${\bf x}_{2}(m_{2}):={\bf x}_{2}(w)$, and ${\bf x}_{3}(m_{3})={\bf x}_{3}(w)$.
 
By definition, ${\cal R}_{5}({\bf F})$ is set of rate vectors ${\bf R}=(R_{1},R_{2},R_{3})$ that satisfies:
 \begin{align}
 	\label{outerR1A3}
 	R_{1}&<I(X_{1}X_{4};Y_{5}|X_{2}X_{3})\\
 	\nonumber
 	R_{2}&<I(X_{2}X_{4};Y_{5}|X_{1}X_{3})\\
 	\nonumber
 	R_{3}&<I(X_{3}X_{4};Y_{5}|X_{1}X_{2})\\
 	\nonumber
 	R_{1}+R_{2}&<I(X_{1}X_{2}X_{4};Y_{5}|X_{3})\\
 	\label{outerR1R3A3}
 	R_{1}+R_{3}&<I(X_{1}X_{3}X_{4};Y_{5}|X_{2})\\
 	\nonumber
 	R_{2}+R_{3}&<I(X_{2}X_{3}X_{4};Y_{5}|X_{1})\\
 	\nonumber
 	R_{1}+R_{2}+R_{3}&<I(X_{1}X_{2}X_{3}X_{4};Y_{5})
 \end{align}

The layerings in Figure \ref{polytope}(ii) reduce to four groups, each equivalent up to permutations of nodes 1, 2, and 3.  One group is $\{{\bf L}_{5,2}, {\bf L}_{5,3}, {\bf L}_{5,4}\}$, a second group is $\{{\bf L}_{5,1}\}$, a third group is $\{{\bf L}_{5,11},{\bf L}_{5,12},{\bf L}_{5,13}\}$, and a fourth group is $\{{\bf L}_{5,5},{\bf L}_{5,6},{\bf L}_{5,7},{\bf L}_{5,8},{\bf L}_{5,9},{\bf L}_{5,10}\}$.  It suffices to consider one partition from each group: ${\bf L}_{5,3}:=(\{2,4\},\{1,3\})$, ${\bf L}_{5,1}:=(\{4\},\{1,2,3\}$, ${\bf L}_{5,11}:=(\{2,3,4\},\{1\})$, and ${\bf L}_{5,7}:=(\{2\},\{3,4\},\{1\})$, depicted in Figures \ref{figure7}(ii), (iii), (iv), and (v) respectively.   
  
\subsection{The first group of decoding schemes}

For ${\bf L}_{5,3}=(\{4,2\},\{1,3\})$ the virtual flows are  ${\bf v}(1)=1\xrightarrow{1}4$,  ${\bf v}(2)=2$, and ${\bf v}(3)=3\xrightarrow{1}4$ as depicted in Figure \ref{figure7}(ii).  In block $b$, node 5 decodes $(m_{1}(b-1),m_{2}(b),m_{3}(b-1))$ by finding the unique triple $\hat{m}_{1}\in\{1,\ldots,2^{nR_{1}}\}$,  $\hat{m}_{2}\in\{1,\ldots,2^{nR_{2}}\}$ and $\hat{m}_{3}\in\{1,\ldots,2^{nR_{1}}\}$ that jointly satisfies the following typicality checks:
 \begin{align}
		\nonumber
		&({\bf x}_{4}(\hat{m}_{1},m_{2}(b-1),\hat{m}_{3}),{\bf x}_{2}(\hat{m}_{2}),{\bf Y}_{5}(b))\\
		\label{scheme1typeA3}
		&\hspace{45mm}\in T^{(n)}_{\epsilon}(X_{2},X_{4},Y_{5})\\
		\nonumber
		&({\bf x}_{1}(\hat{m}_{1}),{\bf x}_{3}(\hat{m}_{3}),{\bf X}_{4}(b-1),{\bf X}_{2}(b-1),\\
		\nonumber
		&\hspace{20mm}\bar{Y}_{5}(b-1))\in T^{(n)}_{\epsilon}(X_{1},X_{2},X_{3},X_{4},Y_{5})
	\end{align} 
	The message $m_{2}(b-1)$ has already been decoded by node 5 in block $b$, so $\hat{m}_{1}$ and $\hat{m}_{3}$ uniquely define ${\bf x}_{4}(\hat{m}_{1},m_{2}(b-1),\hat{m}_{3})$ in (\ref{scheme1typeA3}).  The probability of error goes to zero if ${\bf R}$ satisfies the following conditions:  
\begin{align}
	\nonumber
	R_{1}&<I(X_{1};Y_{5}|X_{2}X_{3}X_{4})+I(X_{4};Y_{5}|X_{2})\\
	\nonumber
	R_{2}&<I(X_{2};Y_{5}|X_{4})\\
	\nonumber
	R_{3}&<I(X_{3};Y_{5}|X_{1}X_{2}X_{4})+I(X_{4};Y_{5}|X_{2})\\
	\nonumber
	R_{1}+R_{2}&<I(X_{1};Y_{5}|X_{2}X_{3}X_{4})+I(X_{2}X_{4};Y_{5})\\
	\nonumber
	R_{1}+R_{3}&<I(X_{1}X_{3};Y_{5}|X_{2}X_{4})+I(X_{4};Y_{5}|X_{2})\\
	\label{scheme1R1R3A3}
	&=I(X_{1}X_{3}X_{4};Y_{5}|X_{2})\\
	\nonumber
	R_{2}+R_{3}&<I(X_{3};Y_{5}|X_{1}X_{2}X_{4})+I(X_{2}X_{4};Y_{5})\\
	\nonumber
	R_{1}+R_{2}+R_{3}&<I(X_{1}X_{3};Y_{5}|X_{4}X_{2})+I(X_{2}X_{4};Y_{5})\\
	\nonumber
	&=I(X_{1}X_{2}X_{3}X_{4};Y_{5})
\end{align}
The rate region satisfying the previous inequalities is denoted by ${\cal R}({\bf F},{\bf L}_{5,3})$.  The regions ${\cal R}({\bf F},{\bf L}_{5,2})$, ${\cal R}({\bf F},{\bf L}_{5,3})$ and ${\cal R}({\bf F},{\bf L}_{5,4})$ are equivalent to the same polymatroid up to permutations of $X_{1}$, $X_{2}$, and $X_{3}$.

\subsection{The second group of decoding schemes}

For ${\bf L}_{5,5}=(\{4\},\{1,2,3\})$ the virtual flows are  ${\bf v}(1)=1\xrightarrow{1}4$,  ${\bf v}(2)=2\xrightarrow{1}4$, and ${\bf v}(3)=3\xrightarrow{1}4$ as depicted in Figure \ref{figure7}(iii).  In block $b$, node 5 decodes $(m_{1}(b-1), m_{2}(b-1), m_{3}(b-1))$ by finding the unique triple $\hat{m}_{1}\in\{1,\ldots,2^{nR_{1}}\}$,  $\hat{m}_{2}\in\{1,\ldots,2^{nR_{2}}\}$ and $\hat{m}_{3}\in\{1,\ldots,2^{nR_{3}}\}$ that jointly satisfies the following typicality checks:
 \begin{align}
		\nonumber
		&({\bf x}_{4}(\hat{m}_{1},\hat{m}_{2},\hat{m}_{3}),{\bf Y}_{5}(b))\in T^{(n)}_{\epsilon}(X_{4},Y_{5})\\
		\nonumber
		&({\bf x}_{1}(\hat{m}_{1}),{\bf x}_{2}(\hat{m}_{2}),{\bf x}_{3}(\hat{m}_{3}),{\bf X}_{4}(b-1),\\
		\nonumber
		&\hspace{20mm}{\bf Y}_{5}(b-1))\in T^{(n)}_{\epsilon}(X_{1},X_{2},X_{3},X_{4},Y_{5})
	\end{align} 
The probability of error goes to zero if ${\bf R}$ satisfies the following conditions:  
\begin{align}
	\nonumber
	R_{1}&<I(X_{1};Y_{5}|X_{2}X_{3}X_{4})+I(X_{4};Y_{5})\\
	\nonumber
	R_{2}&<I(X_{2};Y_{5}|X_{1}X_{3}X_{4})+I(X_{4};Y_{5})\\
	\nonumber
	R_{3}&<I(X_{3};Y_{5}|X_{1}X_{2}X_{4})+I(X_{4};Y_{5})\\
	\nonumber
	R_{1}+R_{2}&<I(X_{1}X_{2};Y_{5}|X_{2}X_{3}X_{4})+I(X_{4};Y_{5})\\
	\nonumber
	R_{1}+R_{3}&<I(X_{1}X_{3};Y_{5}|X_{2}X_{4})+I(X_{4};Y_{5})\\
	\nonumber
	R_{2}+R_{3}&<I(X_{2}X_{3};Y_{5}|X_{3}X_{4})+I(X_{4};Y_{5})\\
	\nonumber
	R_{1}+R_{2}+R_{3}&<I(X_{1}X_{2}X_{3};Y_{5}|X_{4})+I(X_{4};Y_{5})\\
	\nonumber
	&=I(X_{1}X_{2}X_{3}X_{4};Y_{5})
\end{align}
The rate region satisfying the previous inequalities is denoted by ${\cal R}_{5}({\bf F},{\bf L}_{5,1})$.  

\subsection{The third group of decoding schemes}

For ${\bf L}_{5,11}=(\{2,4,3\},\{1\})$ the virtual flows are  ${\bf v}(1)=1\xrightarrow{1}4$,  ${\bf v}(2)=2$, and ${\bf v}(3)=3$ as depicted in Figure \ref{figure7}(iv).  In block $b$, node 5 decodes $(m_{1}(b-1), m_{2}(b), m_{3}(b))$ by finding the unique triple $\hat{m}_{1}\in\{1,\ldots,2^{nR_{1}}\}$,  $\hat{m}_{2}\in\{1,\ldots,2^{nR_{2}}\}$ and $\hat{m}_{3}\in\{1,\ldots,2^{nR_{3}}\}$ that jointly satisfies the following typicality checks:
 \begin{align}
		\nonumber
		&({\bf x}_{2}(\hat{m}_{2}),{\bf x}_{4}(\hat{m}_{1},m_{2}(b-1),m_{3}(b-1)),{\bf x}_{3}(\hat{m}_{3})),{\bf Y}_{5}(b))\\
		\label{scheme3typeA1}
		&\hspace{35mm}\in T^{(n)}_{\epsilon}(X_{2},X_{3},X_{4},Y_{5})\\
		\nonumber
		&({\bf x}_{1}(\hat{m}_{1}),{\bf X}_{2}(b-1),{\bf X}_{4}(b-1),{\bf X}_{3}(b-1),\\
		\nonumber
		&\hspace{20mm}{\bf Y}_{5}(b-1))\in T^{(n)}_{\epsilon}(X_{1},X_{2},X_{3},X_{4},Y_{5})
	\end{align} 
The messages $m_{2}(b-1)$ and $m_{3}(b-1)$ have already been decoded by node 5 in block $b$, so $\hat{m}_{1}$ uniquely defines ${\bf x}_{4}(\hat{m}_{1},m_{2}(b-1),m_{3}(b-1))$ in (\ref{scheme3typeA1}).  The probability of error goes to zero if ${\bf R}$ satisfies the following conditions:  
\begin{align}
\nonumber
	R_{1}&<I(X_{1};Y_{5}|X_{2}X_{3}X_{4})+I(X_{4};Y_{5}|X_{2}X_{3})\\
	\label{scheme3R1A3}
	&=I(X_{1}X_{4};Y_{5}|X_{2}X_{3})\\
	\nonumber
	R_{2}&<I(X_{2};Y_{5}|X_{3}X_{4})\\
	\nonumber
	R_{3}&<I(X_{3};Y_{5}|X_{2}X_{4})\\
	\nonumber
	R_{1}+R_{2}&<I(X_{1};Y_{5}|X_{2}X_{3}X_{4})+I(X_{2}X_{4};Y_{5}|X_{3})\\
	\nonumber
	&=I(X_{1}X_{2}X_{4};Y_{5}|X_{3})\\
	\nonumber
	R_{1}+R_{3}&<I(X_{1};Y_{5}|X_{2}X_{3}X_{4})+I(X_{3}X_{4};Y_{5}|X_{2})\\
	\nonumber
	&=I(X_{1}X_{3}X_{4};Y_{5}|X_{2})\\
	\nonumber
	R_{2}+R_{3}&<I(X_{2}X_{3};Y_{5}|X_{4})\\
	\nonumber
	R_{1}+R_{2}+R_{3}&<I(X_{1};Y_{5}|X_{2}X_{3}X_{4})+I(X_{2}X_{3}X_{4};Y_{5})\\
	\nonumber
	&=I(X_{1}X_{2}X_{3}X_{4};Y_{5})
\end{align}
The rate region satisfying the previous inequalities is denoted by ${\cal R}({\bf F},{\bf L}_{5,11})$.

\subsection{The fourth group of decoding schemes}
For ${\bf L}_{5,7}=(\{2\},\{3,4\},\{1\})$ the virtual flows are  ${\bf v}(1)=1\xrightarrow{1}4$,  ${\bf v}(2)=2$, and ${\bf v}(3)=3$ as depicted in Figure \ref{figure7}(iv).  In block $b$, node 5 decodes $(m_{1}(b-2), m_{2}(b), m_{3}(b-1))$ by finding the unique triple $\hat{m}_{1}\in\{1,\ldots,2^{nR_{1}}\}$,  $\hat{m}_{2}\in\{1,\ldots,2^{nR_{2}}\}$ and $\hat{m}_{3}\in\{1,\ldots,2^{nR_{3}}\}$ that jointly satisfies the following typicality checks:
 \begin{align}
		\nonumber
		&({\bf x}_{2}(\hat{m}_{2}),{\bf Y}_{5}(b))\in T^{(n)}_{\epsilon}(X_{2},Y_{5})\\
		\nonumber
		&({\bf x}_{4}(\hat{m}_{1},m_{2}(b-1),m_{3}(b-1)),{\bf x}_{3}(\hat{m}_{3}),{\bf X}_{2}(b-1),\\
		\label{scheme4typeA2}
		&\hspace{25mm}{\bf Y}_{5}(b-1))\in T^{(n)}_{\epsilon}(X_{2},X_{3},X_{4},Y_{5})\\
		\nonumber
		&({\bf x}_{1}(m_{1}),{\bf X}_{2}(b-2),{\bf X}_{3}(b-2),{\bf X}_{4}(b-2),{\bf Y}_{5}(b-2))\\
		\nonumber
		&\hspace{30mm}\in T^{(n)}_{\epsilon}(X_{1},X_{2},X_{3},X_{4},Y_{5})
\end{align}
The messages $m_{2}(b-1)$ and $m_{3}(b-1)$ have already been decoded by node 5 in block $b$, so $\hat{m}_{1}$ uniquely determines ${\bf x}_{4}(\hat{m}_{1},m_{2}(b-1),m_{3}(b-1))$ in (\ref{scheme4typeA2}).  The probability of error goes to zero if ${\bf R}$ satisfies the following conditions:  
\begin{align}
\nonumber
	R_{1}&<I(X_{1};Y_{5}|X_{2}X_{3}X_{4})+I(X_{4};Y_{5}|X_{2}X_{3})\\
	\label{scheme4R1A3}
	&=I(X_{1}X_{4};Y_{5}|X_{2}X_{3})\\
	\nonumber
	R_{2}&<I(X_{2};Y_{5})\\
	\nonumber
	R_{3}&<I(X_{3};Y_{5}|X_{2}X_{4})\\
	\nonumber
	R_{1}+R_{2}&<I(X_{1};Y_{5}|X_{2}X_{3}X_{4})+I(X_{4};Y_{5}|X_{2}X_{3})\\
	\nonumber
	&\hspace{5mm}+I(X_{2};Y_{5})\\
	\nonumber
	&=I(X_{1}X_{4};Y_{5}|X_{2}X_{3})+I(X_{2};Y_{5})\\
	\nonumber
	R_{1}+R_{3}&<I(X_{1};Y_{5}|X_{2}X_{3}X_{4})+I(X_{3}X_{4};Y_{5}|X_{2})\\
	\label{scheme4R1R3A3}
	&=I(X_{1}X_{3}X_{4};Y_{5}|X_{2})\\
	\nonumber
	R_{2}+R_{3}&<I(X_{3};Y_{5}|X_{2}X_{4})+I(X_{2};Y_{5})\\
	\nonumber
	R_{1}+R_{2}+R_{3}&<I(X_{1};Y_{5}|X_{2}X_{3}X_{4})+I(X_{3}X_{4};Y_{5}|X_{2})\\
	\nonumber
	&\hspace{5mm}+I(X_{2};Y_{5})\\
	\nonumber
	&=I(X_{1}X_{2}X_{3}X_{4};Y_{5})
\end{align}
The rate region satisfying the previous inequalities is denoted by ${\cal R}({\bf F},{\bf L}_{5,7})$.

\subsection{The Achievability of ${\cal R}_{5}({\bf F})$}

It is difficult to show ${\cal R}_{5}({\bf F})=\cup^{13}_{k=1}{\cal R}({\bf F},{\bf L}_{5,k})$ using the same geometric arguments as Appendix B:(Section D) for the two-source multiple-relay channel.  An alternative proof appears in \cite{Sankar} that first finds an offset encoding scheme for each corner point in ${\cal R}_{5}({\bf F})$ and then invokes convexity and timesharing.  

The layerings $\{{\bf L}_{5,k}:1\leq k\leq 13\}$ are related through $\text{\sc shift}$ operation.  From Figures \ref{figure7}(ii) and (iii), ${\bf L}_{5,1}=\text{\sc shift}({\bf L}_{5,3},\{2\})$.  Similarly, from Figures \ref{figure7}(iii) and (iv), ${\bf L}_{5,1}=\text{\sc shift}({\bf L}_{5,11},\{2,3\})$.  From Figures \ref{figure7}(iv) and (v), ${\bf L}_{5,7}=\text{\sc shift}({\bf L}_{5,11},\{1,3\})$.  The geometric relationships between these rate regions are shown in Figure \ref{polytope}(ii).

\end{document}